\date{}
\title{
Robust Voting Rules on the Interval Domain }
\author {
    \Large Patrick Lederer\\
    \Large p.lederer@unsw.edu.au\\
    \Large UNSW Sydney
}
\newcounter{remarkcount}
\newcommand{\ssucc}{\mathrel{\vartriangleright}}
\newcommand{\ssucceq}{\mathrel{\trianglerighteq}}
    \def\multiset#1#2{\ensuremath{\left(\kern-.3em\left(\genfrac{}{}{0pt}{}{#1}{#2}\right)\kern-.3em\right)}}
	\theoremstyle{definition}
	\newtheorem{definition}{Definition}
	\theoremstyle{plain}
	\newtheorem{theorem}{Theorem}
	\newtheorem{lemma}{Lemma}
\renewcommand{\paragraph}{%
  \@startsection{paragraph}{4}%
  {\z@}{2.25ex \@plus 1ex \@minus .2ex}{-1em}%
  {\normalfont\normalsize\bfseries}%
}
\newcolumntype{L}[1]{>{\raggedright\let\newline\\\arraybackslash\hspace{0pt}}m{#1}}
\newcolumntype{C}[1]{>{\centering\let\newline\\\arraybackslash\hspace{0pt}}m{#1}}
\newcolumntype{R}[1]{>{\raggedleft\let\newline\\\arraybackslash\hspace{0pt}}m{#1}}
\newcommand{\dom}{\Lambda^*}
\begin{document}

\maketitle

\begin{abstract}
     In this paper, we study voting rules on the interval domain, where the alternatives are arranged according to an externally given strict total order and voters report intervals of this order to indicate the alternatives they support. For this setting, we introduce and characterize the class of position-threshold rules, which compute a collective position of the voters with respect to every alternative and choose the left-most alternative whose collective position exceeds its threshold value. Our characterization of these rules mainly relies on reinforcement, a well-known population consistency condition, and robustness, a new axiom that restricts how the outcome is allowed to change when a voter removes the left-most or right-most alternative from his interval. Moreover, we characterize a generalization of the median rule to the interval domain, which selects the median of the endpoints of the voters' intervals.
\end{abstract}

\textbf{Keywords:} Social choice theory, Interval domain, Robustness, Reinforcement

\textbf{JEL Classification Code:} D71

\section{Introduction}
A ubiquitous phenomenon in today's societies is collective decision-making: given the possibly conflicting preferences of multiple agents, a joint decision should be reached in a fair and principled way. 
Such processes of collective decision-making are formally studied in the field of social choice theory, where researchers analyze voting rules from a mathematical perspective. While this research has led to significant advances in the understanding of voting rules \citep[e.g.,][]{ASS11a,BCE+15a}, social choice theory is plagued by numerous far-reaching impossibility theorems, which demonstrate that no voting rule can satisfy several desirable properties at the same time \citep[e.g.,][]{Arro51a,Gibb73a,Satt75a,Moul88b}. 

One of the most successful escape routes to such impossibility theorems is to impose more structure on the voters' preferences. 
In particular, the notion of \emph{single-peaked preferences}, which goes back to \citet{Blac48a}, has turned out fruitful. The idea of single-peaked preferences is that there is a strict total order $\ssucc$ over the alternatives and that the voters' preferences are structurally aligned with this order. Specifically, the preference relation of every voter specifies an ideal alternative $p$ such that alternatives become less preferred when moving further away from $p$ with respect to $\ssucc$.
For single-peaked preferences, it is known that most impossibility theorems vanish and there is large consensus on which voting rules to use \citep[e.g.,][]{BoJo83a,Spru95a,Chin97a,EhSt08a,Weym11a}: the median rule and, more generally, the phantom median rules introduced by \citet{Moul80a} have superior axiomatic properties.
Intuitively, the median rule sorts the voters with respect to their top-ranked alternatives according to $\ssucc$ and then returns the favorite alternative of the median voter. Moreover, phantom median rules generalize this method by computing the median rule for the $n$ original voters and $n-1$ phantom voters who report a fixed single-peaked preference relation. 

The appeal of these phantom median rules lies in the fact that they are the only voting rules on the domain of single-peaked preferences that satisfy anonymity (i.e., all voters are treated equally), unanimity (i.e., an alternative is guaranteed to be chosen if it is the favorite alternative of all voters), and strategyproofness (i.e., voters cannot benefit by lying about their true preferences) \citep{Moul80a,Weym11a}. This combination of axioms is remarkable because no voting rule satisfies all three properties on the domain of all preference relations \citep{Gibb73a,Satt75a}. Moreover, phantom median rules satisfy numerous further desirable properties such as tops-onlyness (i.e., voters only need to report their favorite alternative), reinforcement (i.e., when combining two elections with the same winner, the winner remains the same), and participation (i.e., voters cannot benefit by abstaining) \citep[][]{Moul84c,JLPV24a}. 

Despite the success of phantom median rules on the domain of single-peaked preferences, rather little is known about voting rules when slightly modifying the setting. In particular, we are interested in the case that voters indicate the alternatives they like by reporting intervals, i.e., sets of consecutive alternatives with respect to~$\ssucc$. We refer to this problem as voting on the interval domain and we believe that this domain naturally arises when there are many similar alternatives that can be ordered with respect to an external order $\rhd$. For example, when voting about the start time of a group dinner, it seems plausible that the voters prefer to report a range of start times instead of a single minute-specific time. As another, more high-stakes example, in political elections with many candidates, voters may wish to, e.g., vote for all left-wing candidates instead of singling out one such candidate to reduce their effort for casting a vote. Motivated by such settings, we will suggest and axiomatically characterize a new class of voting rules for the interval domain. 

While we assume that voters report intervals, we have not yet discussed the interpretation of these intervals in depth. For instance, are voters indifferent between alternatives in their interval or do intervals contain all alternatives that exceed a voter-specific utility threshold? We believe that our results are largely independent from such questions: because we will focus on natural consistency conditions, our results appear reasonable as long as we can interpret a voter's interval as the set of alternatives he likes or supports. Moreover, from a technical side, we do not need any information other than the voters' intervals, i.e., our results are not bound to a particular model for inferring the voters' intervals. Nevertheless, to allow for an unambiguous interpretation of our axioms and to relate our work to the literature, we will assume that voters have weakly single-peaked preferences and that the voters' intervals correspond to their sets of favorite alternatives. 

Such weakly single-peaked preferences generalize single-peaked preferences by allowing for ties: alternatives must only be weakly less preferred when moving away from the voter's ideal alternative. 
Importantly, the voters' sets of favorite alternatives are known to form intervals for weakly single-peaked preferences \citep[e.g.,][]{ElLa15a,Pupp16a}, thus giving a direct connection to the interval domain. 
Moreover, weakly single-peaked preferences are well-suited to capture the voters' preferences in our previous examples and have been repeatedly studied before \citep[e.g.,][]{Moul84c,Berg98a,AuBa00a,BeMo09a}. Most of these papers focus on strategyproofness and aim to generalize Moulin's characterization of phantom median rules. However, an appealing closed-form characterization of strategyproof voting rules for weakly single-peaked preferences remains elusive as indifferences are challenging to handle with strategyproofness. While we will focus in this paper on axioms other than strategyproofness and employ a variable-electorate setting, our results will also give new insights for designing appealing strategyproof rules for the weakly single-peaked domain.

\paragraph{Contribution.} In this paper, we will study voting rules on the interval domain and introduce the class of position-threshold rules. Intuitively, these rules determine for every alternative a collective position, which quantifies the voters' relative position regarding this alternative, and choose the left-most alternative  whose collective position exceeds its threshold value. In more detail, position-threshold rules for $m$ alternatives are defined by a weight vector $\alpha\in [0,1]^m$ and a threshold vector $\theta\in(0,1)^m$. The weight vector $\alpha$ is used to quantify the relative position of the voters to the alternatives: a voter's relative position to an alternative $x_i$ is $0$ 
if all alternatives in his interval are right of $x_i$, $1$ if all alternatives in his interval are left of or equal to $x_i$, and $\alpha_i$ otherwise. Then, a position-threshold rule computes the collective position of an alternative as the average of the voters' individual positions to this alternative, and it returns the left-most alternative whose collective position exceeds its threshold value. While it may not be clear from this description, position-threshold rules generalize phantom median rules because phantom median rules can also be formulated via collective positions and threshold values. 

As our main contribution, we will provide an axiomatic characterization of position-threshold rules in this paper. Specifically, we will show that position-threshold rules are the only voting rules on the interval domain that satisfy five conditions called robustness, reinforcement, right-biased continuity, anonymity, and unanimity (\Cref{thm:characterization}).

The two central conditions for this characterization are robustness and reinforcement. Robustness formalizes that if a voter removes an alternative from his interval, the outcome can only change to reflect this modification or not at all. More precisely, robustness postulates that if a voter removes his left-most (resp. right-most) alternative from his interval, the outcome changes from the voter's old left-most (resp. right-most) alternative to his new left-most (resp. right-most) alternative or it remains unchanged. While this axiom is new, it is conceptually related to many prominent conditions such as localizedness \citep{Gibb77a}, Maskin-monotonicity \citep{Mask99a}, and uncompromisingness \citep{BoJo83a}. We believe that robustness is desirable as it ensures that the outcome smoothly changes along $\rhd$ when voters modify their intervals and that the outcome is independent of, e.g., intervals that do not contain the winner. Moreover, we prove that robustness implies strategyproofness when the voters' intervals correspond to their sets of favorite alternatives in weakly single-peaked preference relations. 

Our second main condition, reinforcement, demands that if an alternative is chosen in two disjoint elections, it should also be chosen in a combined election. Notably, this condition requires us to work with a variable-electorate setting, whereas most prior works on (weakly) single-peaked preferences focus on a fixed electorate \citep[e.g.,][]{Moul80a,Moul84a,Berg98a}. While this means that our results are logically incomparable to these prior works, the assumption of reinforcement allows for a clean characterization of position-threshold rules. Moreover, we note that the variable-electorate setting as well as variants of reinforcement lie at the heart of numerous influential works in social choice theory \citep[e.g.,][]{Smit73a,Youn75a,YoLe78a,Fish78d,Bran13a,LaSk21a,BrPe19a,Lede23a}. 

In addition to reinforcement and robustness, we use a third non-standard axiom called right-biased continuity, which is a variant of a continuity condition that is commonly studied for scoring rules \citep[e.g.,][]{Smit73a,Youn75a,Myer95b,LaSk21a,Lede23a}. Intuitively, continuity requires that the effect of a group of voters on the outcome can be marginalized by cloning the other voters sufficiently often. However, since this condition is not compatible with resolute voting rules, we weaken it: the effect of a group of voters $G$ on the outcome can only be fully marginalized by cloning the other voters if the outcome chosen for the remaining voters is right to the outcome chosen for $G$ with respect to $\rhd$. Otherwise, we can only ensure that the intervals of the remaining voters are respected in a weaker sense. Finally, anonymity and unanimity are two standard conditions, which respectively state that all voters should be treated equally and that we need to choose an alternative if all voters unanimously and exclusively report this alternative. 

Based on our characterization of position-threshold rules, we further aim to extend the median rule to the interval domain. To this end, we note that the median rule is the only phantom median rule that guarantees to select an alternative that is top-ranked by a strict majority of the voters whenever such an alternative exists. In the context of intervals, we call this condition the majority criterion and formalize it by requiring that an alternative is chosen if it is uniquely reported by more than half of the voters. We show that there is only a single position-threshold rule that satisfies the majority criterion and the natural condition of strong unanimity (if the intersection of the intervals of all voters is non-empty, an alternative in this intersection needs to be chosen): the endpoint-median rule (\Cref{thm:AWS1/2}). Intuitively, this rule replaces the interval of each voter with two singleton ballots corresponding to the left-most and right-most alternative in the interval, and then executes the median rule. In combination with \Cref{thm:characterization}, we derive that the endpoint-median rule is the only voting rule on the interval domain that satisfies anonymity, strong unanimity, the majority criterion, robustness, reinforcement, and right-biased continuity.

\paragraph{Related work.} Our paper is closely related to the study of strategyproof voting rules on the domain of single-peaked preference relations, which has garnered significant attention. In particular, Moulin's characterization of phantom median rules \citep{Moul80a} has inspired a large body of follow-up works, including extensions to multi-dimensional variants of single-peakedness \citep{BoJo83a,Zhou91b,BGS93a}, non-anonymous variants of phantom median rules \citep{Chin97a}, and randomized versions of this result \citep{EPS02a,PRSS14a,PyUn15a}. For a comprehensive overview of early studies on strategyproof voting rules for single-peaked preferences, we refer the reader to the surveys by \citet{Spru95a} and \citet{Weym11a}.
More recent research \citep[e.g.,][]{CSS13a,Reff15a,CSZ16a,ChMa18a,ChZe21a} focuses on strategyproof voting rules for somewhat technical extensions of the single-peaked domain, such as semi-single-peaked preferences. In addition to demonstrating the existence of non-dictatorial and strategyproof voting rules on these domains, these works show that, under suitable side conditions, the corresponding domains are necessary for the existence of such rules. 

In contrast to the aforementioned papers, we assume that the voters report intervals instead of a single favorites alternative or a rankings of the alternatives. To the best of our knowledge, comparable settings have only been studied by \citet{Moul84c}, \citet{Berg98a}, and \citet{BeMo09a}, who investigate voting rules for variants of weakly single-peaked preference relations.
In more detail, \citet{Moul84c} characterizes the class of generalized Condorcet winner rules, which roughly compute the median rule with respect to the endpoints of the voters' intervals and some additional parameters, based on two axioms similar to Arrow's independence of irrelevant alternatives. By contrast, \citet{Berg98a} and \citet{BeMo09a} focus on strategyproof voting rules for the single-plateaued domain but do not provide a closed-form characterization of such rules.
Our results differ from these works as we focus on robustness and reinforcement and characterize the class of position-threshold rules based on these axioms.

Furthermore, our paper is related to the problem of facility location, where a public-good facility needs to be placed on the real line based on the voter's preferences over the possible positions \citep[e.g.,][]{PrTe13a,FFG16a,CFLL+21a}. In particular, in facility location it is typically assumed that the voters report their ideal positions for the facility and that their cost for a location is its distance to their ideal position. Put differently, facility location can be seen as voting on the real line with single-peaked preferences. The goal of facility location is to identify strategyproof voting rules that optimize some objective such as the utilitarian or egalitarian social welfare, a problem for which Moulin's (\citeyear{Moul80a}) characterization has proven invaluable. Since recent works on facility location also investigate scenarios where the voters report intervals instead of a single location \citep{ELZ22a,ZZML23a}, we believe that our results can also provide valuable insights for this setting.

Finally, voting on the interval domain is loosely connected to the problem of interval aggregation \citep[e.g.,][]{FaCo11a,KlPr20a,ENT22a}, where multiple input intervals need to be aggregated into an output interval. However, as we aim to choose a single alternative based on the voters' intervals, we end up with rather different axioms than those considered in interval aggregation.

\section{The Model}

We will use a variable-electorate framework in this paper and thus let $\mathbb{N}=\{1,2,\dots\}$ denote an infinite set of voters and $A=\{x_1, \dots, x_m\}$ a finite set of $m\geq 2$ alternatives. Intuitively, $\mathbb{N}$ is the set of all possible voters and a concrete electorate $N$ is a finite and non-empty subset of $\mathbb{N}$. The set of all electorates is therefore defined by $\mathcal{F}(\mathbb{N})=\{N\subseteq\mathbb{N}\colon N\text{ is non-empty and finite}\}$. 
The central assumption in this paper is that there is an externally given strict total order $\ssucc$ over the alternatives. Throughout the paper, we will assume that $\ssucc$ is given by $x_1\ssucc x_2\ssucc \dots \ssucc x_m$, and we define the relation $\ssucceq$ by $x_i\ssucceq x_j$ if and only if $x_i=x_j$ or $x_i\ssucc x_j$ for all $x_i,x_j\in A$. 

Given an electorate $N\in\mathcal{F}(\mathbb{N})$, the voters are asked to report intervals of $\ssucc$ to indicate the alternatives they like. 
Formally, a set of alternatives $I$ is an \emph{interval} (with respect to $\ssucc$) if $x_i\in I$ and $x_k\in I$ implies $x_j\in I$ for all alternatives $x_i,x_j,x_k\in A$ with $x_i\ssucc x_j\ssucc x_k$. Since intervals are fully specified by their endpoints, we define $[x_i,x_k]=\{x_j\in A\colon x_i\ssucceq x_j\ssucceq x_k\}$ as the interval from $x_i$ to $x_k$. The set of all intervals, or the \emph{interval domain}, is given by $\Lambda=\{[x_i,x_j]\subseteq A\colon x_i\ssucceq x_j\}$. Next, an \emph{interval profile} $\mathcal{I}=(I_{i_1}, \dots, I_{i_n})$ for a given electorate $N=\{i_1,\dots, i_n\}$ contains the interval of every voter $i\in N$, i.e., it is a function from $N$ to $\Lambda$. The set of all interval profiles for a fixed electorate $N$ is defined by $\Lambda^N$ and the set of all possible interval profiles is $\Lambda^*=\bigcup_{N\in\mathcal{F}(\mathbb{N})} \Lambda^N$. We denote by $N_{\mathcal I}$ the set of voters that report an interval in a profile $\mathcal I$ and by $n_\mathcal{I}=|N_\mathcal{I}|$ the size of this set.

To give a clear economic interpretation to the interval domain, we will assume that the interval of each voter corresponds to his favorite alternatives in a weakly single-peaked preference relation. We note, however, that this assumption is technically not necessary for our main results, and we believe that our axioms may also be of interest for other interpretations of the interval domain.\footnote{We see at least three more plausible interpretations of the interval domain for our results. Firstly, one can interpret intervals as the set of approved candidates for the candidate-interval domain, which is set of weakly single-peaked preferences that are additionally dichotomous \citep[e.g.,][]{ElLa15a}. Secondly, voters may infer their interval from single-peaked preferences based on a threshold model, similar to those studied for approval voting \citep[e.g.,][]{Niem84a,BrFi07c,Endr13a}. Specifically, the idea here is that voters have single-peaked preferences and infer their interval by reporting all alternatives that exceed a (voter-specific) utility threshold. Thirdly, intervals may naturally arise in the context of partial preferences, where voters have not fully explored the alternatives. In this case, a voter's interval corresponds to the alternatives that he view as the most desirable ones, but there may not be any preference between alternatives not in the interval.}
To introduce weakly single-peaked preference relations, we first define \emph{preference relations $\succsim$} as complete and transitive binary relations over the alternatives, where $x\succsim y$ means that the respective voter weakly prefers $x$ to $y$. As usual, a preference relation is \emph{strict} if it does not contain ties. Further, we denote by $T({\succsim})=\{x\in A\colon \forall y\in A:x\succsim y\}$ the set of most preferred alternatives in a preference relation $\succsim$ and observe that $|T(\succsim)|=1$ if $\succsim$ is strict. Analogous to interval profiles, a \emph{preference profile} $R$ maps every voter in a given electorate to a preference relation. 

Now, the idea of weak single-peakedness is that each voter has an ideal alternative $x$ and that voters weakly prefer alternatives that are closer to $x$ with respect to $\rhd$. More formally, a preference relation $\succsim$ is \emph{weakly single-peaked} (with respect to $\rhd$) if there is an alternative $x$ such that $x\succsim y\succsim z$ for all $y,z\in A$ with $x\rhd y\rhd z$ or $z\rhd y\rhd x$. Moreover, a preference relation is called \emph{single-peaked} if it is weakly single-peaked and strict. We define by $\mathcal{R}_\rhd$ and $\mathcal{P}_\rhd$ the set of weakly single-peaked and single-peaked preference relation, respectively. It is well-known that the set of a voters' favorite alternatives $T(\succsim)$ is an interval if $\succsim$ is weakly single-peaked \citep{ElLa15a,Pupp16a}. We will thus interpret each voter's interval $I_i$ as his set of most preferred alternatives in his weakly single-peaked preference relation $\succsim_i$, i.e., $I_i=T(\succsim_i)$. Lastly, we say a preference profile is (weakly) single-peaked if every voter has a (weakly) single-peaked preference relation, and we define by $\mathcal{P}_\rhd^N$ and $\mathcal{R}_\rhd^N$ (resp. $\mathcal{P}_\rhd^*$ and $\mathcal{R}_\rhd^*$) the sets of all (weakly) single-peaked preference profiles for a fixed electorate $N$ (resp. for all electorates).

The goal in this paper is to select a single winning alternative based on interval profiles. We will thus study \emph{voting rules}, which are functions that map every interval profile $\mathcal{I}\in\dom$ to a single alternative $x\in A$. Since we interpret intervals as the sets of the voters' favorite alternatives in weakly single-peaked preference relations, we may equivalently view voting rules as functions that map every weakly single-peaked profile $R\in \mathcal{R}_\rhd^*$ to a single winning alternative $x\in A$ and that only depend on the voters' favorite alternatives. This latter property is commonly known as plateau-onlyness \citep[e.g.,][]{Berg98a,BeMo09a}.

\subsection{Robustness}\label{subsec:robustness}

We will next introduce the central axiom for our analysis called robustness. The rough idea of this axiom is that if a voter removes his left-most or right-most alternative from his interval, the outcome is only allowed to change to reflect this modification. In more detail, robustness requires that, if a voter removes the left-most (resp. right-most) alternative from his interval, then the winner cannot change at all or the winner changes from the old left-most (resp. right-most) alternative to the new left-most (resp. right-most) alternative. To formalize this, we let $\mathcal I^{i\downarrow x}$ denote the profile derived from another profile $\mathcal{I}$ by removing alternative $x$ from voter $i$'s interval. We note that $\mathcal{I}^{i\downarrow x}$ is a valid interval profile only if $|I_i|\geq 2$ and $x$ is the left-most or right-most alternative in $I_i$. 

\begin{definition}[Robustness]\label{def:robustness}
A voting rule $f$ is \emph{robust} if, for all interval profiles $\mathcal I\in \dom$, voters $i\in N_{\mathcal I}$, and alternatives $x_\ell$, $x_r$ such that $I_i=[x_\ell, x_r]$ and $x_\ell\ssucc x_r$, it holds that 
    \begin{enumerate}[label=(\roman*), leftmargin = *]
    \item $f(\mathcal{I})=f(\mathcal{I}^{i\downarrow x_\ell})$, or $f(\mathcal{I})=x_\ell$ and $f(\mathcal{I}^{i\downarrow x_\ell})=x_{\ell+1}$, and 
    \item $f(\mathcal{I})=f(\mathcal{I}^{i\downarrow x_r})$, or $f(\mathcal{I})=x_r$ and $f(\mathcal{I}^{i\downarrow x_r})=x_{r-1}$.
\end{enumerate}
\end{definition}

Robustness can equivalently be formulated in terms of adding an alternative to a voter's interval: if we, e.g., add a new left-most alternative to a voter's interval, the winner cannot change at all or it changes from the voter's old left-most alternative to his new left-most alternative. Similar invariance notions have been studied before \citep[e.g.,][]{Gibb77a,Saij87a,Mask99a,MuSa17a}, with the most prominent examples being localizedness and Maskin-monotonicity. We thus believe that robustness is a desirable property as it prohibits that the outcome changes in an unexpected way. 

To provide further motivation for robustness, we will next show this axiom is closely related to well-known incentive compatibility conditions, namely strategyproofness and uncompromisingness, when assuming that the voters infer their intervals from weakly single-peaked preference relations. Specifically, we will prove that robustness implies strategyproofness and is equivalent to a strengthening of uncompromisingness when treating voting rules on the interval domain as plateau-only voting rules on the domain of weakly single-peaked preference relations. To define these incentive compatibility conditions, we say two profiles $\mathcal I$ and $\mathcal I'$ are \emph{$i$-variants} if they only differ in the interval of voter $i$, i.e., $N_\mathcal{I}=N_{\mathcal{I'}}$ and there is a voter $i\in N_{\mathcal I}$ such that $I_j=I_j'$ for all $j\in N_{\mathcal I}\setminus \{i\}$.

\paragraph{Strategyproofness.} Strategyproofness requires that voters cannot benefit by lying about their true preferences. Since we assume that voters have weakly single-peaked preference relations $\succsim_i$, we formalize this by requiring that it is always in the best interest of the voters to truthfully reveal their set of favorite alternatives $T(\succsim_i)$. Hence, a voting rule $f$ on $\dom$ is \emph{strategyproof} if $f(\mathcal{I})\succsim_i f(\mathcal{I}')$ for all voters $i$, $i$-variants $\mathcal{I},\mathcal{I}'\in \dom$, and weakly single-peaked preference relation $\succsim_i$ such that $I_i=T(\succsim_i)$. 

\paragraph{Uncompromisingness.} Uncompromisingness is commonly studied in the context of single-peaked preference relations \citep[e.g.,][]{BoJo83a,Spru95a,EPS02a} and informally requires that the outcome is not allowed to change as long as the manipulator's favorite alternative remains on the same side of the current winner with respect to $\rhd$. In the context of single-peaked preferences, this condition turns out to be equivalent to strategyproofness \citep{BoJo83a}.
Moreover, \citet{Berg98a} extended this notion to voting rules on the interval domain (or, equivalently, plateau-only voting rules on the weakly single-peaked domain) by requiring that the outcome is not allowed to change if the manipulator's full interval stays on the same side of the current winner. More formally, a voting rule $f$ on $\dom$ is \emph{uncompromising} if $f(\mathcal I)=f(\mathcal I')$ for all voters $i$, $i$-variants $\mathcal I,\mathcal I'\in\dom$, and intervals $I_i=[x_\ell,x_r]$ and $I_i'=[x_\ell', x_r']$ such that \emph{(i)} $x_r\rhd f(\mathcal I)$ and $x_{r}'\ssucceq f(\mathcal I)$ or \emph{(ii)} $f(\mathcal I)\ssucc x_\ell$ and $f(\mathcal I)\ssucceq x_\ell$. 
We, however, observe that this uncompromisingness axiom has no consequences if $f(\mathcal I)\in [x_\ell, x_r]$.
We thus introduce \emph{strong uncompromisingness}, which requires three more conditions than uncompromisingness: it must also hold that $f(\mathcal I)=f(\mathcal I')$ 
if \emph{(iii)} $x_\ell\ssucc f(\mathcal I)\ssucc x_r$ and $x_\ell'\ssucceq f(\mathcal I)\ssucceq x_r'$, 
\emph{(iv)} $x_\ell=f(\mathcal I)\rhd x_r$ and $x_\ell'=f(\mathcal I)\ssucceq x_r'$, or 
\emph{(v)} $x_\ell\rhd f(\mathcal I)= x_r$ and $x_\ell'\ssucceq f(\mathcal I)= x_r'$. Put differently, these conditions ensure that the the winning alternative $f(\mathcal I)$ is not allowed to change unless the relative position of the endpoints of the manipulator's interval to  $f(\mathcal I)$ change. 
\medskip

As we show next, robustness is equivalent to strong uncompromisingness and it implies strategyproofness. We note here that our definition of strategyproofness is based on the assumption that voters have weakly single-peaked preference relations, so it is not necessary to refer to these preferences in the following proposition.

\begin{restatable}{proposition}{sincerity}\label{prop:sincerity}
    A voting rule on $\dom$ is robust if and only if it is strongly uncompromising. Further, every robust voting on $\dom$ rule is strategyproof.
\end{restatable}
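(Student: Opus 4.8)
The plan is to establish the equivalence with strong uncompromisingness first, and then read off strategyproofness from the same toolkit. The key observation --- call it the \emph{safe-move property} --- is a single application of robustness: if $f$ is robust, $f(\mathcal I)=w$, and $w$ is not the left endpoint of voter $i$'s interval in $\mathcal I$, then moving voter $i$'s left endpoint by one alternative, inwards or outwards, leaves the winner equal to $w$; symmetrically for the right endpoint. Indeed, robustness permits the winner to change only to the \emph{new} left endpoint and only when the \emph{old} winner is the old left endpoint, so a winner that differs from the left endpoint is frozen (the enlarging step is the same instance of robustness read in reverse). Moreover, whenever a one-step endpoint modification does change the winner, the winner moves by exactly one alternative, in the same direction in which the endpoint moves.

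\emph{From robustness to strong uncompromisingness.} Let $\mathcal I,\mathcal I'$ be $i$-variants whose intervals $[x_\ell,x_r]$ and $[x_\ell',x_r']$ satisfy one of the clauses (i)--(v), and set $w:=f(\mathcal I)$. In each clause I would connect $[x_\ell,x_r]$ to $[x_\ell',x_r']$ by a sequence of one-step endpoint modifications in which $w$ is never the endpoint being moved; each such step is then a safe move, so $f$ is constant along the path and $f(\mathcal I')=w$. For clauses (iv) and (v) the winner coincides with an endpoint of $[x_\ell,x_r]$ that is, by hypothesis, unchanged in $[x_\ell',x_r']$, so only the other endpoint must be moved; for clause (iii) the winner is strictly between $x_\ell$ and $x_r$ and lies (weakly) between $x_\ell'$ and $x_r'$, so one moves the left endpoint to $x_\ell'$ and then the right endpoint to $x_r'$, crossing only alternatives different from $w$; for clauses (i) and (ii) the winner is strictly outside $[x_\ell,x_r]$ and at least weakly outside $[x_\ell',x_r']$ on the same side, so one first enlarges the interval to $[x_{\min(\ell,\ell')},x_{\max(r,r')}]$ and then contracts it to $[x_\ell',x_r']$, keeping $w$ off every endpoint that is touched. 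The one delicate point is when $[x_\ell',x_r']$ collapses to $\{w\}$: then some step does place an endpoint onto $w$, but only while the other endpoint is still strictly away from $w$, so that step is still safe.

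\emph{From strong uncompromisingness to robustness.} It suffices to verify clause (i) of robustness --- removing the left endpoint $x_\ell$ from an interval $[x_\ell,x_r]$ with $x_\ell\ssucc x_r$ --- as clause (ii) is symmetric. Write $w:=f(\mathcal I)$ and $y:=f(\mathcal I^{i\downarrow x_\ell})$. If $w$ is strictly left of, strictly interior to, equal to the right endpoint of, or strictly right of $[x_\ell,x_r]$, then clause (ii), (iii), (v), or (i) of strong uncompromisingness applied to $(\mathcal I,\mathcal I^{i\downarrow x_\ell})$ yields $y=w$ directly. The only remaining case is $w=x_\ell$, where robustness demands $y\in\{x_\ell,x_{\ell+1}\}$; assuming $y\neq x_\ell$, I would apply strong uncompromisingness to the \emph{reversed} pair $(\mathcal I^{i\downarrow x_\ell},\mathcal I)$: in each of the positions ``$y$ strictly left of $x_\ell$'', ``$y$ strictly interior to $[x_{\ell+1},x_r]$'', ``$y=x_r$ with $\ell+1<r$'', and ``$y$ strictly right of $x_r$'', the matching clause among (ii), (iii), (v), (i) forces $y=f(\mathcal I)=x_\ell$, a contradiction, leaving only $y=x_{\ell+1}$, as required.

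\emph{Strategyproofness.} Let $\succsim_i$ be weakly single-peaked with $T(\succsim_i)=I_i=[x_\ell,x_r]$, let $\mathcal I$ be a profile in which $i$ reports $I_i$, let $\mathcal I'$ be an $i$-variant with $I_i'=[x_\ell',x_r']$, and put $w:=f(\mathcal I)$, $w':=f(\mathcal I')$. If $w\in[x_\ell,x_r]=T(\succsim_i)$ then $w\succsim_i w'$ trivially. Otherwise $w$ lies strictly to one side of the plateau, say strictly to its left, $w\ssucc x_\ell$ (the right case is symmetric). I would pass from $\mathcal I$ to $\mathcal I'$ in three stages --- extend $i$'s right endpoint out to $x_m$, then shift $i$'s left endpoint from $x_\ell$ to $x_\ell'$, then retract $i$'s right endpoint to $x_{r'}$ --- and verify, using the safe-move property together with the fact that a moving endpoint can only drag the winner in its own direction, that the winner never moves to the right of $w$; hence $w'\ssucceq w$, so $w'$ too is strictly left of the plateau and no nearer to it than $w$, and weak single-peakedness gives $w\succsim_i w'$. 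The main obstacle is organisational rather than conceptual: the path arguments need careful bookkeeping of degenerate (singleton) intervals and of the moments when the winner momentarily coincides with an endpoint, but each case is routine once one notices that robustness lets the winner follow a moving endpoint only in the direction that endpoint travels.
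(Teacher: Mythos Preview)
Your proposal is correct and follows essentially the same approach as the paper's proof: both directions of the equivalence and the strategyproofness claim are established by connecting the two profiles through a sequence of one-step endpoint modifications and tracking the winner via robustness, with the key case $w=x_\ell$ handled by applying strong uncompromisingness to the reversed pair. Your explicit isolation of the ``safe-move property'' and the uniform three-stage path for strategyproofness are tidy organisational choices, but they do not differ substantively from the paper's case-by-case path arguments.
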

\begin{proof}
    Fix a voting rule $f$ on the interval domain, a profile $\mathcal I$, and a voter $i\in N_{\mathcal I}$. Further, we denote voter $i$'s interval in $\mathcal I$ by $I_i=[x_\ell, x_r]$. We break down this proposition into three implications: strong uncompromisingness implies robustness, robustness implies strong uncompromisingness, and robustness implies strategyproofness.
    
    \paragraph{Strong uncompromisingness implies robustness.} We first assume that $f$ is strongly uncompromising and show that it is robust. To this end, assume that $x_\ell\neq x_r$ and consider the profile $\mathcal I^{i\downarrow x_\ell}$ where voter $i$ removes $x_\ell$ from his interval. We note that the argument for $\mathcal I^{i\downarrow x_r}$ is symmetric. Now, if $f(\mathcal I)\neq x_\ell$, strong uncompromisingness implies that $f(\mathcal I^{i\downarrow x_\ell})=f(\mathcal I)$ and robustness holds. In more detail, if $f(\mathcal I)\neq x_\ell$, then either \emph{(i)} $f(\mathcal I)\rhd x_\ell$, \emph{(ii)} $x_\ell \rhd f(\mathcal I)\rhd x_r$, \emph{(iii)} $x_r\rhd f(\mathcal I)$, or  \emph{(iv)} $x_\ell\rhd f(\mathcal I)=x_r$. In all four cases, the premises of strong uncompromisingness are satisfied as voter $i$ only removes $x_\ell$ from his interval, so this axiom directly implies our claim. Next, assume for contradiction that $f(\mathcal I)=x_\ell$ and $f(\mathcal I^{i\downarrow x_\ell})\not\in \{x_\ell, x_{\ell+1}\}$. In this case, it either holds that  \emph{(i)} $f(\mathcal I^{i\downarrow x_\ell})\rhd x_{\ell}$,  \emph{(ii)} $x_{\ell+1}\rhd f(\mathcal I^{i\downarrow x_\ell})\rhd x_r$,  \emph{(iii)} $x_{\ell+1}\rhd f(\mathcal I^{I\downarrow x_\ell})=x_r$, or \emph{(iv)} $x_r\rhd f(\mathcal I^{i\downarrow x_\ell})$. In all of these cases, strong uncompromisingness from $\mathcal I^{i\downarrow x_\ell}$ to $\mathcal I$ implies that $f(\mathcal I)=f(\mathcal I^{i\downarrow x_\ell})$. However, this contradicts that $f(\mathcal I)=x_\ell$ and $f(\mathcal I^{i\downarrow x_\ell})\not\in\{x_\ell, x_{\ell+1}\}$. Hence, if $f(\mathcal I)=x_\ell$, then $f(\mathcal I^{i\downarrow x_\ell})\in\{x_\ell, x_{\ell+1}\}$ and $f$ is robust. 

    \paragraph{Robustness implies strong uncompromisingness.} For the other direction, we suppose that $f$ is robust and show that $f$ is strongly uncompromising. To this end, fix a second profile $\mathcal I'$ that differs from $\mathcal I$ only in the interval of voter $i$ and let $I_i'=[x_\ell', x_r']$. We proceed with a case distinction on $f(\mathcal I)$. First, suppose that $f(\mathcal I)\rhd x_\ell$. In this case, strong uncompromisingness only applies if $f(\mathcal I)\ssucceq x_\ell'$, so we assume this for $I_i'$. We can transform $I_i$ to $I_i'$ as follows: if $f(\mathcal I)\not\in I_i'$, we first expand $I_i$ to contain all candidates right of $f(\mathcal I)$ and subsequently remove all candidates not in $I_i'$. By first adding and then removing alternatives one after another and applying robustness along the way, it follows that the outcome is not allowed to change at any step, so $f(\mathcal I)=f(\mathcal I')$. In more detail, we note that no modification involves the current winner $f(\mathcal I)$, so robustness does not allow the outcome to change. On the other hand, if $f(\mathcal I)\in I_i'$, then $f(\mathcal I)=x_\ell'$ and we can expand voter $i$'s interval first to $[f(\mathcal I), x_m]$ and then remove alternatives from the right to derive $I_i'$. Again adding and removing alternatives one at time shows that the outcome cannot change due to robustness, because we do not delete $x_\ell'$ or add an alternative left of it.
    Hence, strong uncompromisingness is satisfied in the case, and a symmetric analysis applies when $x_r\rhd f(\mathcal I)$. 

    Next, suppose that $x_\ell=f(\mathcal I)\rhd x_r$. In this case, strong uncompromisingness only applies when $x_\ell=x_\ell'$. Hence, we can immediately transform $I_i$ to $I_i'$ by adding alternatives right of $x_r$ (if $x_r\rhd x_r'$) or deleting alternatives left of $x_r$ (if $x_r'\rhd x_r$). By sequentially applying these modifications, we infer from robustness that the outcome is not allowed to change, as voter $i$ never removes the current winner from his interval or adds alternatives directly next to it with respect to $\rhd$. The case that $x_\ell\rhd f(\mathcal I)= x_r$ is again symmetric.

    As the last case, we suppose that $x_\ell \rhd f(\mathcal I)\rhd x_r$. In this case, we can assume that $x_\ell'\ssucceq f(\mathcal I)\ssucceq x_r'$ since strong uncompromisingness is otherwise void. In this case, we can first delete the alternatives from $I_i\setminus I_i'$ from $I_i$ and then add those in $I_i'\setminus I_i$ to transform $I_i$ to $I_i'$. Moreover, by applying these modifications one after another, it follows again that the outcome is not allowed to change. To make this more precise, let's assume that $x_\ell'\rhd x_\ell$ and $x_r'\rhd x_r$. Then, we can first one after another delete the alternatives $x_t,\dots, x_r$ (where $x_t$ is the alternative directly right of $x_r'$) from $I_i$. Since none of these alternatives is the current winner, robustness implies that the winner remains unchanged. Next, we can one after another add the alternatives $x_{\ell+1},\dots, x_{\ell}'$ to $I_i$. Since the current winner is strictly left of $x_\ell$, robustness implies again that the outcome is not allowed to change. We note that there are three more cases (depending on whether $x_\ell'\rhd x_\ell$ or $x_\ell\rhd x_\ell'$ and $x_r'\rhd x_r$ or $x_r\rhd x_r'$), all of which follow from symmetric arguments. Hence, it holds that $f(\mathcal I)=f(\mathcal I')$ and strong uncompromisingness is satisfied. 
    
    \paragraph{Robustness implies strategyproofness.} Finally, we assume that $f$ is robust and show that it satisfies strategyproofness. Assume for contradiction that this is not the case, which means that there is a voter $i$, two $i$-variants $\mathcal I$ and $\mathcal I'$, and a weakly single-peaked preference relation $\succsim_i$ such that $f(\mathcal I)\not\succsim_i f(\mathcal I)$ and $I_i=T(\succsim_i)$. This implies that $f(\mathcal I)\not\in T(\succsim_i)$ because  $x\succsim_i y$ for all $x\in T(\succsim_i)$ and $y\in A$. Let $I_i=T(\succsim_i)=[x_\ell, x_r]$ and $I_i'=[x_\ell',x_r']$. Further, we assume without loss of generality that $f(\mathcal I)\rhd x_\ell$ as the case that $x_r\rhd f(\mathcal I)$ is symmetric. If also $f(\mathcal I)\rhd x_\ell'$, it holds that $f(\mathcal I)=f(\mathcal I')$ because robustness implies strong uncompromisingness. Hence, we may assume that $x_\ell'\rhd f(\mathcal I)$. In this case, we first consider the intermediate profile $\mathcal I^1$ where voter $i$ reports $[f(\mathcal I),x_r]$. Strong uncompromisingness and thus also robustness shows that $f(\mathcal I)=f(\mathcal I^1)$. Next, we expand voter $i$'s interval to $[x_\ell', x_r]$ by adding further alternatives to the left, resulting in the profile $\mathcal I^2$. By robustness, the winning alternative can only move to the left, i.e., it holds that $f(\mathcal I^2)\ssucceq f(\mathcal I^1)$. Lastly, if $x_r'\rhd x_r$, we delete the corresponding alternatives from voter $i$'s interval to derive $I_i'$ and robustness shows again that the winning alternative can only move to the left. On the other hand, if $x_r\ssucceq x_r'$, we add the alternatives in $[x_{r+1}, x_r']$ one after another to voter $i$'s interval. Since $f(\mathcal I^2)\rhd f(\mathcal I)\rhd x_\ell\ssucceq x_r$, robustness again shows that this step cannot affect the outcome. We hence conclude that $f(\mathcal I')\ssucceq f(\mathcal I)\ssucc x$ for all $x\in T(\succsim_i)$. Lastly, since $\succsim_i$ is weakly single-peaked, this implies that $f(\mathcal I)\succsim f(\mathcal I')$, which contradicts our assumption.
    \end{proof}

\subsection{Further axioms}

In this section, we introduce three standard axioms, namely unanimity, anonymity, reinforcement, as well as a variant of a continuity axiom adopted to our setting. Variants of these axioms feature prominently in the analysis of various types of scoring rules \citep[e.g.,][]{Smit73a,Youn75a,YoLe78a,Fish78d,Bran13a,LaSk21a,BrPe19a,Lede23a}.

\paragraph{Unanimity.} A basic requirement of voting rules is that, if all voters agree on a favorite alternative, this alternative must be selected. This is formalized by \emph{unanimity}, which requires of a voting rule $f$ on $\dom$ that $f(\mathcal{I})=x_j$ for all interval profiles $\mathcal I\in\dom$ and alternatives $x_j\in A$ such that $I_i=\{x_j\}$ for all $i\in N_{\mathcal I}$.

\paragraph{Anonymity.} Anonymity postulates that the selected outcome is invariant under renaming the voters. More formally, a voting rule $f$ is \emph{anonymous} if $f(\mathcal{I})=f(\tau(\mathcal{I}))$ for all interval profiles $\mathcal{I}\in\dom$ and bijections $\tau:\mathbb{N}\rightarrow\mathbb{N}$. Here, $\mathcal{I}'=\tau(\mathcal{I})$ denotes the profile defined by $N_{\mathcal{I}'}=\{\tau(i)\colon i\in N_{\mathcal{I}}\}$ and $I_{\tau(i)}'=I_i$ for all $i\in N_{\mathcal{I}}$.

\paragraph{Reinforcement.} The idea of reinforcement is that, if the same alternative is chosen for two disjoint elections, this alternative should also be chosen for a combined election. The reason for this is that if an alternative is the ``best'' outcome for two disjoint profiles, it should also be the ``best'' outcome for the combined profile. To formalize this, we define by $\mathcal{I}''=\mathcal{I}+\mathcal{I}'$ the profile derived from two voter-disjoint profiles $\mathcal{I},\mathcal{I}'\in\dom$ by setting $N_{\mathcal{I}''}=N_{\mathcal{I}}\cup N_{\mathcal{I}'}$, $I_i''=I_i$ for all $i\in N_{\mathcal{I}}$, and $I''_i=I_i'$ for all $i\in N_{\mathcal{I}'}$. Now, a voting rule $f$ is \emph{reinforcing} if $f(\mathcal{I}+\mathcal{I}')=f(\mathcal{I})$ for all interval profiles $\mathcal{I},\mathcal{I}'\in\dom$ such that $N_{\mathcal{I}}\cap N_{\mathcal{I}'}=\emptyset$ and $f(\mathcal{I})=f(\mathcal{I}')$. 

\paragraph{Right-biased continuity.} As our last axiom, we will introduce a variant of an axiom known as continuity or overwhelming-majority property \citep{Youn75a,Myer95b}. The rough idea of this axiom is that, if we are given two profiles, we can marginalize the effect of one profile on the outcome by sufficiently often cloning the other profile. More formally, continuity typically requires that, for all profiles $\mathcal{I},\mathcal{I}'\in\dom$, there is $\lambda\in\mathbb{N}$ such that $f(\lambda \mathcal{I}+\mathcal{I}')\subseteq f(\mathcal{I})$ (where $\lambda \mathcal{I}$ denotes the profile that consists of $\lambda$ copies of~$\mathcal{I}$). 
However, this formulation is incompatible with our model because resolute voting rules require tie-breaking, which is not continuous. We will thus weaken this axiom and, moreover, use it to specify the tie-breaking used in our voting rules. In more detail, we say that a voting rule $f$ on $\dom$ satisfies \emph{right-biased continuity} if it holds for all profiles $\mathcal{I},\mathcal{I}'\in\dom$ with $N_{\mathcal{I}}\cap N_{\mathcal{I}'}=\emptyset$ that \emph{(i)} if $f(\mathcal{I}')\ssucceq f(\mathcal{I})$, there is $\lambda\in\mathbb{N}$ such that $f(\lambda \mathcal{I}+\mathcal{I}')=f(\mathcal I)$ and \emph{(ii)} if $f(\mathcal{I})\ssucc f(\mathcal{I}')$, there is an alternative $x_j\in \bigcup_{i\in N_{\mathcal{I}}} I_i$ and $\lambda\in\mathbb{N}$ such that $f(\mathcal{I})\ssucceq f(\lambda \mathcal{I}+\mathcal{I}')\ssucceq x_j$. Less formally, right-biased continuity ensures full continuity if $f(\mathcal I)$ is right of $f(\mathcal I')$ and otherwise only guarantees that we cannot completely ignore $\mathcal{I}$ when duplicating this profile sufficiently often. We note, however, that the second condition becomes close to full continuity when the set $\bigcup_{i\in N_{\mathcal{I}}} I_i$ is small. 

\subsection{Position-Threshold Rules}\label{subsec:singleRobust}

We will now introduce the central class of voting rules in this paper, which we call position-threshold rules. Since these rules generalize Moulin's phantom median rules (\citeyear{Moul80a}), we will first discuss these rules.

\paragraph{Phantom median rules.} Phantom median rules are typically defined on the domain of single-peaked preference profiles for a fixed electorate $N$ and work as follows: given a single-peaked profile $R\in\mathcal{P}_{\ssucc}^N$, we first add $|N|-1$ phantom voters with fixed single-peaked preference relations, then order all $2|N|-1$ voters with respect to their favorite alternatives according to $\ssucc$, and finally return the favorite alternative of the $|N|$-th voter in this list. To make this more formal, let $p_i$ denote the number of phantom voters that top-rank alternative $x_i$ and let $q_i(R)$ be the number of regular voters that top-rank alternative $x_i$ in the profile~$R$. Then, a phantom median rule chooses for every profile $R$ the alternative $x_i$ with minimal index $i$ such that $\sum_{j=1}^i p_j + q_j(R)\geq |N|$. 

We will next reformulate this definition and hence introduce the \emph{(individual) peak position function $\pi_\mathit{SP}({\succsim_i}, x_j)$}. This function states the relative position of a voter $i$ with respect to each alternative $x_j$: $\pi_\mathit{SP}({\succsim_i}, x_j)=1$ if voter $i$'s favorite alternative $x_i$ is weakly left of $x_j$ (i.e., $x_i\ssucceq x_j)$ and $\pi_\mathit{SP}(\succsim_i,x_j)=0$ otherwise. Moreover, we define the \emph{(collective) peak position function $\Pi_\mathit{SP}$} of an alternative $x_j$ in a single-peaked profile $R$ by $\Pi_\mathit{SP}(R, x_j)=\sum_{i\in N_R} \pi_\mathit{SP}{(\succsim_i}, x_j)$, where $N_R$ denotes the set of voters that report a preference relation in $R$. 
Put differently, the collective peak position function $\Pi_\mathit{SP}$ counts how many voters in $R$ report a favorite alternative that is weakly left of $x_j$. 
Next, we define by $\max_{\ssucc} X$ the left-most alternative $x$ in a given set $X$, i.e., $x=\max_{\ssucc} X$ satisfies that $x\in X$ and $x\ssucc y$ for all $y\in X\setminus \{x\}$. 
Then, a voting rule $f$ is a phantom median rule if there are integers $p_1,\dots, p_m\in\mathbb{N}\cup\{0\}$ such that $\sum_{j=1}^m p_j=|N|-1$ and $f(R)=\max_{\ssucc} \{x_i\in A\colon \Pi_\mathit{SP}(R, x_i)\geq |N|-\sum_{j=1}^i p_j\}$ for all profiles $R\in\mathcal{P}_{\ssucc}^N$. 

Finally, to extend the definition of phantom median rules from a fixed electorate to all feasible electorates, we will replace the values $p_i$ with a \emph{threshold vector} $\theta$, which is a vector in $(0,1)^m$ such that $\theta_1\geq\theta_2\geq \dots\geq\theta_m$. The intuition is that for every electorate $N$, the value $\theta_i\cdot |N|$ is equal to $|N|-\sum_{j=1}^i p_j$, i.e., $\theta_i$ states the fraction of the voters that need to report an alternative left of $x_i$ to make $x_i$ an eligible outcome.
We are now ready to state our final definition of phantom median rules.

\begin{definition}[Phantom median rules]\label{def:PMR}
A voting rule $f$ on $\mathcal{P}_{\ssucc}^*$ is a \emph{phantom median rule} if there is a threshold vector $\theta\in(0,1)^m$ with $\theta_1\geq\dots\geq\theta_m$ such that $f(R)=\max_{\ssucc} \{x_i\in A\colon \Pi_\mathit{SP}(R, x_i)\geq \theta_i|N_R|\}$ for all profiles $R\in\mathcal{P}^*_{\ssucc}$.    
\end{definition}

In this definition, the value $\theta_m$ is irrelevant since $\Pi_\mathit{SP}(R,x_m)=|N_R|$ for all single-peaked profiles $R$. Moreover, the restriction that $\theta_i>0$ for $i\in \{1,\dots, m-1\}$ is necessary to ensure unanimity as $\Pi_\mathit{SP}(R,x_i)\geq 0$ for all $i\in \{1,\dots, m\}$, and the condition that $\theta_i<1$ for all $i\in \{1,\dots, m-1\}$ is necessary to satisfy right-biased continuity. 

\paragraph{Position-threshold rules.} Position-threshold rules aim to extend phantom median rules to the interval domain by generalizing the peak position function $\pi_\mathit{SP}$ to intervals. The central problem for this is that a voter's position with respect to some alternatives is unclear if his interval contains more than one alternative. For instance, if a voter reports $[x_1, x_2, x_3]$, his relative position with respect to $x_1$ and $x_2$ is ambiguous. In this paper, we will solve this issue by using a \emph{weight vector} $\alpha\in [0,1]^m$ which quantifies the relative position of every voter with respect to the alternatives in his interval. In more detail, if an alternative $x_k$ is in the interval $I_i$ of voter $i$ (but it is not the right-most alternative in~$I_i$), the relative position of voter $i$ with respect to $x_k$ is $\alpha_k$. By contrast, just as for the peak position function, a voter's relative position with respect to $x_k$ is $1$ if every alternative in his interval is weakly left of $x_k$ and $0$ if every alternative in his interval is strictly right of $x_k$. We formalize this idea with \emph{individual position functions $\pi_\alpha:\Lambda\times A\rightarrow \mathbb{R}$}, which depend on a weight vector $\alpha\in [0,1]^m$ and are defined as follows:
\begin{align*}
    \pi_\alpha([x_i, x_j], x_k)=\begin{cases}
        0 \qquad\qquad\qquad &\text{ if } x_k\ssucc x_i\\
        \alpha_k\qquad\qquad\qquad &\text{ if } x_i\ssucceq x_k \ssucc_i x_j\\
        1 \qquad\qquad\qquad & \text{ if } x_j\ssucceq x_k.
    \end{cases}
\end{align*}

We note that every individual position function $\pi_\alpha$ generalizes the individual peak position function $\pi_\mathit{SP}$ to the interval domain because $\pi_{SP}({\succsim}, x_j)=\pi_\alpha(T(\succsim), x_j)$ for all single-peaked preference relations ${\succsim}\in \mathcal{P}_{\ssucc}$, alternatives $x_j\in A$, and weight vectors $\alpha$. 

Given a weight vector $\alpha$, we define the \emph{(collective) position function $\Pi_\alpha$} by $\Pi_\alpha(\mathcal I,x_j)=\sum_{i\in N_{\mathcal I}} \pi_\alpha(I_i,x_j)$ for all alternatives $x_j$ and interval profiles $\mathcal I$. Based on a weight vector $\alpha$ and a threshold vector $\theta$, we can now define position-threshold rules: these rules choose the alternative with the smallest index whose collective position exceeds its threshold. However, while all such rules are well-defined, we need to impose additional constraints on the weight vector $\alpha$ to guarantee robustness. Specifically, we say that a weight vector $\alpha$ is \emph{compatible} with a threshold vector $\theta$ if, for all alternatives $x_i\in A\setminus \{x_m\}$ and profiles $\mathcal I\in\dom$, it holds that $\Pi_\alpha(\mathcal I, x_{i})\geq \theta_{i} n_{\mathcal I}$ implies $\Pi_\alpha(\mathcal I, x_{i+1})\geq \theta_{i+1} n_{\mathcal I}$. Put differently, this condition states that, if the collective position of an alternative $x_i$ exceeds its threshold value, the same should be true for every alternative right of $x_i$. We note that this condition again mimics the behavior of phantom median rules because the collective peak position function always satisfies this condition. Moreover, we give a characterization of compatible weight and threshold vectors in \Cref{lem:robust}, which, e.g., shows that a weight vector $\alpha$ is compatible with every threshold vector if $\alpha_1\leq \alpha_2\leq \dots\leq\alpha_m$. These are arguably the most natural weight vectors because $\alpha_{i+1}\geq\alpha_{i}$ intuitively captures that a voter is at least as much left of $x_{i+1}$ as of $x_{i}$.

Finally, the class of position-threshold rules consists precisely of the rules induced by a pair of compatible weight and threshold vectors.

\begin{definition}[Position-threshold rules]
    A voting rule $f$ on $\dom$ is a \emph{position-threshold rule} if there are a threshold vector $\theta\in (0,1)^m$ with $\theta_1\geq\dots\geq\theta_m$ and a weight vector $\alpha\in[0,1]^m$ such that $f(\mathcal I)=\max_{\ssucc}\{x_i\in A\colon \Pi_{\alpha}(\mathcal I, x_i)\geq \theta_i n_{\mathcal I}\}$ for all profiles $\mathcal I\in\dom$. 
\end{definition}

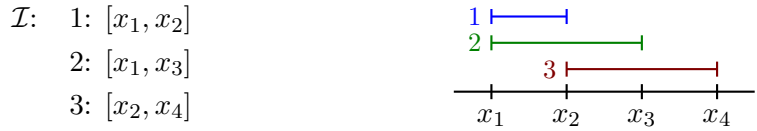
\begin{figure}
    \centering
    \begin{tabular}{lll}
       $\mathcal I$:   & $1$: $[x_1,x_2]$\\
         & $2$: $[x_1,x_3]$ \\
         & $3$: $[x_2,x_4]$ 
    \end{tabular}
    \hspace{3cm}
    \begin{tikzpicture}[baseline=0.3cm]
        \draw[thick] (0.5,0) -- (4.5,0); 

    \node[left] at (1,1) {\small\color{blue} $1$};
    \draw[blue, thick] (1,1) -- (2, 1);
    \draw[blue, thick] (1,0.9) -- (1, 1.1);
    \draw[blue, thick] (2,0.9) -- (2, 1.1);

    \node[left] at (1,0.65) {\small\color{green!50!black} $2$};
    \draw[green!50!black, thick] (1,0.65) -- (3, 0.65);
    \draw[green!50!black, thick] (1,0.55) -- (1, 0.75);
    \draw[green!50!black, thick] (3,0.55) -- (3, 0.75);

    \node[left] at (2,0.3) {\small\color{red!50!black} $3$};
    \draw[red!50!black, thick] (2,0.3) -- (4, 0.3);
    \draw[red!50!black, thick] (2,0.2) -- (2, 0.4);
    \draw[red!50!black, thick] (4,0.2) -- (4, 0.4);
    
    \foreach \x in {1,2,3,4} {
        \draw[thick] (\x,0.1) -- (\x,-0.1); 
        \node[below] at (\x,-0.1) {$x_{\x}$}; 
    }
    \end{tikzpicture}
    \caption{Example of position-threshold rules. The interval profile $\mathcal I$ contains $3$ voters and $4$ alternatives, and it is graphically represented on the right. First, let $f_1$ denote the position-threshold rule given by the threshold vector $\theta=(\frac{1}{2}, \frac{1}{2}, \frac{1}{2}, \frac{1}{2})$ and the weight vector $\alpha=(1,1,1,1)$. This rule selects the median of the voters' left endpoints and it holds that $f_1(\mathcal I)=x_1$ since $\Pi_{\alpha}(\mathcal I, x_1)=2>\frac{1}{2} n_{\mathcal I}$. By contrast, for the position-threshold rule $f_2$ defined by the same threshold vector $\theta$ and the weight vector $\beta=(\frac{1}{2}, \frac{1}{2}, \frac{1}{2}, \frac{1}{2})$, it holds that $\Pi_{\beta}(\mathcal I, x_1)=1<\frac{1}{2} n_\mathcal{I}$ and $\Pi_{\beta}(\mathcal I, x_2)=2\geq \frac{1}{2} n_\mathcal{I}$, so $f_2(\mathcal I)=x_2$. 
    }
    \label{fig:example}
\end{figure}

To provide further intuition for these rules, we discuss the roles of the threshold vector $\theta$ and the weight vector $\alpha$ in more detail. Moreover, we display in \Cref{fig:example} an example illustrating how position-threshold rules work. First, the threshold vector $\theta$ can be interpreted just as for phantom median rules: for every $i\in\{1,\dots, m-1\}$, $\theta_i$ states the fraction of the phantom voters that report an alternative right of $x_i$. For instance, if $\theta_i=\frac{3}{4}$ for all $i\in \{1,\dots, m\}$, we may imagine that $\frac{1}{4}n_{\mathcal I}$ phantom voters report $\{x_1\}$ and $\frac{3}{4}n_{\mathcal I}$ phantom voters report $\{x_m\}$. As a second example, if $\theta_i=\frac{m-i}{m}$ for all $i\in \{1,\dots, m-1\}$ and $\theta_m=\frac{1}{m}$, each alternative $x_i$ is reported by a $\frac{1}{m}$ share of the phantom voters. We note that, even though our voters are allowed to report intervals, it suffices that the phantom voters report single alternatives. 

Next, the weight vector can be interpreted as a way of decomposing intervals into singleton ballots: if a voter reports an interval $[x_\ell,x_r]$, we may replace this voter with $\alpha_\ell$ voters reporting $\{x_\ell\}$, $\alpha_{i}-\alpha_{i-1}$ voters reporting $\{x_i\}$ for all $x_i\in [x_{\ell+1},x_{r-1}]$, and $1-\alpha_r$ voters reporting $\{x_r\}$. 
It can be shown that this transformation preserves the collective position of every alternative, so it does not affect the outcome of a position-threshold rule. For instance, for the weight vector $\alpha=(1,\dots, 1)$, this means that every interval $I$ is represented by a voter who only reports the left-most alternative of $I$. As a second example, if $\alpha=(\frac{1}{2},\dots,\frac{1}{2}$), every interval $[x_i, x_j]$ is represented by $\frac{1}{2}$ voters reporting $\{x_i\}$ and $\frac{1}{2}$ voters reporting~$\{x_j\}$. 
Since position-threshold rules reduce to phantom median rules when all voters report singleton intervals, this gives a direct relation between position-threshold rules and phantom median rules.

\section{Results}

We are now ready to state our results: we discuss our characterization of position-threshold rules in \Cref{subsec:singlewinner}, and we present in \Cref{subsec:endponintmedian} a characterization of a particular position-threshold rule called the endpoint-median rule.

\subsection{A Characterization of Position-Threshold Rules}\label{subsec:singlewinner}

In this section, we discuss our main result, the characterization of position-threshold rules. In more detail, we will prove that position-threshold rules are the only voting rules on $\dom$ that satisfy robustness, anonymity, unanimity, reinforcement, and right-biased continuity. We note that it is an easy consequence of Moulin's work (\citeyear{Moul80a}) that phantom median rules, as defined in \Cref{def:PMR}, are the only single-winner voting rules on the domain of single-peaked preference relations that satisfy the given axioms when suitably adapting robustness. Hence, our characterization can be seen as a variant of Moulin's result for the interval domain. Since the proof of the subsequent theorem is lengthy, we will only provide a proof sketch in the main body and defer the full proof to the appendix.

\begin{restatable}{theorem}{robustCharacterization}\label{thm:characterization}
    A voting rule on $\dom$ is robust, anonymous, unanimous, reinforcing, and right-biased continuous if and only if it is a position-threshold rule. 
\end{restatable}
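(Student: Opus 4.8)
The plan is to prove the two directions separately, with the bulk of the work going into the harder ``axioms imply position-threshold rule'' direction.

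For the easy direction, I would verify that every position-threshold rule $f$ defined by a compatible pair $(\alpha,\theta)$ satisfies the five axioms. Anonymity is immediate since $\Pi_\alpha(\mathcal I,x_j)$ depends only on the multiset of intervals. Unanimity follows because if $I_i=\{x_j\}$ for all $i$, then $\Pi_\alpha(\mathcal I,x_k)=0$ for all $k<j$ (so these $x_k$ are never eligible since $\theta_k>0$) and $\Pi_\alpha(\mathcal I,x_j)=n_{\mathcal I}\ge\theta_j n_{\mathcal I}$, making $x_j$ eligible and leftmost-eligible. Reinforcement is linearity of $\Pi_\alpha$ in the electorate together with compatibility: if $f(\mathcal I)=f(\mathcal I')=x_j$ then both profiles have $x_j$ eligible and nothing left of $x_j$ eligible, and summing the inequalities $\Pi_\alpha(\cdot,x_k)<\theta_k n$ (for $k<j$) and $\Pi_\alpha(\cdot,x_j)\ge\theta_j n$ preserves these for the sum. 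Right-biased continuity follows because cloning $\mathcal I$ scales $\Pi_\alpha(\mathcal I,\cdot)$ by $\lambda$ and $n_{\mathcal I}$ by $\lambda$, so the strict inequalities $\Pi_\alpha(\mathcal I,x_k)<\theta_k n_{\mathcal I}$ dominate for large $\lambda$; the leftward-only movement (part (ii) of the axiom, with $x_j$ some endpoint in $\bigcup I_i$) comes from the fact that the leftmost alternative ever made eligible by $\mathcal I$ alone is at least some interval endpoint. Robustness is exactly where compatibility was designed to help, and I would invoke \Cref{lem:robust} (the characterization of compatible vectors) together with a direct check that removing an endpoint from $I_i$ changes $\pi_\alpha(I_i,\cdot)$ only at that endpoint, hence moves the eligibility frontier by at most one alternative in the prescribed direction.

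For the hard direction, suppose $f$ satisfies the five axioms. The overall architecture mirrors Young-style scoring-rule characterizations. Step one: use anonymity and reinforcement (plus right-biased continuity as an Archimedean/closure tool) to show $f$ ``comes from'' a vector-valued additive evaluation --- concretely, that there is a function assigning to each interval $I$ a vector such that $f$ is determined by comparing coordinate sums across alternatives. The standard move is: anonymity lets us work with profiles as nonnegative integer combinations of the finitely many possible intervals, reinforcement makes the ``winning region'' of each alternative a convex cone in this combination space, and right-biased continuity lets us pass to the rational/real closure and separate these cones by hyperplanes. Step two: exploit robustness to pin down the geometry of those hyperplanes. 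Robustness (equivalently strong uncompromisingness, by \Cref{prop:sincerity}) forces the separating structure between the winning regions of $x_i$ and $x_{i+1}$ to depend on the profile only through a single linear functional --- this is what produces the form ``$x_i$ eligible iff $\Pi_\alpha(\mathcal I,x_i)\ge\theta_i n_{\mathcal I}$'' --- and it forces the per-interval contribution of $I=[x_\ell,x_r]$ to the $x_k$-functional to be $0$ for $k\le\ell$... wait, $0$ for $k$ left of $x_\ell$, $1$ for $k$ at or right of $x_r$, and a common value $\alpha_k$ (independent of which interval, as long as $x_k$ is an interior point of it) in between; this common-value collapse is the crucial consequence of robustness applied to single-voter modifications. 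Step three: unanimity forces $\theta_i>0$ and right-biased continuity forces $\theta_i<1$ for $i<m$, and monotonicity of the eligibility frontier (again robustness) forces $\theta_1\ge\cdots\ge\theta_m$ and compatibility of $\alpha$ with $\theta$.

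The main obstacle I anticipate is Step one combined with the $\alpha_k$-collapse in Step two: showing that the contribution of an interval's interior alternative $x_k$ to the relevant functional is a single number $\alpha_k\in[0,1]$ that does not depend on which interval $x_k$ sits inside, nor on where $x_k$ sits within it. A priori robustness only constrains how the outcome moves when one endpoint is shaved off one voter's interval; bootstrapping from these local one-step constraints to a global additive numerical representation --- and checking that the increments telescope consistently so that ``interval $[x_\ell,x_r]$ decomposes into $\alpha_\ell$ copies of $\{x_\ell\}$, $\alpha_k-\alpha_{k-1}$ copies of $\{x_k\}$, and $1-\alpha_r$ copies of $\{x_r\}$'' is forced --- is the delicate combinatorial heart of the argument. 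I would organize it by first handling the two-alternative case $m=2$ (where an interval is either $\{x_1\}$, $\{x_2\}$, or $[x_1,x_2]$ and the whole statement reduces to locating a single threshold and a single weight $\alpha_1$), then inducting on $m$ by ``freezing'' the leftmost or rightmost alternative and restricting to subprofiles, using robustness to glue the pieces.
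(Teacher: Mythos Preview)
Your broad architecture is essentially the paper's: Young-style convex-cone geometry (anonymity + reinforcement to get convex winning regions, continuity to pass to closures, separating hyperplanes between them) followed by robustness to constrain the hyperplane coefficients. The easy direction is also handled the same way.

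There are two substantive differences in execution worth flagging. First, the paper does \emph{not} obtain $\theta$ at the end from unanimity/continuity as you suggest in Step three; instead it begins the hard direction by restricting $f$ to the singleton subdomain $\mathcal D_1^*$ (where every voter reports a one-point interval), shows that this restriction is an anonymous, unanimous, strategyproof rule on single-peaked profiles, and invokes Moulin's characterization as a black box to extract the threshold vector $\theta$ up front. Having $\theta$ in hand early is then used repeatedly in the hyperplane analysis (it calibrates the normalization $u^{i,i+1}_{\{x_i\}}=1-\theta_i$, $u^{i,i+1}_{\{x_{i+1}\}}=-\theta_i$ and appears in all the profile constructions that follow).

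Second, the ``$\alpha_k$-collapse'' you identify as the crux is \emph{not} done by induction on $m$. The paper proves it directly for each pair $x_i\ssucc^* x_{i+1}$: it shows that $u^{i,i+1}_X=u^{i,i+1}_Y$ whenever $X$ and $Y$ have the same position relative to $\{x_i,x_{i+1}\}$ (both left of $x_i$, both right of $x_{i+1}$, or both containing $\{x_i,x_{i+1}\}$) by building profiles with many voters on $\{x_i\}$ and $\{x_{i+1}\}$ and a few on $X$ or $Y$, and using robustness to force a contradiction if the coefficients differ. The common value $\alpha_i$ is then read off as $u^{i,i+1}_{\{x_i,x_{i+1}\}}+\theta_i$. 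Your proposed induction on $m$ via ``freezing'' an extremal alternative and restricting to subprofiles is a different route; it is not obviously doomed, but you would need to check that the axioms (especially right-biased continuity, which quantifies over alternatives in $\bigcup_i I_i$) descend cleanly to the restricted domain, and that the glued pieces agree---the paper avoids this entirely by working globally and leveraging the known $\theta$.
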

\begin{proof}[Proof Sketch]
    $(\impliedby)$
    For the direction from right to left, we fix a position-threshold rule $f$ and let $\theta$ and $\alpha$ denote its threshold and weight vectors. Note that, by definition, $\theta$ and $\alpha$ are compatible. First, it follows immediately from the definition of position-threshold rules that $f$ is anonymous and unanimous. 
    Next, to show that $f$ is reinforcing, let $\mathcal{I}$ and $\mathcal{I'}$ denote two voter-disjoint profiles such that $f(\mathcal I)=f(\mathcal I')=x_i$ for some alternatives $x_i\in A$. This means for $\mathcal{\hat I}\in \{\mathcal I, \mathcal I'\}$ that $\Pi_{\alpha}(\mathcal{\hat I}, x_i)\geq \theta_i|N_{\mathcal{\hat I}}|$ and $\Pi_{\alpha}(\mathcal{\hat I}, x_h)< \theta_h|N_{\mathcal{\hat I}}|$ for all $x_h\in A$ with $x_h\ssucc x_i$. In turn, it follows that $\Pi_\alpha(\mathcal I+\mathcal I',x_i)=\Pi_\alpha(\mathcal I, x_i)+\Pi_\alpha(\mathcal I',x_i)\geq \theta_i n_{\mathcal I}+\theta_i n_{\mathcal I'}=\theta_i n_{\mathcal I+\mathcal I'}$ and $\Pi_\alpha(\mathcal I+\mathcal I',x_h)=\Pi_\alpha(\mathcal I, x_h)+\Pi_\alpha(\mathcal I',x_h)<\theta_h n_{\mathcal I+\mathcal I'}$ for all $x_h\in A$ with $x_h\ssucc x_i$. This shows that $f(\mathcal I+\mathcal I')=x_i$, so reinforcement is satisfied.

    For robustness and right-biased continuity, we recall that $\Pi_{\alpha}(\mathcal I, x_i)\geq \theta_in_\mathcal{I}$ implies that $\Pi_{\alpha}(\mathcal I, x_{i+1})\geq \theta_{i+1}n_\mathcal{I}$ for all profiles $\mathcal I\in\dom$ and alternatives $x_i\in A\setminus \{x_m\}$ since $\alpha$ and $\theta$ are compatible. Now, to prove that $f$ is robust, let $\mathcal I$ be an interval profile, $i\in N_{\mathcal I}$ a voter, and $x_\ell, x_r\in A$ the two alternatives such that $x_\ell\ssucc x_r$ and $I_i=[x_\ell, x_r]$. We will show that $f(\mathcal I)=f(\mathcal I^{i\downarrow x_\ell})$ or $f(\mathcal I)=x_{\ell}$ and $f(\mathcal I^{i\downarrow x_\ell})=x_{\ell+1}$ and observe that similar arguments also apply for $\mathcal I^{i\downarrow x_r}$. 
    By the definition of $\Pi_\alpha$, it holds that $\Pi_{\alpha}(\mathcal I, x_j)=\Pi_{\alpha}(\mathcal{I}^{i\downarrow x_\ell}, x_j)$ for all alternatives $x_j\in A\setminus \{x_\ell\}$ and $\Pi_{\alpha}(\mathcal I, x_\ell)\geq \Pi_{\alpha}(\mathcal{I}^{i\downarrow x_\ell}, x_\ell)$. If $f(\mathcal I)\neq x_\ell$, this implies that $f(\mathcal I)=f(\mathcal{I}^{i\downarrow x_\ell})$. By contrast, if $f(\mathcal I)=x_\ell$, it follows that $\Pi_{\alpha}(\mathcal I^{i\downarrow x_\ell}, x_j)=\Pi_{\alpha}(\mathcal I, x_j)<\theta_jn_\mathcal{I}$ for all alternatives $x_j$ with $x_j\ssucc x_\ell$ and the compatibility of $\alpha$ and $\theta$ implies that $\Pi_{\alpha}(\mathcal I^{i\downarrow x_\ell}, x_{\ell+1})=\Pi_{\alpha}(\mathcal I, x_{\ell+1})\geq\theta_{\ell+1}n_\mathcal{I}$ as $\Pi_{\alpha}(\mathcal I, x_{\ell})\geq\theta_{\ell}n_\mathcal{I}$. In combination, this proves that $f(\mathcal{I}^{i\downarrow x_\ell})\in \{x_\ell, x_{\ell+1}\}$ and $f$ thus is robust.

    Finally, to prove that $f$ satisfies right-biased continuity, we consider two profiles $\mathcal I$ and $\mathcal I'$ and we make a case distinction with respect to the relative position of $x_i=f(\mathcal I)$ and $x_j=f(\mathcal I')$. For instance, if $x_j\ssucc x_i$, we infer that $\Pi_\alpha(\mathcal I, x_h)<\theta_h n_\mathcal{I}$ for all $x_h$ with $x_h\ssucc x_i$. By copying $\mathcal I$ sufficiently often, we can make the absolute difference in this inequality arbitrarily big, so there is $\lambda\in\mathbb{N}$ such that $\Pi_\alpha(\lambda \mathcal I+\mathcal{I}', x_h)<\theta_h |N_{\lambda \mathcal I+\mathcal{I}'}|$ for all $x_h$ with $x_h\ssucc x_i$. Furthermore, it holds that $\Pi_\alpha(\mathcal I, x_i)\geq \theta_in_\mathcal{I}$ as $f(\mathcal I)=x_i$ and that $\Pi_\alpha(\mathcal I', x_i)\geq \theta_i n_{\mathcal I'}$ as $f(\mathcal I')=x_j$ and $x_j\ssucc x_i$. This implies that $\Pi_\alpha(\lambda \mathcal I+\mathcal I', x_i)\geq \theta_i  n_{\lambda \mathcal I+\mathcal{I}'}$, so $f(\lambda \mathcal I+\mathcal I')=x_i$ and right-biased continuity is satisfied in this case. 
    \medskip

    $(\implies)$ For the direction from left to right, we assume that $f$ is a voting rule on $\dom$ that satisfies anonymity, unanimity, robustness, reinforcement, and right-biased continuity. As the first step, we investigate $f$ on the subset $\mathcal{D}_1^*$ of $\dom$ where voters only report a single alternative. This domain is related to the domain of single-peaked preferences by associating the reported alternative with the top-ranked alternative of a single-peaked preference relation. We hence show that $f$ induces a voting rule $f'$ on the domain of single-peaked preference profiles that satisfies anonymity, unanimity, and strategyproofness. In turn, the characterization of \citet{Moul80a} shows that $f'$ is a phantom median rule. By using the connection between $f$ and $f'$, we derive that there is a threshold vector $\theta\in (0,1)^m$ such that $\theta_1\geq \theta_2\geq \dots\geq \theta_m$ and $f(\mathcal I)=\max_{\ssucc}\{x_i\in A\colon \Pi_\alpha(\mathcal I, x_i)\geq \theta_in_\mathcal{I}\}$ for all profiles $\mathcal I\in\mathcal{D}_1^*$. (Note that the choice of the weight vector $\alpha$ does not matter here as all voters only report a single alternative for profiles in $\mathcal{D}_1^*$.)

   Next, we will apply the geometric techniques originally developed by \citet{Youn75a} in the context of scoring rules. To this end, we define $q=|\Lambda|$ as the number of intervals over $\ssucc$ and assume that the intervals in $\Lambda$ are enumerated in an arbitrary order $I_1,\dots, I_q$. Based on this enumeration, we can represent each interval profile $\mathcal I$ by a vector $v\in\mathbb{N}^q_0\setminus \{0\}$, where the entry $v_i$ states how often the interval $I_i$ is reported in $\mathcal I$. Moreover, since $f$ is anonymous, it can be computed based on the vectors $v$, i.e., we may interpret $f$ as function from $\mathbb{N}^{q}\setminus \{0\}$ to $A$. Next, we show based on reinforcement that $f$ can be extended to a function $\hat g:\mathbb{Q}^{q}_{\geq 0}\setminus \{0\}\rightarrow A$ while preserving its desirable properties. We then define the sets $Q_i=\{v\in\mathbb{Q}^{q}_{\geq 0}\setminus \{0\}\colon \hat g(v)=x_i\}$ and note that these sets are $\mathbb{Q}$-convex (i.e., for all $v,v'\in Q_i$ and $\lambda\in [0,1]\cap \mathbb{Q}$, it holds that $\lambda v+(1-\lambda)v'\in Q_i$) as $\hat g$ is reinforcing. This implies that the closure of $Q_i$ with respect to $\mathbb{R}^{q}$, denoted by $\bar Q_i$, is convex. Further, we show that the interiors of these sets are non-empty and pairwise disjoint. The separating hyperplane theorem for convex sets hence implies that, for all distinct $x_i, x_j\in A$, there is a non-zero vector $u^{i,j}\in\mathbb{R}$ such that $vu^{i,j}\geq 0$ for all $v\in\bar Q_i$ and $vu^{i,j}\leq 0$ for all $v\in \bar Q_j$ ($vu^{i,j}$ denotes here the standard scalar product between vectors). Moreover, we prove that the vectors $u^{i,j}$ fully describe the sets $\bar Q_i$ because $\bar Q_i=\{v\in \mathbb{R}^{q}_{\geq 0}\colon\forall x_j\in A\setminus \{x_i\}\colon vu^{i,j}\geq 0\}$ for all $x_i\in A$. 

    As the next step, we investigate the vectors $u^{i,j}$ in more detail. In particular, we will prove that $vu^{i,i+1}\geq 0$ implies $vu^{i+1, i+2}\geq 0$ for all $i\in \{1,\dots, m-2\}$ and $v\in\mathbb{R}^q_{\geq 0}$. Based on this insight, we derive a simplified representation of the sets $\bar Q_i$: it holds that $\bar Q_1=\{v\in \mathbb{R}^q\colon vu^{1,2}\geq 0\}$, $\bar Q_i=\{v\in \mathbb{R}^q\colon vu^{i-1,i}\leq 0\land vu^{i,i+1}\geq 0\}$ for all $i\in \{2,\dots, m-1\}$, and $\bar Q_m=\{v\in \mathbb{R}^q\colon vu^{m-1,m}\leq 0\}$. Thus, it suffices to study the vectors $u^{i,i+1}$ for all $i\in \{1,\dots, m-1\}$. To do so, we denote by $u^{i,j}_X$ the entry in $u^{i,j}$ that corresponds to the interval $X$ and we recall that $\theta$ is the threshold vector of $f$ for the domain $\mathcal{D}_1^*$. We show that we can scale each vector $u^{i,i+1}$ such that \emph{(i)} $u^{i,i+1}_X=1-\theta_i$ for all $X\in\Lambda$ with $X\subseteq [x_1,x_i]$ \emph{(ii)} $u^{i, i+1}_X=-\theta_i$ for all $X\in\Lambda$ with $X\subseteq [x_{i+1}, x_m]$, and \emph{(iii)} $u^{i,i+1}_X=u^{i,i+1}_{\{x_i,x_{i+1}\}}$ for all $X\in \Lambda$ with $\{x_i, x_{i+1}\}\subseteq X$. We next derive a weight vector $\alpha$ from the vectors $u^{i,i+1}$ by defining $\alpha_i=u^{i, i+1}_{\{x_i,x_{i+1}\}}+\theta_i$ for all $i\in \{i,\dots, m-1\}$ and $\alpha_m=\alpha_{m-1}$. Moreover, we define the individual position function $\pi_\alpha$ based on this weight vector and we show that $\sum_{X\in\Lambda} v_X \pi_\alpha(X,x_i)=vu^{i,j}+\theta_i\sum_{X\in \Lambda} v_X$. By combining our insights, we thus conclude that $\bar Q_i=\{v\in \mathbb{R}^q\colon \sum_{X\in\Lambda} v_X \pi_\alpha(X, x_{i-1})\leq \theta_{i-1}\sum_{X\in\Lambda} v_X \land \sum_{X\in\Lambda} v_X\pi_\alpha(X, x_i)\geq \theta_i\sum_{X\in \Lambda} v_X\}$. 
    
    Now, fix a profile $\mathcal I$ and let $v$ denote the vector such that $v_X$ states how often the interval $X$ is reported in $\mathcal I$. It is easy to check that $\sum_{X\in\Lambda} v_X \pi_\alpha(X, x_i)$ is equal to $\Pi_\alpha(\mathcal I,x_i)$. We hence conclude for every profile $\mathcal I\in\dom$ with corresponding vector $v$ that
    \begin{align*}
    f(\mathcal I)=\hat g(v)
    \in  \{x_i\in A\colon v\in \bar Q_i\}
    =\{x_i\in A\colon \Pi_{\alpha}(\mathcal I,x_{i-1})\leq \theta_{i-1}n_\mathcal{I}\land \Pi_{\alpha}(\mathcal I,x_{i})\geq \theta_{i}n_\mathcal{I}\}.
    \end{align*}

    Based on right-based continuity, we next show that $f$ picks the left-most alternatives in this set, i.e., $f(\mathcal I)=\max_{\ssucc} \{x_i\in A\colon \Pi_{\alpha}(\mathcal I,x_{i-1})\leq \theta_{i-1}n_\mathcal{I}\land \Pi_{\alpha}(\mathcal I,x_{i})\geq \theta_{i}n_\mathcal{I}\}=\max_{\ssucc} \{x_i\in A\colon \Pi_{\alpha}(\mathcal I,x_{i})\geq \theta_{i}n_\mathcal{I}\}$. Finally, we infer based on robustness that $\theta$ and $\alpha$ are compatible, which proves that $f$ is the position-threshold rule defined by $\theta$ and $\alpha$. 
\end{proof}

\begin{remark}
    All axioms of \Cref{thm:characterization} are necessary for our characterization: if we only drop anonymity, position-threshold rules that weight ``even'' voters $i\in 2\mathbb{N}$ twice and ``odd'' voters $i\in 2\mathbb{N}+1$ only once satisfy all given axioms. If we only omit right-biased continuity, we can, for instance, define position-threshold rules that select the left-most alternative $x_i$ such that $\Pi_\alpha(\mathcal I, x_{i})> \theta_{i} n_{\mathcal I}$. When omitting unanimity, every constant voting rule satisfies the given axioms. When omitting robustness or weakening robustness to strategyproofness, one can extend the class of position-threshold rules by allowing for arbitrary individual position functions (i.e., no consistency between different intervals is required anymore). Finally, if we only drop reinforcement, position-threshold rules that use different weight and threshold vectors depending on $n_\mathcal{I}$ satisfy all remaining axioms. For instance, the following rule satisfies all conditions except for reinforcement and it is no position-threshold rule: if $\lceil\log_2 n_{\mathcal I}\rceil$ is odd, we apply the median rule with respect to the left endpoints of the voters' intervals, and if it is even, we apply the median rule with respect to the right endpoints of the voters' intervals. 
\end{remark}

\begin{remark}
    The variable-electorate framework is required for \Cref{thm:characterization}. To see this, we define the (profile-dependent) weight vector $\alpha(\mathcal I)$ by $\alpha_1(\mathcal I)=\frac{1}{2}-\frac{|i\in N_{\mathcal I}\colon x_1\not\in I_i\}|}{2n_\mathcal{I}}$, and $\alpha_j(\mathcal I)=1$ for all $j\in \{2,\dots, m\}$ and we let $\theta=(\frac{1}{2},\dots,\frac{1}{2})$. Now, consider the following rule $f$ given by $f(\mathcal I)=\max_{\rhd}\{x_i\in A\colon \Pi_{\alpha(\mathcal I)}(\mathcal I, x_i)\geq \theta_i n_{\mathcal I}\}$, where $\Pi_{\alpha(\mathcal I)}$ is the collective position function given by the vector $\alpha(\mathcal I)$. We first note that $f$ is robust: if the number of voters $i$ with $x_1\not\in I_i$ does not change, this is true as $f$ behaves like a position-threshold rule. By contrast, if some voter changes his interval by removing $x_1$, the collective position of $x_1$ only decreases while the collective position of all other alternatives remains the same. This means that the winner either changes from $x_1$ to $x_2$ or not at all, so robustness is satisfied. Further, it can be verified that $f$ is anonymous, unanimous, and right-biased continuous. However, even when the electorate $N$ is fixed, this rule is no position-threshold rule. 
    In more detail, we suppose there are $n=100$ voters and we assume for contradiction that $f$ is a position-threshold rule. This means that $f$ is defined by a threshold vector $\phi$ and a weight vector $\beta$. First, consider the profile $\mathcal I$ where $50$ voters report $\{x_1\}$ and $50$ voters report $\{x_2\}$. It holds that $f(\mathcal I)=x_1$ since the weight vector does not matter if all voters report a single alternative. This implies that $\phi_1\leq \frac{\Pi_\beta(\mathcal I, x_1)}{100}=\frac{1}{2}$. Next, consider the profile $\mathcal I'$ where all $100$ voters report $\{x_1,x_2\}$. Using the definition of $f$, we derive that $f(\mathcal I')=x_1$, so $\beta_1\geq \phi_1$. Finally, consider the profile $\mathcal I''$ where $50$ voters report $\{x_1,x_2\}$, $25$ voters report $\{x_1\}$, and $25$ voters report $\{x_2\}$. By the definition of $f$, it follows that $f(\mathcal I'')=x_2$ because $\pi_{\alpha(\mathcal I'')}(\{x_1,x_2\}, x_1)<\frac{1}{2}$. However, $\Pi_\beta(\mathcal I'', x_1)=25+50\beta_1\geq 100\phi_1$, so $f$ cannot be a position-threshold rule. 
\end{remark}

\begin{remark}
We can simplify \Cref{thm:characterization} when replacing unanimity with weak efficiency, which requires of a voting rule $f$ that an alternative can only be chosen if it is contained in the interval of at least one voter. In particular, this axiom implies for the threshold vector $\theta$ of a position-threshold rule that $\theta_i=\theta_j$ for all $i,j\in \{1,\dots, m-1\}$. Indeed, if this was not true, there is an index $i\in \{1,\dots, m-2\}$ such that $\theta_i>\theta_{i+1}$ because $\theta$ is non-increasing. Further, we can find two integers $\ell_1$ and $\ell_2$ such that $\theta_i>\frac{\ell_1}{\ell_1+\ell_2}>\theta_{i+1}$. By again using that $\theta$ is non-decreasing, we derive now that any position-threshold rule $f$ defined by $\theta$ (and an arbitrary compatible weight vector $\alpha$) satisfies that $f(\mathcal I)=x_{i+1}$ for the profile $\mathcal I$ where $\ell_1$ voters report $\{x_1\}$ and $\ell_2$ voters report $\{x_m\}$. However, this violates weak efficiency as no voter reports $x_{i+1}$, so we have indeed that $\theta_i=\theta_j$ for all $i,j\in \{1,\dots, m-1\}$. In turn, the compatibility of $\alpha$ and $\theta$ together with Statement (2) in \Cref{lem:robust} requires that $\alpha_1\leq \alpha_2\leq \dots \leq \alpha_m$. Since it is easy to verify that all position-threshold rules that satisfy these additional constraints on the threshold vector are weakly efficient, this yields an attractive variant of \Cref{thm:characterization}.
\end{remark}

\begin{remark}
    \citet{Berg98a} characterizes the set of strategyproof voting rules on the interval domain (with underlying weakly single-peaked preference relations) for a fixed electorate. In more detail, this author shows that every strategyproof voting rule can be described via a two-step procedure: first, we use for every voter $i\in N$ a tie-breaking function $h_i$, which maps the current interval profile $\mathcal I$ to a single alternative $x=h_i(\mathcal I)\in I_i$. Moreover, for each voter $i$, the function $h_i$ is required to be strategyproof for all voters but $i$ (with respect to the underlying weakly single-peaked preferences). As the second step, we apply a minmax rule (i.e., a non-anonymous variant of a phantom median rule) to the profile where each voter $i$ reports $h_i(\mathcal I)$. Since robustness implies strategyproofness, every position-threshold rule $f$ can be represented as such a tie-breaking minmax rule for a fixed electorate. In particular, the minmax rule of $f$ is the phantom median rule to which $f$ reduces when voters report single alternatives. Further, for the tie-breaking rule, let $I_i=[x_\ell, x_r]$ for an arbitrary voter $i$. Then, we can set $h_i(\mathcal I)=f(\mathcal I)$ if $f(\mathcal I)\in I_i$, $h_i(\mathcal I)=x_\ell$ if $f(\mathcal I)\rhd x_\ell$, $h_i(\mathcal I)=x_r$ if $x_r\rhd f(\mathcal I)$. A similar construction has been used by \citet{Berg98a} for his class of extended median voting schemes. 
\end{remark}

\subsection{Characterization of the Endpoint-Median Rule}\label{subsec:endponintmedian}

A natural follow-up question to \Cref{thm:characterization} is which position-threshold rule to use in practice. We will give one possible answer to this question by characterizing the \emph{endpoint-median rule $f_\mathit{EM}$}, which is the position-threshold rule defined by the weight vector $\alpha=(\frac{1}{2},\dots,\frac{1}{2})$ and the threshold vector $\theta=(\frac{1}{2},\cdots,\frac{1}{2})$. We note that the endpoint-median rule has a much simpler formulation when avoiding the formalism of position-threshold rules: we substitute the interval $[x_\ell,x_r]$ of every voter with the intervals $\{x_\ell\}$ and $\{x_r\}$ and compute the median rule on this simplified profile. For our characterization of the endpoint-median rule, we will rely on the following two axioms.

\paragraph{Majority criterion.} The majority criterion states that an alternative should be chosen if it is uniquely reported by a strict majority of the voters. More formally, a voting rule $f$ on $\dom$ satisfies the \emph{majority criterion} if $f(\mathcal I)=x_j$ for all profiles $\mathcal I\in\dom$ and alternatives $x_j\in A$ such that $|\{i\in N_{\mathcal I}\colon I_i=\{x_j\}\}|>\frac{n_\mathcal{I}}{2}$. We note that this axiom can be seen as a mild variant of Condorcet-consistency for the interval domain. 

\paragraph{Strong unanimity.} Strong unanimity strengthens unanimity by requiring that, if an alternative is reported by all voters, it should be chosen even if some voters approve additional alternatives. Since multiple alternatives can be reported by all voters in such cases, \emph{strong unanimity} formally postulates of a voting rule $f$ on $\dom$ that $f(\mathcal I)\in \bigcap_{i\in N_{\mathcal I}} I_i$ for all interval profiles $\mathcal I\in \dom$ with $\bigcap_{i\in N_{\mathcal I}} I_i\neq \emptyset$.\medskip

We are now ready to prove our characterization of the endpoint-median rule. 

\begin{theorem}\label{thm:AWS1/2}
    The endpoint-median rule is the only position-threshold rule that satisfies the majority criterion and strong unanimity. 
\end{theorem}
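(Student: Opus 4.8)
The plan is to handle the two directions separately: the ``if'' part is a direct verification, and the ``only if'' part proceeds by first pinning down the threshold vector and then the weight vector of an arbitrary position-threshold rule satisfying the two axioms.

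For the ``if'' direction I would check that the endpoint-median rule $f_\mathit{EM}$, i.e.\ the position-threshold rule with $\alpha=\theta=(\tfrac12,\dots,\tfrac12)$, satisfies both axioms. For the majority criterion: if more than $n_{\mathcal I}/2$ voters report $\{x_j\}$, these voters contribute $1$ each to $\Pi_\alpha(\mathcal I,x_j)$ and $0$ each to $\Pi_\alpha(\mathcal I,x_h)$ for every $x_h\ssucc x_j$, so $x_j$ is the left-most alternative meeting its threshold and $f_\mathit{EM}(\mathcal I)=x_j$. For strong unanimity, write $\bigcap_{i\in N_{\mathcal I}}I_i=[x_a,x_b]$. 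Every voter's interval contains $x_b$, so each contributes at least $\alpha_b=\tfrac12$ to $\Pi_\alpha(\mathcal I,x_b)$; hence $x_b$ meets its threshold and $f_\mathit{EM}(\mathcal I)\ssucceq x_b$. Conversely, for $x_h\ssucc x_a$ every voter contributes $0$ or $\alpha_h=\tfrac12$ to $\Pi_\alpha(\mathcal I,x_h)$, and at least one voter has left endpoint exactly $x_a$ (otherwise $x_{a-1}$ would lie in the intersection) and so contributes $0$; thus $\Pi_\alpha(\mathcal I,x_h)<n_{\mathcal I}/2$, no alternative strictly left of $x_a$ meets its threshold, and $x_a\ssucceq f_\mathit{EM}(\mathcal I)$. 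Together these give $f_\mathit{EM}(\mathcal I)\in[x_a,x_b]$ (the case $x_a=x_1$ being trivial for the second half).

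For the ``only if'' direction, let $f$ be a position-threshold rule with weight vector $\alpha$ and threshold vector $\theta$ satisfying both axioms. \emph{Step 1: thresholds.} Fix $i\in\{1,\dots,m-1\}$ and, for $t\in\mathbb{N}$, consider the profile with $t+1$ voters reporting $\{x_i\}$ and $t$ reporting $\{x_{i+1}\}$. The majority criterion forces $f(\mathcal I)=x_i$; since $\Pi_\alpha(\mathcal I,x_i)=t+1$, this requires $t+1\ge\theta_i(2t+1)$. Swapping the block sizes forces $f(\mathcal I)=x_{i+1}$, hence $\Pi_\alpha(\mathcal I,x_i)=t<\theta_i(2t+1)$. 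Letting $t\to\infty$ squeezes $\theta_i=\tfrac12$, so $\theta=(\tfrac12,\dots,\tfrac12)$ (the coordinate $\theta_m$ being irrelevant since $\Pi_\alpha(\cdot,x_m)\equiv n_{\mathcal I}$). \emph{Step 2: weights.} Fix $i\in\{1,\dots,m-1\}$ and, for $b\in\mathbb{N}$, consider the profile with $b$ voters reporting $[x_i,x_{i+1}]$ and one voter reporting $\{x_{i+1}\}$; its intersection is $\{x_{i+1}\}$, so strong unanimity forces $f(\mathcal I)=x_{i+1}$, and using $\theta_i=\tfrac12$ this yields $\Pi_\alpha(\mathcal I,x_i)=b\alpha_i<\tfrac{b+1}{2}$, i.e.\ $\alpha_i\le\tfrac12$. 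Replacing the extra ballot by $\{x_i\}$ makes the intersection $\{x_i\}$, forces $f(\mathcal I)=x_i$, and gives $b\alpha_i+1\ge\tfrac{b+1}{2}$, i.e.\ $\alpha_i\ge\tfrac12$. Hence $\alpha_i=\tfrac12$ for all $i\le m-1$, and since $\alpha_m$ never appears in any individual position function, $(\alpha,\theta)$ is exactly the pair defining $f_\mathit{EM}$, so $f=f_\mathit{EM}$.

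I expect Step~2 to be the main obstacle. The majority criterion is blind to the internal structure of intervals --- it only ever constrains the rule on singleton ballots of the majority alternative --- so it cannot say anything about $\alpha$, and the content of the theorem for $\alpha$ must come entirely from strong unanimity. The idea that makes this work is to pad many copies of the two-element interval $[x_i,x_{i+1}]$ with a single singleton ballot, collapsing the common intersection to one endpoint and thereby converting strong unanimity into a one-sided inequality on $b\alpha_i$; the variable-electorate setting is what lets $b$ grow and squeeze $\alpha_i$ to exactly $\tfrac12$. Beyond finding these profiles, the remaining work is the routine evaluation of $\pi_\alpha$ on them and keeping track of which coordinates of $\theta$ and $\alpha$ are genuinely unconstrained.
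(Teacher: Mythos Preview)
Your proposal is correct and follows essentially the same approach as the paper: verify the two axioms directly for $f_{\mathit{EM}}$, then use the majority criterion on singleton profiles to pin down $\theta_i=\tfrac12$ and strong unanimity on profiles of the form ``many copies of $[x_i,x_{i+1}]$ plus one singleton'' to pin down $\alpha_i=\tfrac12$. The only cosmetic differences are that the paper uses $\{x_i\}$ versus $\{x_m\}$ (rather than $\{x_i\}$ versus $\{x_{i+1}\}$) for the threshold step and argues by direct contradiction in each step rather than by a limiting squeeze, but the test profiles and the logic are otherwise the same.
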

\begin{proof} 
We will prove both directions of the theorem separately. \medskip

$(\implies)$ We start by showing that $f_\mathit{EM}$ satisfies our two axioms and thus define $\alpha$ and $\theta$ as the weight and threshold vector of this rule. First, we analyze the majority criterion. For this, consider an interval profile $\mathcal I$ and an alternative $x_j$ such that more than $\frac{n_\mathcal{I}}{2}$ voters in $\mathcal I$ report $\{x_j\}$. It is straightforward to check that $\Pi_\alpha(\mathcal I, x_j)>\frac{n_\mathcal{I}}{2}=\theta_j n_\mathcal{I}$. Moreover, since $\pi_\alpha(\{x_j\}, x_h)=0$ for all alternatives $x_h$ with $x_h\ssucc x_j$, it holds for all such alternatives that $\Pi_\alpha(\mathcal I, x_h)<\frac{n_\mathcal{I}}{2}= \theta_h n_\mathcal{I}$. It thus follows that $f_\mathit{EM}(\mathcal I)=x_j$, so the endpoint-median rule satisfies the majority criterion. 

For proving that $f_\mathit{EM}$ also satisfies strong unanimity, let $\mathcal I$ denote a profile such that $\bigcap_{i\in N_\mathcal{I}} I_i\neq \emptyset$. Moreover, we define $x_j$ as the left-most alternative in $\bigcap_{i\in N_\mathcal{I}} I_i$, i.e., $x_j=\max_{\ssucc} \bigcap_{i\in N_\mathcal{I}} I_i$. By the definition of the weight vector $\alpha$, we have that $\pi_\alpha(I,x_j)\geq \frac{1}{2}$ for every interval $I\in\Lambda$ with $x_j\in I$. Since $x_j\in I_i$ for all $i\in N_\mathcal{I}$, it follows that $\Pi_\alpha(\mathcal I, x_j)\geq \frac{1}{2}n_\mathcal{I}=\theta_jn_\mathcal{I}$. Next, let $x_h$ denote an alternative with $x_h\ssucc x_j$. Since $x_j\in I_i$ for all voters $i\in N_{\mathcal I}$, $x_h$ is not the right-most approved alternative of any voter. We infer from this insight that $\pi(I_i, x_h)\leq \frac{1}{2}$ for all voters $i\in N_{\mathcal I}$. Further, since $x_j$ is the left-most alternative that is reported by all voters, there is a voter $i$ with $x_h\not \in I_i$ and $x_j\in I_i$, which implies that $\pi_\alpha(I_i,x_h)=0$. Combining these insights shows that $\Pi_\alpha(\mathcal I, x_h)<\frac{1}{2}n_\mathcal{I}=\theta_h n_\mathcal{I}$. Since this analysis holds for all alternatives $x_h$ with $x_h\ssucc x_j$, we derive that $f_\mathit{EM}(\mathcal I)=x_j$, so the endpoint-median rule satisfies strong unanimity.\medskip

$(\impliedby)$ Let $f$ denote a position-threshold rule that satisfies the majority criterion and strong unanimity. Moreover, we let $\alpha$ and $\theta$ denote the weight and threshold vector of $f$. We will prove that $\alpha$ and $\theta$ are the vectors of the endpoint-median rule. Hence, we first show that $\theta_i=\frac{1}{2}$ for all alternatives $x_i\in A$. To this end, we observe that $\theta_m$ has no influence on the outcome of $f$, so we can always define $\theta_m=\frac{1}{2}$. Next, we assume for contradiction that $\theta_i<\frac{1}{2}$ for some $i\in \{1,\dots, m-1\}$. In this case, let $w_1,w_2\in\mathbb{N}$ denote two integers such that $\theta_i<\frac{w_1}{w_1+w_2}<\frac{1}{2}$. Such integers exist because every rational value $q\in\mathbb{Q}\cap (0,1)$ can be written as $q=\frac{w_1'}{w_1'+w_2'}$ for two integers $w_1', w_2'\in\mathbb{N}$. Now, consider the profile $\mathcal I$ where $w_1$ voters report $\{x_i\}$ and $w_2$ voters report $\{x_m\}$. It holds that $f(\mathcal I)=x_i$ because $\Pi_{\alpha}(\mathcal I, x_h)=0$ for all $x_h\in A$ with $x_h\ssucc x_i$ and $\Pi_{\alpha}(\mathcal I, x_h)=w_1>\theta_in_\mathcal{I}$. However, $\frac{w_1}{w_1+w_2}<\frac{1}{2}$ implies that $w_1<w_2$, so a strict majority of the voters report $\{x_m\}$. The majority criterion thus postulates that $f(\mathcal I)=x_m$, which contradicts that $f(\mathcal I)=x_i$. As the second case, suppose that $\theta_i>\frac{1}{2}$ for some $i\in \{1,\dots, m-1\}$. In this case, we can find two integers $w_1,w_2\in\mathbb{N}$ such that $\frac{1}{2}<\frac{w_1}{w_1+w_2}<\theta_i$ and we consider again the profile $\mathcal I$ where $w_1$ voters report $\{x_i\}$ and $w_2$ voters report $\{x_m\}$. This time, it can be checked that $x_i\ssucc f(\mathcal I)$ as $\Pi(\mathcal I, x_h)=0$ for all $x_h\in A$ with $x_h\ssucc x_i$ and $\Pi(\mathcal I, x_i)=w_1<\theta_i n_\mathcal{I}$. However, since $\frac{1}{2}<\frac{w_1}{w_1+w_2}$, the majority criterion postulates that $f(\mathcal I)=x_i$, so we have again a contradiction. Thus, it follows that $\theta_i=\frac{1}{2}$ for all $i\in \{1,\dots, m-1\}$. 

Next, we will show that $\alpha_i=\frac{1}{2}$ for all $i\in \{1,\dots, m-1\}$ and note that $\alpha_m$ is irrelevant for the definition of $f$. We use again a case distinction for this and first suppose that $\alpha_i<\frac{1}{2}$ for some $i\in \{1,\dots, m-1\}$. In this case, we define $\delta=\frac{1}{2}-\alpha_i$ and choose an integer $t\in\mathbb{N}$ such $t\delta>1$. Now, consider the profile $\mathcal I$ where $t$ voters report $\{x_i, x_{i+1}\}$ and a single voter reports $\{x_i\}$. It holds that $\Pi_\alpha(\mathcal I,x_i)=1+t\alpha_i=1+\frac{1}{2} t -\delta t<\frac{1}{2} n_{\mathcal I}$ because $\pi_\alpha(\{x_i,x_{i+1}\}, x_i)=\alpha_i=\frac{1}{2}-\delta$. Since $\theta_i=\frac{1}{2}$, this means that $f(\mathcal I)\neq x_i$. However, $x_i$ is the only alternative that is reported by all voters in $\mathcal I$, so strong unanimity requires that $f(\mathcal I)=x_i$, a contradiction. For the second case, we suppose that $\alpha_i>\frac{1}{2}$ and we define $\delta=\alpha_i-\frac{1}{2}$. Now, we choose $t\in\mathbb{N}$ such that $t\delta>\frac{1}{2}$ and we consider this time the profile $\mathcal I$ where $t$ voters report $\{x_i, x_{i+1}\}$ and a single voter reports $\{x_{i+1}\}$. Analogous to the last case, it can be checked that $\Pi_\alpha(\mathcal I, x_i)=t\alpha_i=\frac{1}{2} t + \delta t>\frac{1}{2}n_\mathcal{I}$. This means that $f(\mathcal I)\ssucceq x_i$. However, $x_{i+1}$ is the only alternative that is reported by all voters in $\mathcal I$, so strong unanimity requires that $f(\mathcal I)=x_{i+1}$. Because we have a contradiction in both cases, we conclude that the assumption that $\alpha_i\neq \frac{1}{2}$ is wrong, i.e., it holds for all $i\in \{1,\dots, m-1\}$ that $\alpha_i= \frac{1}{2}$. This shows that $\alpha$ and $\theta$ are the weight and threshold vectors of the endpoint-median rule, so $f=f_\mathit{EM}$.
\end{proof}

\begin{remark}
On the domain of single-peaked preference relations, the median rule is the only phantom median rule that satisfies the majority criterion. Therefore, we interpret \Cref{thm:AWS1/2} as demonstrating that the endpoint-median rule is the ``correct'' extension of the median rule to the interval domain. Moreover, in combination, \Cref{thm:characterization,thm:AWS1/2} show that the endpoint-median rule is the only voting rule on $\dom$ that preserves all desirable properties of the median rule because $f_\mathit{EM}$ is the unique voting rule on the interval domain that satisfies anonymity, strong unanimity, the majority criterion, robustness, reinforcement, and right-biased continuity.
\end{remark}

\begin{remark}
    The endpoint-median rule can also be characterized as the most neutral position-threshold rule. In more detail, this rule is the only position-threshold rule that is shift-symmetric (if we move the interval of every voter one position to the right, the winner will also move one position to the right) and orientation-symmetric (if we exchange $x_i$ with $x_{m+1-i}$ for all $i\in \{1,\dots,m\}$ in the interval of every voter, the winner will change from $x_j$ to $x_{m+1-j}$ unless there is an alternative $x_i$ with $\Pi_\alpha(\mathcal I, x_i)=\theta_i n_{\mathcal I}$). This again mirrors the behavior of the median rule for single-peaked preference relations because this is the only phantom median rule that satisfies these conditions. 
\end{remark}

\section{Conclusion}

In this paper, we study voting rules for the interval domain, where voters report subintervals of a set of linearly ordered alternatives to indicate their preferences. As our main contribution, we propose and characterize the class of position-threshold rules, which generalize Moulin's phantom median rules \citep{Moul80a} to the interval domain. In essence, position-threshold rules compute for each alternative a collective position, which quantifies the voters' relative positions with respect to this alternative, and choose the left-most alternative whose collective position exceeds its threshold value. As our main result, we characterize these rules based on robustness (which demands that small changes in a voter's interval lead to analogous changes in the outcome or no changes at all), reinforcement (which demands that, if an alternative is chosen for two disjoint elections, it is also chosen when combining these elections), and three mild auxiliary conditions called anonymity, unanimity, and right-biased continuity. 

Moreover, we propose and characterize the endpoint-median rule, which replaces the interval of each voter with two singleton ballots corresponding to the endpoints of the interval and then computes the median rule. In more detail, we show that this rule is the only position-threshold rule that satisfies the majority criterion (an alternative is guaranteed to be chosen if it is uniquely reported by more than half of the voters) and strong unanimity (if some alternatives are reported by all voters, one such alternative is chosen). Since the median rule is the only phantom median rule that satisfies these conditions, our result suggests that the endpoint-median rule is the  ``correct'' extension of the median rule to the interval domain.

We note that our paper offers several directions for future work. Firstly, we believe that it is interesting to further analyze the axiomatic properties of position-threshold rules. This may help to identify new desirable voting rules on the interval domain or to strengthen the argument for the endpoint-median rule. Moreover, it may be fruitful to analyze position-threshold rules also in the context of facility location on the real line. An interesting question regarding this is, e.g., how much social welfare position-threshold rules can guarantee. Finally, it seems worthwhile to study voting rules on the interval domain that fail robustness but, e.g., satisfy strategyproofness to give a more complete pictures about the possibilities arising from this domain. 

\section*{Acknowledgements}

I thank Evghenii Beriozchin for helpful discussions and Felix Brandt for valuable feedback. This work was funded by the NSF-CSIRO grant on ``Fair Sequential Collective Decision-Making'' (RG230833).

\newpage
\appendix

\section{Proof of \Cref{thm:characterization}}

In this appendix, we will prove \Cref{thm:characterization}: position-threshold rules are the only voting rules on $\dom$ that satisfy anonymity, unanimity, robustness, reinforcement, and right-biased continuity. More specifically, we will show in \Cref{subsec:properties} that position-threshold rules satisfy all desired properties and in \Cref{subsec:characterizationproof} that these properties indeed characterize position-threshold rules. 

\subsection{Axiomatics of Position--Threshold Rules}\label{subsec:properties}

We start by showing that all position-threshold rules satisfy the five axioms of \Cref{thm:characterization}. To this end, we first recall that position-threshold rules are defined by a weight vector $\alpha$ and threshold $\theta$ vector that are compatible. This means that, for all profiles $\mathcal I$ and alternatives $x_i\in A\setminus \{x_m\}$, it holds that $\Pi_\alpha(\mathcal I, x_i)\geq n_\mathcal{I} x_i$ implies that $\Pi_\alpha(\mathcal I, x_{i+1})\geq n_\mathcal{I} x_{i+1}$. As an auxiliary tool, we first characterize the weight and threshold vectors that are compatible and show that the compatibility condition is necessary for position-threshold rules to satisfy robustness.
\begin{lemma}\label{lem:robust}
    Let $\theta\in(0,1)^m$ denote a threshold vector and $\alpha\in[0,1]^m$ a weight vector. The following statements are equivalent: 
    \begin{enumerate}[label=(\arabic*)]
    \item The vectors $\alpha$ and $\theta$ are compatible.
    \item It holds that $\alpha_{i+1}-\alpha_i\geq (\theta_{i+1}-\theta_i)\max(\frac{\alpha_i}{\theta_i}, \frac{1-\alpha_i}{1-\theta_i})$ for all $i\in \{1,\dots, m-2\}$.
    \item The voting rule $f$ given by $f(\mathcal I)=\max_{\ssucc} \{x_i\in A\colon \Pi_\alpha(\mathcal I, x_i)\geq \theta_i n_{\mathcal I}\}$ is robust. 
    \end{enumerate}
\end{lemma}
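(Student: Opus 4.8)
The plan is to prove the cycle of implications $(2)\Rightarrow(1)\Rightarrow(3)\Rightarrow(2)$. For $(2)\Rightarrow(1)$, I would note that for a fixed $i\in\{1,\dots,m-2\}$ the pair $\bigl(\pi_\alpha(X,x_i),\pi_\alpha(X,x_{i+1})\bigr)$ takes only finitely many values as $X$ ranges over $\Lambda$, and then search for a scalar $\lambda_i\geq 0$ with $\pi_\alpha(X,x_{i+1})-\theta_{i+1}\geq\lambda_i\bigl(\pi_\alpha(X,x_i)-\theta_i\bigr)$ for every interval $X\in\Lambda$. Inspecting the cases shows that $\lambda_i=\theta_{i+1}/\theta_i$ works when $\alpha_i\geq\theta_i$ and $\lambda_i=\tfrac{1-\theta_{i+1}}{1-\theta_i}$ works when $\alpha_i<\theta_i$ (both lie in the interval $[\theta_{i+1}/\theta_i,\tfrac{1-\theta_{i+1}}{1-\theta_i}]$, which is non-empty since $\theta$ is non-increasing): most of the required inequalities either pin $\lambda_i$ to this interval or are immediate from $\alpha\in[0,1]^m$, and the only substantial one, for $\bigl(\pi_\alpha(X,x_i),\pi_\alpha(X,x_{i+1})\bigr)=(\alpha_i,\alpha_{i+1})$, is exactly $(2)$ after clearing denominators. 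Summing the chosen inequality over the voters of an arbitrary profile $\mathcal I$ yields $\Pi_\alpha(\mathcal I,x_{i+1})-\theta_{i+1}n_{\mathcal I}\geq\lambda_i\bigl(\Pi_\alpha(\mathcal I,x_i)-\theta_i n_{\mathcal I}\bigr)$, so $\Pi_\alpha(\mathcal I,x_i)\geq\theta_i n_{\mathcal I}$ implies $\Pi_\alpha(\mathcal I,x_{i+1})\geq\theta_{i+1}n_{\mathcal I}$; since this holds for all $i$ and all $\mathcal I$, the vectors $\alpha$ and $\theta$ are compatible.

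For $(1)\Rightarrow(3)$, the starting point is that deleting the left endpoint $x_\ell$ of a voter's interval $[x_\ell,x_r]$ leaves $\Pi_\alpha(\cdot,x_j)$ unchanged for every $x_j\neq x_\ell$ and decreases $\Pi_\alpha(\cdot,x_\ell)$ by $\alpha_\ell$, while deleting the right endpoint $x_r$ leaves $\Pi_\alpha(\cdot,x_j)$ unchanged for every $x_j\neq x_{r-1}$ and increases $\Pi_\alpha(\cdot,x_{r-1})$ by $1-\alpha_{r-1}$. From here a case distinction on the index of $f(\mathcal I)$ completes the argument essentially as in the proof sketch of \Cref{thm:characterization}: if the affected alternative is not the winner, the winner cannot change; if $f(\mathcal I)=x_\ell$ and $x_\ell$ loses eligibility, then compatibility of $\alpha$ and $\theta$ applied to $\mathcal I$ makes $x_{\ell+1}$ eligible, so the winner becomes $x_{\ell+1}$; and if $f(\mathcal I)=x_r$ the only alternative that can newly become eligible is $x_{r-1}$. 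The one point requiring care is a right-endpoint deletion when $f(\mathcal I)=x_k$ with $k>r$: here I would apply compatibility to the modified profile $\mathcal I^{i\downarrow x_r}$ at index $r-1$, observing that $\Pi_\alpha(\mathcal I^{i\downarrow x_r},x_{r-1})\geq\theta_{r-1}n_{\mathcal I}$ would force $\Pi_\alpha(\mathcal I^{i\downarrow x_r},x_r)=\Pi_\alpha(\mathcal I,x_r)\geq\theta_r n_{\mathcal I}$, contradicting $r<k$; hence $x_{r-1}$ stays ineligible and the winner is unchanged.

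The hard part is $(3)\Rightarrow(2)$, which I would prove contrapositively: if $(2)$ fails at some $i\in\{1,\dots,m-2\}$, I construct a profile together with a sequence of endpoint deletions witnessing that $f$ is not robust. A failure of $(2)$ forces $\alpha_{i+1}<\alpha_i$ and $\alpha_i>0$ (if $\alpha_i=0$ the right-hand side of $(2)$ is $\le 0\le\alpha_{i+1}-\alpha_i$, so $(2)$ holds), and the active term of the $\max$ determines which of two constructions to use. I would build a profile $\mathcal I^\star$ out of voters reporting $[x_i,x_{i+2}]$ together with ``anchor'' singletons --- $\{x_{i+2}\}$ when $\alpha_i\geq\theta_i$ and $\{x_i\}$ when $\alpha_i<\theta_i$ --- in integer proportions for which $f(\mathcal I^\star)=x_i$ while $\Pi_\alpha(\mathcal I^\star,x_{i+1})<\theta_{i+1}n_{\mathcal I^\star}$ and $\Pi_\alpha(\mathcal I^\star,x_j)<\theta_j n_{\mathcal I^\star}$ for all $j<i$; the interval of admissible proportions turns out to be non-empty precisely because $(2)$ fails. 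Then I would delete $x_i$ from the $[x_i,x_{i+2}]$-voters one voter at a time: each deletion changes $\Pi_\alpha$ only at $x_i$, and once all of them are done $\Pi_\alpha(\cdot,x_i)$ has dropped below $\theta_i n_{\mathcal I^\star}$, so at some intermediate step the winner leaves $x_i$; as $x_{i+1}$ and every $x_j$ with $j<i$ remain ineligible throughout, that step moves the winner to some alternative strictly right of $x_{i+1}$, which robustness forbids for a single left-endpoint deletion from a voter whose left endpoint is the current winner. I expect the main difficulty to lie in this last step --- choosing the proportions correctly, splitting into the sub-cases $\alpha_i\gtrless\theta_i$, and verifying the boundary values $\alpha_i\in\{0,1\}$ --- rather than in any conceptual subtlety.
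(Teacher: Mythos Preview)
Your proposal is correct and follows the same cyclic scheme $(2)\Rightarrow(1)\Rightarrow(3)\Rightarrow(2)$ as the paper. The arguments for $(1)\Rightarrow(3)$ and $(3)\Rightarrow(2)$ are essentially identical to the paper's: the paper also tracks how a single endpoint deletion affects $\Pi_\alpha$ at exactly one index and uses compatibility (and its contrapositive) for the delicate right-endpoint case; and for $(3)\Rightarrow(2)$ it also constructs profiles with voters reporting $\{x_i,x_{i+1},x_{i+2}\}$ together with singleton anchors (it uses $\{x_m\}$ rather than $\{x_{i+2}\}$ in the first sub-case, which makes no difference) and obtains the contradiction by deleting $x_i$ voter by voter.

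The one place where your route genuinely differs is $(2)\Rightarrow(1)$. The paper argues primally: it partitions the voters relative to $x_i$ into sets $L$, $Z$, $R$, and then explicitly minimises $\ell+z\,\alpha_{i+1}$ subject to $\ell+z\,\alpha_i\ge\theta_i$ (and, in the second sub-case, $\ell+z\le 1$), computing the minimum in each case and checking it is at least $\theta_{i+1}$. Your multiplier approach is the dual of this: exhibiting a single $\lambda_i\in[\theta_{i+1}/\theta_i,\,(1-\theta_{i+1})/(1-\theta_i)]$ with $\pi_\alpha(X,x_{i+1})-\theta_{i+1}\ge\lambda_i\bigl(\pi_\alpha(X,x_i)-\theta_i\bigr)$ for every $X\in\Lambda$ is precisely a dual certificate for that LP, and summing over voters replaces the minimisation. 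Both are short; your version avoids solving the LP by hand and makes it transparent why the two sub-cases $\alpha_i\gtrless\theta_i$ pick opposite endpoints of the admissible $\lambda_i$-interval.
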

\begin{proof}
Fix a threshold vector $\theta$ and a weight vector $\alpha$, and let $f$ be defined as in the lemma. We will break the lemma down into three implications. Firstly, we show that the inequality given in Statement (2) indeed implies that $\alpha$ and $\theta$ are compatible. Secondly, we will prove that if $\alpha$ and $\theta$ are compatible, then $f$ is robust. Lastly, we will prove that, if $f$ is robust, the inequality in Statement (2) has to hold. By the transitivity of logical implications, this proves the lemma.\bigskip

\textbf{Claim 1: (2) implies (1).} Suppose that $\alpha_{i+1}-\alpha_i\geq (\theta_{i+1}-\theta_i)\max(\frac{\alpha_i}{\theta_i}, \frac{1-\alpha_i}{1-\theta_i})$ for all $i\in \{1,\dots, m-2\}$ and consider an arbitrary interval profile $\mathcal I\in\dom$ and an alternative $x_i\in A\setminus \{x_m\}$ such that $\Pi_\alpha(\mathcal I, x_i)\geq \theta_i n_\mathcal{I}$. We will show that $\Pi_\alpha(\mathcal I, x_{i+1})\geq \theta_{i+1}n_\mathcal{I}$, which proves that $\alpha$ and $\theta$ are compatible. For this, we first observe that this implication holds trivially for $x_{m-1}$ since $\Pi_\alpha(\mathcal I, x_m)=n_\mathcal{I}$ for all profiles $\mathcal I$. We thus assume that $x_i\neq x_{m-1}$ and we partition the voters in three sets regarding their position with respect to $x_i$: 
$L=\{j\in N_{\mathcal I}\colon I_j\subseteq[x_1, x_i]\}$ are the voters that are weakly left of $x_i$, $R=\{j\in N_{\mathcal I}\colon I_j\subseteq[x_{i+1}, x_m]\}$ are the voters that are fully right of $x_i$, and $Z=N\setminus (L\cup R)$ is the set of voters who report an interval $I_j$ with $\{x_i,x_{i+1}\}\subseteq I_j$.
    By the definition of these sets, we have that $\pi_\alpha(I_j,x_i)=1$ for all $j\in L$, $\pi_\alpha(I_j,x_i)=0$ for all $j\in R$, and $\pi_\alpha(I_j,x_i)=\alpha_i$ for all $j\in Z$. 
    Moreover, it holds that $\pi_\alpha(I_j,x_{i+1})=1$ for all $j\in L$, $\pi_\alpha(I_j, x_{i+1})\geq \alpha_{i+1}$ for all $j\in Z$, and $\pi_\alpha(I_j,x_{i+1})\geq 0$ for all $j\in R$. 
    Now, we define $\ell_{\mathcal I}=\frac{|L|}{n_\mathcal{I}}$, $r_{\mathcal I}=\frac{|R|}{n_\mathcal{I}}$, and $z_{\mathcal I}=\frac{|Z|}{n_\mathcal{I}}$. Since $\Pi_\alpha(\mathcal I, x_i)\geq \theta_in_\mathcal{I}$, it holds that $\ell_{\mathcal I} + z_{\mathcal I} \alpha_i \geq \theta_i$ and we aim to show that $\ell_{\mathcal I} + z_{\mathcal I}\alpha_{i+1} \geq \theta_{i+1}$ because this implies that $\Pi_\alpha(\mathcal I, x_{i+1})\geq \theta_{i+1}n_\mathcal{I}$. First, if $\alpha_{i+1}\geq \alpha_i$, this follows immediately as $\theta_i\geq \theta_{i+1}$. We hence assume that $\alpha_{i+1}<\alpha_i$. Since $\alpha_{i+1}\geq 0$ by the definition of a weight vector, this further implies that $\alpha_i>0$. 
    Next, we proceed with a case distinction with respect to whether $\frac{\alpha_i}{\theta_i}\geq \frac{1-\alpha_i}{1-\theta_i}$.\medskip

    \emph{Case 1:} We first assume that $\frac{\alpha_i}{\theta_i}\geq \frac{1-\alpha_i}{1-\theta_i}$. In this case, we will minimize the term $\ell+z \alpha_{i+1}$ subject to $\ell+z\alpha_i \geq \theta_i$, $\ell\geq 0$, and $z\geq 0$. To this end, we observe that we can set $\ell=0$: if $\ell>0$, we define $z'=z+\frac{\ell}{\alpha_i}$ and $\ell'=0$. It is easy to see that our constraints are still satisfied. Moreover, since $\alpha_{i+1}<\alpha_i$, it holds that 
    \begin{align*}
        \ell+z \alpha_{i+1}-\ell'-z'\alpha_{i+1}&=\ell+z\alpha_{i+1}-\left(z+\frac{\ell}{\alpha_i}\right)\alpha_{i+1}
        = \ell\left(1-\frac{\alpha_{i+1}}{\alpha_i}\right)
        >0. 
    \end{align*}

    Hence, we need to set $\ell=0$ to minimize $\ell + z\alpha_{i+1}$. In turn, to satisfy that $\ell+z\alpha_i\geq \theta_i$, we set $z=\frac{\theta_i}{\alpha_i}$, i.e., as the minimal value such that $z\alpha_i\geq \theta_i$ is satisfied. As a consequence of this analysis, the optimal value our linear program is $\frac{\theta_i \alpha_{i+1}}{\alpha_i}$. Since  $\ell_{\mathcal I}$ and $z_{\mathcal I}$ are a feasible solution to this linear program, it follows that $\ell_{\mathcal I} + z_\mathcal{I}\alpha_{i+1}\geq \frac{\theta_i \alpha_{i+1}}{\alpha_i}$. Finally, because $\frac{\alpha_i}{\theta_i}\geq \frac{1- \alpha_i}{1-\theta_i}$ by assumption, we have that $\alpha_{i+1}-\alpha_i\geq (\theta_{i+1}-\theta_i)\max(\frac{\alpha_i}{\theta_i}, \frac{1-\alpha_i}{1-\theta_i})=(\theta_{i+1}-\theta_i) \frac{\alpha_i}{\theta_i}$. We now conclude that $\Pi_\alpha(\mathcal I, x_{i+1})\geq \theta_{i+1}n_\mathcal{I}$ as
    \begin{align*}
        \frac{\Pi_\alpha(\mathcal I, x_{i+1})}{n_\mathcal{I}}\geq \ell_\mathcal{I}+z_\mathcal{I}\alpha_{i+1}\geq\frac{\theta_i \alpha_{i+1}}{\alpha_i}=\theta_i + (\alpha_{i+1}-\alpha_i)\frac{\theta_i}{\alpha_{i}}\geq \theta_i + (\theta_{i+1}-\theta_i) =\theta_{i+1}.
    \end{align*}

\emph{Case 2:} For our second case, we suppose that $\frac{\alpha_i}{\theta_i}< \frac{1-\alpha_i}{1-\theta_i}$. Equivalently, this means that $\theta_i> \alpha_i$, which implies that $1-\alpha_i>0$ as $\theta_i<1$. We will use a similar approach as in the first case and minimize the term $\ell+z\alpha_{i+1}$ subject to $\ell+z\alpha_i \geq \theta_i$, $\ell\geq 0$, $z\geq 0$, and $\ell+z\leq 1$. Just as in Case 1, it can be shown that the objective value of this linear program is minimal if $\ell$ is chosen to be minimal. However, since $\theta_i> \alpha_i$ and we require that $\ell+z\leq 1$, we cannot set $\ell$ to $0$ without violating that $\ell+z\alpha_i \geq \theta_i$. Instead, by using a similar reasoning as in the first case, one can compute that the values of $\ell$ and $z$ that minimize $\ell+z\alpha_{i+1}$ subject to our constraints are $\ell=1-\frac{1-\theta_i}{1-\alpha_i}$ and $z=\frac{1-\theta_i}{1-\alpha_i}$. For these values, our objective value is 
\begin{align*}
    1-\frac{1-\theta_i}{1- \alpha_i}+\frac{1-\theta_i}{1-\alpha_i} \alpha_{i+1}&=1-(1- \alpha_i)\frac{1-\theta_i}{1-\alpha_i} + \frac{1-\theta_i}{1- \alpha_i} (\alpha_{i+1}-\alpha_i)\\
    &=\theta_i + \frac{1-\theta_i}{1- \alpha_i} (\alpha_{i+1}-\alpha_i).
\end{align*}

Since $\ell_{\mathcal I}$ and $z_\mathcal{I}$ form a feasible solution to our linear program, we infer that $\frac{\Pi_{\alpha}(\mathcal I, x_{i+1})}{n_{\mathcal I}}$ is lower-bounded by this expression. Finally, since $\frac{ \alpha_i}{\theta_i}< \frac{1-\alpha_i}{1-\theta_i}$, we derive that $\alpha_{i+1}-\alpha_i\geq (\theta_{i+1}-\theta_i)\max(\frac{\alpha_i}{\theta_i}, \frac{1-\alpha_i}{1-\theta_i})=(\theta_{i+1}-\theta_i) \frac{1-\alpha_i}{1-\theta_i}$. Combined with our previous insight, this means that 
\begin{align*}
    \frac{\Pi_\alpha(\mathcal I, x_{i+1})}{n_\mathcal{I}}
    \geq \theta_i + \frac{1-\theta_i}{1-\alpha_i} (\alpha_{i+1}-\alpha_i)
    \geq \theta_i+(\theta_{i+1}-\theta_i)\cdot \frac{1-\alpha_i}{1-\theta_i} \cdot \frac{1-\theta_i}{1-\alpha_i}
    =\theta_{i+1}.
\end{align*}\medskip

\textbf{Claim 2: (1) implies (3).} We will now show that $f$ is robust if $\alpha$ and $\theta$ are compatible. To this end, consider an arbitrary interval profile $\mathcal I$, let $i\in N_{\mathcal I}$ denote a voter, and let $x_\ell$ and $x_r$ denote alternatives such that $x_\ell\ssucc x_r$ and $I_i=[x_\ell, x_r]$. 
First, we analyze the profile $\mathcal{I}^{i\downarrow x_\ell}$ and note that $\Pi_\alpha(\mathcal{I}^{i\downarrow x_\ell}, x_\ell)\leq \Pi_\alpha(\mathcal{I}, x_\ell)$ because $\pi_\alpha(\mathcal I, x_\ell)=\alpha_i\geq 0=\pi_\alpha(\mathcal{I}^{i\downarrow x_\ell}, x_\ell)$. 
Moreover, $\Pi_\alpha(\mathcal{I}^{i\downarrow x_\ell}, x_h)= \Pi_\alpha(\mathcal{I}, x_h)$ for all $x_h\in A\setminus \{x_{\ell}\}$ because the relative position of no voter changed with respect to $x_h$. 
Now, if $f(\mathcal I)=x_j$ for some alternative $x_j\neq x_\ell$, it is easy to show that $f(\mathcal{I}^{i\downarrow x_\ell})=x_j$, too. 
In more detail, if $x_j\ssucc x_\ell$, then the value $\Pi_\alpha(\mathcal I, x_\ell)$ does not matter as $\Pi_\alpha(\mathcal I^{i\downarrow x_\ell}, x_j)=\Pi_\alpha(\mathcal I, x_j)\geq \theta_j n_\mathcal{I}$ and $x_j\ssucc x_\ell$. 
On the other hand, if $x_\ell\ssucc x_j$, then $\Pi_\alpha(\mathcal{I}^{i\downarrow x_\ell}, x_\ell)\leq \Pi_\alpha(\mathcal{I}, x_\ell)< \theta_\ell n_\mathcal{I}$ and the outcome again remains the same. Finally, if $f(\mathcal I)=x_\ell$, then $\Pi_\alpha(\mathcal I^{i\downarrow x_\ell}, x_h)=\Pi_\alpha(\mathcal I^{i\downarrow x_\ell}, x_h)<\theta_hn_\mathcal{I}$ for all $x_h\ssucc x_\ell$, so $x_\ell\ssucceq f(\mathcal I^{i\downarrow x_\ell})$. 
Moreover, due to the compatibility of $\alpha$ and $\theta$, it holds that $\Pi_\alpha(\mathcal{I}^{i\downarrow x_\ell}, x_{\ell+1})=\Pi_\alpha(\mathcal{I}, x_{\ell+1})\geq \theta_{\ell+1}n_\mathcal{I}$ because $f(\mathcal I)=x_\ell$ implies that $\Pi_\alpha(\mathcal I, x_\ell)\geq \theta_\ell n_\mathcal{I}$. This means that $f(\mathcal{I}^{i\downarrow x_\ell})\ssucceq x_{\ell+1}$, so $f(\mathcal{I}^{i\downarrow x_\ell})\in \{x_\ell, x_{\ell+1}\}$ and robustness is satisfied.

Next, consider the profile $\mathcal{I}^{i\downarrow x_r}$, for which $\pi_\alpha(\mathcal{I}^{i\downarrow x_r}, x_{r-1})=1\geq \pi_\alpha(\mathcal{I}, x_{r-1})$ and $\pi_\alpha(\mathcal{I}^{i\downarrow x_r}, x_{r})=1= \pi_\alpha(\mathcal{I}, x_{r})$. This means that $\Pi_\alpha(\mathcal{I}^{i\downarrow x_r}, x_{r-1})\geq \Pi_{\alpha}(\mathcal{I}, x_{r-1})$ and $\Pi_\alpha(\mathcal{I}^{i\downarrow x_r}, x_{h})= \Pi_{\alpha}(\mathcal{I}, x_{h})$ for all $x_h\in A\setminus \{x_{r-1}\}$. We first assume that $f(\mathcal{I})=x_j\neq x_{r}$. If $x_j\ssucc x_{r}$, then $f(\mathcal{I}^{i\downarrow x_r})=f(\mathcal{I})$ because $\Pi_\alpha(\mathcal{I}^{i\downarrow x_r}, x_{h})= \Pi_{\alpha}(\mathcal{I}, x_{h})$ for all $x_h$ with $x_h\ssucc x_{r-1}$ and $\Pi_\alpha(\mathcal{I}^{i\downarrow x_r}, x_{r-1})\geq \Pi_{\alpha}(\mathcal{I}, x_{r-1})$. On the other hand, if $x_{r}\ssucc x_j$, we infer that $\Pi_\alpha(\mathcal{I}^{i\downarrow x_r}, x_r)=\Pi_\alpha(\mathcal I, x_r)<\theta_r n_\mathcal{I}$. By the contrapositive of our compatibility condition, this means that $\Pi_\alpha(\mathcal{I}^{i\downarrow x_r}, x_{r-1})<\theta_{r-1}n_\mathcal{I}$. Hence, it is now easy to derive that $f(\mathcal I)=f(\mathcal{I}^{i\downarrow x_r})$ since $\Pi_\alpha(\mathcal{I}^{i\downarrow x_r}, x_{h})= \Pi_{\alpha}(\mathcal{I}, x_{h})$ for all alternatives $x_h\in A\setminus \{x_{r-1}\}$. Finally, assume that $f(\mathcal I)=x_r$. This means that $\Pi_\alpha(\mathcal{I}^{i\downarrow x_r}, x_{h})= \Pi_{\alpha}(\mathcal{I}, x_{h})<\theta_hn_\mathcal{I}$ for all $x_h$ with $x_h\ssucc x_{r-1}$ and $\Pi_\alpha(\mathcal{I}^{i\downarrow x_r}, x_{r})= \Pi_{\alpha}(\mathcal{I}, x_{r})\geq \theta_r n_\mathcal{I}$. It follows that $f(\mathcal I^{i\downarrow x_r})\in \{x_{r-1}, x_r\}$ and robustness holds again. \bigskip

\textbf{Claim 3: (3) implies (2).} Lastly, we will show the robustness of $f$ implies the inequalities of Statement (2). We will prove the contrapositive of this claim: if Statement (2) fails, then $f$ is not robust. Thus, we assume that there is an index $i\in \{1,\dots, m-2\}$ with $\alpha_{i+1}-\alpha_i<(\theta_{i+1}-\theta_{i})\max(\frac{\alpha_i}{\theta_{i}}, \frac{1-\alpha_i}{1-\theta_{i}})$. We proceed with a case distinction regarding $\max(\frac{\alpha_i}{\theta_{i}}, \frac{1-\alpha_i}{1-\theta_{i}})$.\medskip

\emph{Case 1:} First assume that $\frac{\alpha_i}{\theta_{i}}\geq \frac{1-\alpha_i}{1-\theta_{i}}$. Equivalently, this assumption means that $\alpha_i\geq \theta_i$, so it follows that $\alpha_i>0$. We define $\delta=(\theta_{i+1}-\theta_{i})\cdot \frac{\alpha_i}{\theta_{i}}-(\alpha_{i+1}-\alpha_i)$ and we choose a value $\epsilon>0$ such that $\delta\frac{\theta_i}{\alpha_i}>\epsilon \alpha_{i+1}$. Moreover, let $v\in \mathbb{Q}\cap (0,1]$ denote a rational value such that $\frac{\theta_i}{\alpha_i}\leq v\leq \frac{\theta_i}{\alpha_i} + \epsilon$. Finally, we let $w_1,w_2\in\mathbb{N}_0$ denote two integers such that $v=\frac{w_1}{w_1+w_2}$ and consider the interval profile $\mathcal{I}$ where $w_1$ voters report $\{x_i, x_{i+1}, x_{i+2}\}$ and $w_2$ voters report $\{x_m\}$. It can be computed that 
\begin{align*}
    \Pi_\alpha(\mathcal I, x_i)&=w_1 \cdot \alpha_i = v \cdot n_\mathcal{I}\cdot  \alpha_i\geq \frac{\theta_i}{\alpha_i}\cdot n_\mathcal{I}\cdot \alpha_i= \theta_i n_\mathcal{I}.
\end{align*}

Since $\Pi_\alpha(\mathcal I,x_h)=0$ for all $x_h$ with $x_h\ssucc x_i$, we conclude that $f(\mathcal I)=x_i$. 

Next, let $\mathcal I'$ denote the profile where $w_1$ voters report $\{x_{i+1}, x_{i+2}\}$ and $w_2$ voters report $\{x_m\}$. Robustness from $\mathcal I$ to $\mathcal I'$ postulates that $f(\mathcal I')\in \{x_i, x_{i+1}\}$. Moreover, it holds that $\Pi_\alpha(\mathcal I', x_i)=0$ as no voter approves an alternative left of $x_{i+1}$ in $\mathcal I'$, so $f(\mathcal I')\neq x_i$.
However, the subsequent computations show that $\Pi_\alpha(\mathcal I',x_{i+1})=\Pi_\alpha(\mathcal I,x_{i+1})<\theta_{i+1}n_\mathcal{I}$. This means that $f(\mathcal I')\neq x_{i+1}$ and robustness is violated.
\begin{align*}
    \Pi_\alpha(\mathcal I', x_{i+1})
    &= v \cdot n_\mathcal{I}\cdot  \alpha_{i+1}\\
    &\leq \left (\frac{\theta_i}{\alpha_i}+\epsilon \right)\cdot n_\mathcal{I} \cdot \alpha_{i+1}\\
    &= \theta_i \cdot n_\mathcal{I}+(\alpha_{i+1}-\alpha_i)\cdot\frac{\theta_i n_\mathcal{I}}{\alpha_i}+\epsilon \cdot n_\mathcal{I}\cdot \alpha_{i+1}\\
    &= \theta_i \cdot n_\mathcal{I}+\left((\theta_{i+1}-\theta_i)\cdot \frac{\alpha_i}{\theta_i}-\delta\right)\cdot\frac{\theta_i n_\mathcal{I}}{\alpha_i}+\epsilon \cdot n_\mathcal{I}\cdot \alpha_{i+1}\\
    &=n_\mathcal{I}\left(\theta_i+\theta_{i+1}-\theta_i-\delta\frac{\theta_i}{\alpha_i}+\epsilon\cdot\alpha_{i+1}\right)\\
    &<\theta_{i+1}n_\mathcal{I}.
\end{align*}

Here, the second line uses the definition of $v$, the third one rearranges the terms, and the fourth one applies the definition of $\delta$. The fifth line is again simple calculus, and the last inequality follows because $\delta\frac{\theta_i}{\alpha_i}>\epsilon\alpha_{i+1}$. 
\medskip

\emph{Case 2:} As the second case, we assume that $\frac{\alpha_i}{\theta_{i}}< \frac{1- \alpha_i}{1-\theta_{i}}$. This is equivalent to $\theta_i>\alpha_i$, so we derive that $0<1-\theta_i<1-\alpha_i$. Next, we define $\delta=(\theta_{i+1}-\theta_{i})\cdot \frac{1-\alpha_i}{1-\theta_i}-(\alpha_{i+1}-\alpha_i)$ and we choose $\epsilon>0$ such that $\delta\frac{1-\theta_i}{1-\alpha_i}>\epsilon(1-\alpha_{i+1})$. Moreover, we observe that $\frac{1-\theta_i}{1-\alpha_i}>0$ since $0<\theta_i<1$ and $1-\alpha_i>0$, and that $\frac{1-\theta_i}{1-\alpha_i}<1$ since $0<1-\theta_i<1-\alpha_i$. Hence, there is a rational value $v\in\mathbb{Q}\cap (0,1)$ with $\frac{1-\theta_i}{1- \alpha_i}-\epsilon\leq v\leq\frac{1-\theta_i}{1-\alpha_i}$. Finally, let $w_1,w_2\in\mathbb{N}$ denote two integers such that $v=\frac{w_1}{w_1+w_2}$ and consider the profile $\mathcal I$ where $w_1$ voters report $\{x_i, x_{i+1}, x_{i+2}\}$ and $w_2$ voters report $\{x_i\}$. We first compute that 
\begin{align*}
    \Pi_\alpha(\mathcal I, x_i)
    =w_2+w_1\alpha_i
    =n_\mathcal{I}(1-(1-\alpha_i)v)
    \geq n_\mathcal{I}\left(1-(1-\alpha_i)\frac{1-\theta_i}{1-\alpha_i}\right)
    =\theta_i n_\mathcal{I}.
\end{align*}

Here, the inequity in the third step follows because $v\leq\frac{1-\theta_i}{1-\alpha_i}$ and $(1-\alpha_i)> 0$. Since no voter reports an alternative left of $x_i$, this shows that $f(\mathcal I)=x_i$. 

Next, let $\mathcal I'$ denote the profile where $w_1$ voters report $\{x_{i+1}, x_{i+2}\}$ and $w_2$ voters report $\{x_{i+1}\}$. Repeatedly applying robustness from $\mathcal I$ to $\mathcal I'$ shows that $f(\mathcal I')\in \{x_i, x_{i+1}\}$. In particular, for the voters deviating from $\{x_i\}$ to $\{x_{i+1}\}$, we can make an intermediate step by expanding their intervals to $\{x_i, x_{i+1}\}$. On the other hand, we derive that $f(\mathcal I')\neq x_i$ because no voter reports an alternative $x_h$ with $x_h\ssucceq x_i$. Finally, we show next that $\Pi_\alpha(\mathcal I', x_{i+1})<\theta_{i+1}n_\mathcal{I'}$, which proves that $f(\mathcal I)\neq x_{i+1}$. Thus, $f$ fails robustness. 
\begin{align*}
    \Pi_\alpha(\mathcal I', x_{i+1})
    &=w_2+w_1\alpha_{i+1}\\
    &=n_\mathcal{I}(1-(1-\alpha_{i+1})v)\\
    &\leq n_\mathcal{I}\left(1-(1-\alpha_{i+1}) \left(\frac{1-\theta_i}{1-\alpha_i}-\epsilon\right)\right)\\
    &= n_\mathcal{I}\left(1-(1-\alpha_{i})\cdot \frac{1-\theta_i}{1-\alpha_i}+(\alpha_{i+1}-\alpha_i)\cdot \frac{1-\theta_i}{1-\alpha_i} + (1-\alpha_{i+1})\epsilon \right)\\
    &=n_\mathcal{I}\left(\theta_i+\left((\theta_{i+1}-\theta_i)\cdot\frac{1-\alpha_i}{1-\theta_i}-\delta\right)\cdot \frac{1-\theta_i}{1-\alpha_i} + (1-\alpha_{i+1})\epsilon \right)\\
    &=n_\mathcal{I}\left(\theta_{i+1}-\delta\cdot \frac{1-\theta_i}{1-\alpha_i}+(1-\alpha_{i+1})\epsilon\right)\\
    &<\theta_{i+1}n_\mathcal{I}.
\end{align*}

The first line uses the definition of $\Pi_\alpha$, the second the definition of $w_1$ and $w_2$, and the third inequality that $\frac{1-\theta_i}{1- \alpha_i}-\epsilon\leq v$. Next, we rearrange our formula and substitute the definition of $\delta$ in the fifth line. The remaining two lines follow from simple calculus and the definition of $\epsilon$. 
\end{proof}

Based on \Cref{lem:robust}, we will now show that position-threshold rules indeed satisfy all required axioms. 

\begin{lemma}\label{lem:allaxioms}
    Every position-threshold rule satisfies anonymity, unanimity, robustness, reinforcement, and right-biased continuity. 
\end{lemma}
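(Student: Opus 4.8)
The plan is to fix a position-threshold rule $f$ with threshold vector $\theta$ and weight vector $\alpha$ --- which are compatible by definition --- and to verify the five axioms one by one, using throughout that $\Pi_\alpha(\mathcal I,x_i)=\sum_{k\in N_{\mathcal I}}\pi_\alpha(I_k,x_i)$ is additive over voters and does not depend on their names. Anonymity is then immediate, since relabelling voters changes neither any $\Pi_\alpha(\mathcal I,x_i)$ nor $n_{\mathcal I}$. For unanimity, if $I_k=\{x_j\}$ for all $k\in N_{\mathcal I}$ then $\pi_\alpha(\{x_j\},x_j)=1$ and $\pi_\alpha(\{x_j\},x_h)=0$ for every $x_h\ssucc x_j$, so $\Pi_\alpha(\mathcal I,x_j)=n_{\mathcal I}\ge\theta_j n_{\mathcal I}$ while $\Pi_\alpha(\mathcal I,x_h)=0<\theta_h n_{\mathcal I}$ for all $x_h\ssucc x_j$ (here $\theta_h>0$ is used), hence $f(\mathcal I)=x_j$. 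For reinforcement, if $\mathcal I$ and $\mathcal I'$ are voter-disjoint with $f(\mathcal I)=f(\mathcal I')=x_i$, additivity gives $\Pi_\alpha(\mathcal I+\mathcal I',x_i)\ge\theta_i n_{\mathcal I}+\theta_i n_{\mathcal I'}=\theta_i n_{\mathcal I+\mathcal I'}$ and $\Pi_\alpha(\mathcal I+\mathcal I',x_h)<\theta_h n_{\mathcal I+\mathcal I'}$ for every $x_h\ssucc x_i$, so $f(\mathcal I+\mathcal I')=x_i$. Robustness requires nothing new: it is precisely the implication ``(1)$\Rightarrow$(3)'' of \Cref{lem:robust}, applied to the compatible pair $\theta,\alpha$.

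The one axiom that needs work is right-biased continuity, which I would handle by a case distinction on the relative position of $x_i:=f(\mathcal I)$ and $x_j:=f(\mathcal I')$ for voter-disjoint $\mathcal I,\mathcal I'$. The recurring observation is that $\Pi_\alpha(\lambda\mathcal I+\mathcal I',x_h)=\lambda\,\Pi_\alpha(\mathcal I,x_h)+\Pi_\alpha(\mathcal I',x_h)$ and $n_{\lambda\mathcal I+\mathcal I'}=\lambda n_{\mathcal I}+n_{\mathcal I'}$, so any strict inequality $\Pi_\alpha(\mathcal I,x_h)<\theta_h n_{\mathcal I}$, and any inequality $\Pi_\alpha(\mathcal I,x_h)\ge\theta_h n_{\mathcal I}$ whose per-copy slack $\Pi_\alpha(\mathcal I,x_h)-\theta_h n_{\mathcal I}$ is strictly positive, survives for the combined profile once $\lambda$ is large enough (the relevant slack scales with $\lambda$ while the contribution of $\mathcal I'$ stays bounded by $n_{\mathcal I'}$); since there are only finitely many alternatives, one $\lambda$ works for all of them at once. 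In case (i), $x_j\ssucceq x_i$: iterating compatibility from $x_j$ rightwards yields $\Pi_\alpha(\mathcal I',x_i)\ge\theta_i n_{\mathcal I'}$, so $\Pi_\alpha(\lambda\mathcal I+\mathcal I',x_i)\ge\theta_i n_{\lambda\mathcal I+\mathcal I'}$ for \emph{every} $\lambda$, while for large $\lambda$ we also get $\Pi_\alpha(\lambda\mathcal I+\mathcal I',x_h)<\theta_h n_{\lambda\mathcal I+\mathcal I'}$ for all $x_h\ssucc x_i$; hence $f(\lambda\mathcal I+\mathcal I')=x_i$, which is condition (i). In case (ii), $x_i\ssucc x_j$: let $x_s$ denote the right-most alternative contained in some interval $I_k$ with $k\in N_{\mathcal I}$, so $\pi_\alpha(I_k,x_s)=1$ for all $k\in N_{\mathcal I}$ and thus $\Pi_\alpha(\mathcal I,x_s)=n_{\mathcal I}$ with per-copy slack $(1-\theta_s)n_{\mathcal I}>0$; then for large $\lambda$ the combined profile meets its threshold at $x_s$, giving $f(\lambda\mathcal I+\mathcal I')\ssucceq x_s$, and since $\Pi_\alpha(\mathcal I,x_h)<\theta_h n_{\mathcal I}$ for all $x_h\ssucc x_i$, taking $\lambda$ still larger also yields $x_i\ssucceq f(\lambda\mathcal I+\mathcal I')$. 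With $x_s\in\bigcup_{k\in N_{\mathcal I}}I_k$ this is exactly $f(\mathcal I)\ssucceq f(\lambda\mathcal I+\mathcal I')\ssucceq x_s$, i.e.\ condition (ii).

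The main point requiring care --- more bookkeeping than a genuine obstacle --- is the right-biased-continuity argument: one must keep the orientation of $\ssucc$ straight, distinguish strict from non-strict threshold inequalities (non-strict ones survive the limit only when their per-copy slack is strictly positive, which is where $\theta_s<1$ and the particular choice of $x_s$ enter), and check that a single $\lambda$ dominates the finitely many conditions indexed by the alternatives $x_h\ssucc x_i$. Everything else is immediate from the additivity of $\Pi_\alpha$, the defining compatibility of $\theta$ and $\alpha$, and \Cref{lem:robust}.
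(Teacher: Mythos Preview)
Your proposal is correct and takes essentially the same approach as the paper's own proof: anonymity and unanimity are immediate, reinforcement follows from additivity of $\Pi_\alpha$, robustness is delegated to \Cref{lem:robust}, and right-biased continuity is handled by the same case split on the order of $f(\mathcal I)$ and $f(\mathcal I')$, using compatibility to propagate the threshold inequality to $x_i$ in case~(i) and taking $x_s$ as the right-most alternative reported in $\mathcal I$ in case~(ii). The only cosmetic difference is that in case~(ii) the paper observes that \emph{both} $\mathcal I$ and $\mathcal I'$ fail the threshold at every $x_h\ssucc x_i$ (since $x_h\ssucc x_i\ssucc x_j=f(\mathcal I')$), which gives $x_i\ssucceq f(\lambda\mathcal I+\mathcal I')$ for \emph{all} $\lambda$ rather than just large $\lambda$; your slack argument using only $\mathcal I$ is equally valid for the purpose at hand.
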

\begin{proof}
     Fix a threshold vector $\theta\in(0,1)^m$ and a compatible weight vector $\alpha\in [0,1]^m$ and let $f$ denote the position-threshold rule induced by these vectors. We first note that $\Pi_\alpha$ is anonymous, so $f$ also satisfies this property. Moreover, if $I_i=\{x_j\}$ for all voters in a profile $\mathcal{I}$, then $\Pi_\alpha(\mathcal I, x_j)=n_\mathcal{I}\geq \theta_j n_\mathcal I$ and $\Pi_\alpha(\mathcal I, x_h)=0<\theta_hn_\mathcal{I}$ for all $x_h\ssucc x_j$. This means that $f(\mathcal{I})=x_j$, so $f$ is unanimous. Next, \Cref{lem:robust} shows that $f$ is robust since $\alpha$ and $\theta$ are compatible.
    
    As the fourth axiom, we will prove that $f$ is reinforcing. For this, let $\mathcal I^1$ and $\mathcal I^2$ denote two profiles in $\dom$ such that $f(\mathcal I^1)=f(\mathcal I^2)=x_i$ for some alternative $x_i\in A$ and $N_{\mathcal I^1}\cap N_{\mathcal I^2}=\emptyset$. By definition of $f$, 
    it holds for $\mathcal{I}\in \{\mathcal{I}^1, \mathcal{I}^2\}$ that $\Pi_\alpha(\mathcal{I},x_i)\geq \theta_in_\mathcal{I}$ 
    and $\Pi_\alpha(\mathcal{I},x_h)< \theta_{h}n_\mathcal{I}$ for all $x_h\in A$ with $x_h\ssucc x_i$. 
    Moreover, we have for all $x\in A$ that $\Pi_\alpha(\mathcal{I}^1+\mathcal{I}^2,x)=\Pi_\alpha(\mathcal{I}^1,x)+\Pi_{\alpha}(\mathcal{I}^2,x)$. 
    Hence, it follows that $\Pi_\alpha(\mathcal{I}^1+\mathcal{I}^2,x_h)<\theta_{h}n_{\mathcal{I}^1}+\theta_{h}n_{\mathcal{I}^2}=\theta_{h}n_{\mathcal{I}^1+\mathcal{I}^2}$ for all $x_h$ with $x_h\ssucc x_i$ and $\Pi_\alpha(\mathcal{I}^1+\mathcal{I}^2,x_i)\geq \theta_{i}n_{\mathcal{I}^1}+\theta_{i}n_{\mathcal{I}^2}=\theta_{i}n_{\mathcal{I}^1+\mathcal{I}^2}$. This means that $f(\mathcal{I}^1+\mathcal{I}^2)=x_i$ and $f$ thus is reinforcing.

    Finally, we will prove that $f$ satisfies right-biased continuity. For this, we consider two profiles $\mathcal{I}^1, \mathcal{I}^2\in\dom$. First, if $f(\mathcal{I}^2)= f(\mathcal{I}^1)$, it follows by reinforcement that $f(\mathcal{I}^1+\mathcal{I}^2)=f(\mathcal{I}^1)$ and right-biased continuity is satisfied. Next, we assume that $f(\mathcal{I}^2)\ssucc f(\mathcal{I}^1)$ and we will show that there is $\lambda\in\mathbb{N}$ such that $f(\lambda \mathcal{I}^1+\mathcal{I}^2)=f(\mathcal{I}^1)$. By the definition of $f$, we derive that $\Pi_\alpha(\mathcal{I}^1,x_h)<\theta_{h}n_{\mathcal{I}^1}$ for all $x_h\in A$ with $x_h\ssucc x_i$ and $\Pi_\alpha(\mathcal{I}^1,x_i)\geq\theta_i|N_{\mathcal{I}^1}|$. 
    Moreover, it holds that $\Pi_\alpha(\mathcal{I}^2, x_j)\geq \theta_jn_{\mathcal{I}^2}$ for the alternative $x_j=f(\mathcal{I}^2)$. Next, because $\alpha$ and $\theta$ are compatible, $\Pi_\alpha(\mathcal I, x_h)\geq \theta_hn_\mathcal{I}$ implies $\Pi_\alpha(\mathcal I, x_{h+1})\geq \theta_{h+1}n_\mathcal{I}$ for all interval profiles $\mathcal I$ and all alternatives $x_h\in A\setminus \{x_m\}$. Based on this insight, we infer that $\Pi_\alpha(\mathcal{I}^2, x_i)\geq \theta_in_{\mathcal{I}^2}$ as $x_j\ssucc x_i$. 
    Now, let $\delta_h=\theta_{h}n_{\mathcal{I}^1}-\Pi_\alpha(\mathcal{I}^1,x_{h})$ for all $h\in \{1,\dots, i-1\}$ and let $\lambda\in\mathbb{N}$ denote an integer such that $\lambda\delta_h>\Pi_\alpha(\mathcal{I}^2, x_h)$ for all such $h$. 
    By the definition of $\lambda$, we derive for all $x_h$ with $x_h\ssucc x_{i}$ that
    \begin{align*}
        \Pi_\alpha(\lambda \mathcal{I}^1 + \mathcal{I}^2, x_h)
        &= \lambda\Pi_\alpha(\mathcal{I}^1, x_h) + \Pi_\alpha(\mathcal{I}^2, x_h)\\
        &=\lambda(\theta_h n_{\mathcal{I}^1}-\delta_h) + \Pi_\alpha(\mathcal{I}^2, x_h)\\
        &<\theta_h n_{\lambda \mathcal{I}^1+\mathcal{I}^2}
    \end{align*}

    This implies that $x_i\ssucceq f(\lambda \mathcal{I}^1 + \mathcal{I}^2)$. On the other hand, it is holds that
    \begin{align*}
        \Pi_\alpha(\lambda \mathcal{I}^1 + \mathcal{I}^2, x_i)&=\lambda\Pi_\alpha(\mathcal{I}^1,x_i)+\Pi_\alpha(\mathcal{I}^2,x_i)\\
        &\geq \lambda \theta_i n_{\mathcal{I}^1} + \theta_i n_{\mathcal{I}^2}\\
        &=\theta_in_{\lambda \mathcal{I}^1+\mathcal{I}^2}. 
    \end{align*}

    Hence, $f(\lambda \mathcal{I}^1 + \mathcal{I}^2)=x_i$ and right-biased continuity is satisfied. 
    
    For the second case, suppose that $f(\mathcal{I}^1)\ssucc f(\mathcal{I}^2)$ and let $x_r$ denote the right-most alternative that is reported by some voter in $\mathcal{I}^1$. We moreover let $x_i=f(\mathcal{I}^1)$, and we will show that there is $\lambda\in\mathbb{N}$ such that $x_i\ssucceq f(\lambda \mathcal{I}^1 + \mathcal{I}^2)\ssucceq x_r$. To this end, we first observe that $\Pi_\alpha(\mathcal{I}, x_j)<\theta_j n_{\mathcal I}$ for all $j<i$ and $\mathcal{I}\in \{\mathcal{I}^1, \mathcal{I}^2\}$, so analogous arguments as before show that $x_i\ssucceq f(\lambda \mathcal{I}^1 + \mathcal{I}^2)$ for all $\lambda\in\mathbb{N}$. Next, we choose $\lambda$ such that $\theta_r\leq \frac{\lambda n_{\mathcal{I}^1}}{\lambda n_{\mathcal{I}^1}+n_{\mathcal{I}^2}}$. We note that such a $\lambda$ exists as $\theta_r<1$ and $\frac{\lambda n_{\mathcal{I}^1}}{\lambda n_{\mathcal{I}^1}+n_{\mathcal{I}^2}}$ converges to $1$ as $\lambda$ increases. By the choice of $x_r$, we have that $\Pi_{\alpha}(\mathcal{I}^1, x_r)=n_{\mathcal{I}^1}$.
    Hence, we compute that     
    \begin{align*}
        \Pi_\alpha(\lambda\mathcal{I}^1 + \mathcal{I}^2, x_r) & = \lambda \Pi_\alpha(\mathcal{I}^1, x_r) + \Pi_\alpha(\mathcal{I}^2, x_r)\\
        &\geq \lambda n_{\mathcal{I}^1}\\
        &\geq \theta_r (\lambda n_{\mathcal{I}^1} + n_{\mathcal{I}^2})\\
        &= \theta_r n_{\lambda \mathcal{I}^1+\mathcal{I}^2}.
    \end{align*}

    This proves that $f(\lambda\mathcal{I}^1 + \mathcal{I}^2)\ssucceq x_r$ and thus completes the proof that $f$ satisfies right-biased continuity.
\end{proof}

\subsection{Derivation of Weight and Threshold Vectors}\label{subsec:characterizationproof}

We will next show that every voting rule on $\dom$ that satisfies anonymity, unanimity, robustness, reinforcement, and right-biased continuity is a position-threshold rule. To this end, we suppose throughout this section that $f$ is a voting rule on $\dom$ that satisfies all considered axioms, and we aim to represent $f$ as a position-threshold rule by deriving its weight and threshold vectors.

As a first step, we will show that $f$ coincides with a phantom median rule on the domain $\mathcal{D}_1^N$ where all voters of a fixed electorate $N$ report a single alternative. More formally, the domain $\mathcal{D}_1^N$ is the subset of $\Lambda^N$ given by $\mathcal{D}_1^N=\{\mathcal I\in \Lambda^N\colon \forall i\in N\colon |I_i|=1\}$. 
We will prove our claim by showing that $f$ induces a voting rule on the domain of single-peaked preferences $\mathcal{P}_{\ssucc}^N$ that satisfies anonymity, unanimity, and strategyproofness. By the characterization of \citet{Moul80a}, we then infer that $f$ is a phantom median rule for $\mathcal{P}^N_{\ssucc}$ (see also \citet{BoJo83a} or \citet{Weym11a} as Moulin uses slightly stronger axioms than we do), which will imply the desired representation of $f$ on $\mathcal{D}_1^N$. To make our proof precise, we next present the definitions of anonymity, unanimity, and strategyproofness for the domain of single-peaked preferences $\mathcal{P}_{\ssucc}^N$. We say that a voting rule $f$ on $\mathcal{P}_{\ssucc}^N$ is
\begin{itemize}[leftmargin=*, itemsep=2pt]
    \item \emph{anonymous} if $f(\pi(R))=f(R)$ for all preference profiles $\mathcal{P}_{\ssucc}^N$ and permutations $N\rightarrow N$. 
    \item \emph{unanimous} if $f(R)=x_i$ for all preference profiles $R\in \mathcal{P}_{\ssucc}^N$ and alternatives $x_i\in A$ such that all voters in $R$ report $x_i$ as their favorite alternative.
    \item \emph{strategyproof} if $f(R)\succsim f(R')$ for all profiles $R,R'\in\mathcal{R}_{\ssucc}^N$ and voters $i\in N$ such that ${\succsim_j}={\succsim_j'}$ for all $j\in N\setminus \{i\}$. 
\end{itemize}

Then, the characterization of \citet{Moul80a} states that a voting rule $f$ on $\mathcal{P}_{\ssucc}^N$ satisfies anonymity, unanimity, and strategyproofness if and only if it is a phantom median rule, i.e., there is a threshold vector $\theta\in (0,1)^m$ such that $\theta_1\geq \dots \geq \theta_m$ and $f(R)=\max_{\ssucc}\{x_i\in A\colon \Pi_\mathit{SP}(R, x_i)\geq\theta_i |N_R|\}$ for all profiles $R\in\mathcal{P}_{\ssucc}^N$.\footnote{
The standard way to state Moulin's result is that a voting rule $f$ on $\mathcal{P}_{\ssucc}^N$ satisfies anonymity, unanimity, and strategyproofness if and only if there are $|N|-1$ phantom voters who report fixed single-peaked preference relations and that $f$ chooses the top-ranked alternative of the median voter with respect to our $|N|$ original voters and the $|N|-1$ phantom voters. To arrive at our representation, we define $p_i$ as the number of phantom voters that top-rank alternative $x_i$. It can be checked that, for all profiles $R\in\mathcal{P}_{\ssucc}^N$, it holds that $f(R)=\max_{\ssucc}\{x_i\in A\colon \Pi_\mathit{SP}(R,x_i)\geq \theta_i |N_R|\}$ for the threshold vector $\theta$ given by $\theta_k=\frac{1+2\sum_{i=k+1}^m p_i}{2|N|}$ for all $k\in \{1,\dots, m\}$.
}
We are now ready to show our first lemma. For this lemma, we extend the definition of the individual and collective peak position functions to interval profiles $\mathcal I\in \mathcal{D}_1^N$ by defining $\pi_\mathit{SP}(\{x_i\}, x_j)=1$ if $x_i\ssucceq x_j$ and $\pi_\mathit{SP}(\{x_i\}, x_j)=0$ if $x_j\ssucc x_i$ for all $x_i, x_j\in A$, and $\Pi_\mathit{SP}(\mathcal I, x_j)=\sum_{i\in N_{\mathcal I}} \pi_\mathit{SP}(I_i, x_j)$ for all $x_j\in A$ and $\mathcal I\in \mathcal{D}_1^N$. 

\begin{lemma}\label{lem:MoulinD1}
    For every electorate $N\in\mathcal{F}(\mathbb{N})$, there is a threshold vector $\theta=(\theta_1,\dots, \theta_m)\in (0,1)^m$ such that $\theta_1\geq \dots \geq \theta_m$ and $f(\mathcal I)=\max_{\ssucc}\{x_i\in A\colon \Pi_\mathit{SP}(\mathcal I, x_i)\geq \theta_i n_{\mathcal I}\}$ for all interval profiles $\mathcal I\in\mathcal{D}_{1}^N$. 
\end{lemma}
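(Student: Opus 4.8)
The plan is to restrict $f$ to the singleton domain $\mathcal{D}_1^N$, view this restriction as a voting rule on the single-peaked domain, apply Moulin's characterization, and translate the resulting formula back. Fix an electorate $N$. For every strict single-peaked profile $R\in\mathcal{P}_{\ssucc}^N$, let $\mathcal I_R\in\mathcal{D}_1^N$ be the interval profile in which voter $i$ reports the interval $T(\succsim_i)$ (a singleton, since $\succsim_i$ is strict), and define $f'\colon\mathcal{P}_{\ssucc}^N\to A$ by $f'(R)=f(\mathcal I_R)$. The goal is to show that $f'$ is anonymous, unanimous, and strategyproof, and then invoke the characterization of \citet{Moul80a} recalled above.

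Next I would verify the three properties of $f'$. Anonymity is immediate: permuting the voters of $R$ permutes the voters of $\mathcal I_R$ the same way, and $f$ is anonymous. Unanimity follows from unanimity of $f$: if every voter in $R$ top-ranks $x_j$, then $I_i=\{x_j\}$ for all $i$ in $\mathcal I_R$, so $f'(R)=f(\mathcal I_R)=x_j$. For strategyproofness, let $R,R'\in\mathcal{P}_{\ssucc}^N$ differ only in the preference $\succsim_i$ of voter $i$. Then $\mathcal I_R$ and $\mathcal I_{R'}$ differ only in voter $i$'s interval, i.e., they are $i$-variants, and voter $i$'s interval in $\mathcal I_R$ equals $T(\succsim_i)$. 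Since every strict single-peaked preference relation is in particular weakly single-peaked, \Cref{prop:sincerity} (robustness implies strategyproofness, and $f$ is robust) yields $f(\mathcal I_R)\succsim_i f(\mathcal I_{R'})$, that is, $f'(R)\succsim_i f'(R')$. Hence $f'$ is strategyproof.

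By Moulin's theorem in the stated form, there is a threshold vector $\theta\in(0,1)^m$ with $\theta_1\geq\dots\geq\theta_m$ such that $f'(R)=\max_{\ssucc}\{x_i\in A\colon \Pi_\mathit{SP}(R,x_i)\geq\theta_i|N_R|\}$ for all $R\in\mathcal{P}_{\ssucc}^N$. It remains to translate this to $\mathcal{D}_1^N$. Given $\mathcal I\in\mathcal{D}_1^N$, write $I_i=\{y_i\}$ and pick a strict single-peaked profile $R\in\mathcal{P}_{\ssucc}^N$ with $T(\succsim_i)=\{y_i\}$ for every $i$ (such $R$ exists, as strict single-peaked preferences with an arbitrary peak exist). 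Then $\mathcal I_R=\mathcal I$, $|N_R|=n_\mathcal{I}$, and the extended definition of $\Pi_\mathit{SP}$ on $\mathcal{D}_1^N$ gives $\Pi_\mathit{SP}(\mathcal I,x_i)=\Pi_\mathit{SP}(R,x_i)$ for all $x_i$. Hence $f(\mathcal I)=f'(R)=\max_{\ssucc}\{x_i\in A\colon \Pi_\mathit{SP}(\mathcal I,x_i)\geq\theta_i n_\mathcal{I}\}$, which is the claim.

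The only genuinely delicate point is the strategyproofness step: one must check that a single-voter deviation on $\mathcal{P}_{\ssucc}^N$ translates into exactly an $i$-variant move on the interval domain whose manipulator's interval coincides with his set of top alternatives, so that \Cref{prop:sincerity} applies verbatim; the rest is bookkeeping. Note that reinforcement and right-biased continuity play no role in this lemma---only anonymity, unanimity, and robustness (through strategyproofness) are used.
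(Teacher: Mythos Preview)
Your proposal is correct and follows essentially the same route as the paper: define $f'$ on $\mathcal{P}_{\ssucc}^N$ via $f$, verify anonymity, unanimity, and strategyproofness, invoke Moulin's characterization, and translate back to $\mathcal{D}_1^N$. The only difference is that you obtain strategyproofness by citing \Cref{prop:sincerity}, whereas the paper re-derives this special case directly from robustness via intermediate interval profiles; your shortcut is cleaner and fully legitimate since \Cref{prop:sincerity} is proved earlier and applies verbatim to singleton $i$-variants with a strict single-peaked $\succsim_i$.
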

\begin{proof}
    We will show the lemma by reducing $f$ to a voting rule $f'$ on $\mathcal{P}_{\ssucc}^N$ that is strategyproof, anonymous, and unanimous. In turn, the characterization of \citet{Moul80a} shows that there is a vector $\theta\in (0,1)^m$ such that $\theta_1\geq \dots \geq \theta_m$ and $f'(R)=\max_{\ssucc}\{x_i\in A\colon \Pi_\mathit{SP}(R, x_i)\geq \theta_i|N_R|\}$ for all profiles $R\in\mathcal{P}_{\ssucc}^N$. 
    We thus define the interval profile $\mathcal{I}(R)$ given a single-peaked profile $R\in\mathcal{P}_{\ssucc}^N$ by ${I}(R)_i=T(\succsim_i)$ for all $i\in N$, i.e., the interval of each voter only contains his favorite alternative in $R$. Next, we set $f'(R)=f(\mathcal{I}(R))$ for all profiles $R\in\mathcal{P}_{\ssucc}^N$. We first note that $f'$ is anonymous and unanimous as $f$ satisfies these axioms. 
    
    We hence focus on showing that $f'$ is strategyproof. For this, we consider two preference profiles $R,R'\in\mathcal{P}^N_{\ssucc}$ and a voter $i\in N$ such that ${\succsim_j}={\succsim_j'}$ for all $j\in N\setminus \{i\}$. We will show that $f'(R)\succsim_i f'(R')$. To this end, let $x_i$ denote voter $i$'s favorite alternative in $R$ and $x_i'$ denote his favorite alternative in $R'$. First, if $f'(R)= x_i$, voter $i$ cannot manipulate as his favorite alternative is chosen in $R$. Without loss of generality, we will hence assume that $x_i\ssucc f'(R)$. Now, if $x_i'\ssucceq x_i$, it follows from the robustness of $f$ that $f'(R)=f'(R')$. In more detail, let $\mathcal I^*\in\Lambda^N$ denote the interval profile such that $I_i^*=[x_i', x_i]$ and $I_j^*=I(R)_j$ for all $j\in N\setminus \{i\}$. We can transform $\mathcal{I}(R)$ to $\mathcal I^*$ by one after another adding alternatives left of $x_i$ to voter $i$'s interval. Since $x_i\ssucc f'(R)=f(\mathcal I(R))$, robustness implies for all of these steps that the outcome does not change. Hence, we have that $f(\mathcal I^*)=f(\mathcal I(R))$. Finally, we can transform $\mathcal I^*$ to $\mathcal I(R')$ by one after another deleting the alternatives right of $x_i'$ from voter $i$'s interval. Since $f(\mathcal I^*)\not\in I_i^*$, robustness implies that the outcome is again not allowed to change, so we conclude that $f'(R)=f(\mathcal I^*)=f'(R')$. 
    
    Next, if $x_i\ssucc x_i'\ssucceq f'(R)$, we can use an analogous argument based on the interval $I_i^*=[x_i, x_i']$ as robustness still implies that the winner is not allowed to change. 
    Finally, if $f'(R)\ssucc x_i'$, it follows from robustness that $f'(R)\ssucceq f'(R')$. To see this, we first consider the profile $\mathcal I^1$ where voter $i$ reports $[x_i, f'(R)]$ and all other voters $j\in N\setminus \{i\}$ report $I(R)_j$. Repeatedly applying robustness shows that $f(\mathcal I(R))=f(\mathcal I^1)$. Next, let $\mathcal I^2$ denote the profile derived from $\mathcal I^1$ by assigning voter $i$ the interval $[x_i, x_i']$. We can transform $\mathcal I^1$ to $\mathcal I^2$ by adding one after another the alternatives right of $f'(R)$ to voter $i$'s interval. Robustness implies that the winner can only move to the right, i.e., that $f(\mathcal I^1)\ssucceq f(\mathcal I^2)$. 
    As the last step, we transform $\mathcal I^2$ to $\mathcal I(R')$ by one after another deleting alternatives left of $x_i'$ from voter $i$'s interval. Robustness implies for such actions again that the winner can only move to the right, so $f(\mathcal I^2)\ssucceq f(\mathcal I(R'))$. By chaining these insights and using the definition of $f'$, it follows that $f'(R)\ssucceq f'(R')$. Finally, since $x_i\ssucc f'(R)\ssucceq f'(R')$, the single-peakedness of $\succsim_i$ implies that $f'(R)\succsim_i f'(R')$, so $f'$ is indeed strategyproof.
    
    Moulin's characterization now shows that there is a threshold vector $\theta\in (0,1)^m$ such that $\theta_1\geq \dots \geq \theta_m$ and $f'(R)=\max_{\ssucc}\{x_i\in A\colon \Pi_\mathit{SP}(R, x_i)\geq \theta_i |N_R|\}$ for all profiles $R\in\mathcal{P}^N_{\ssucc}$. Due to the relation between $f$ and $f'$, this further implies that $f(\mathcal I)=\max_{\ssucc}\{x_i\in A\colon \Pi_\mathit{SP}(\mathcal I, x_i)\geq \theta_i n_{\mathcal I}\}$ for all interval profiles $\mathcal I\in\mathcal{D}_{1}^N$.
\end{proof}

Next, we will generalize \Cref{lem:MoulinD1} from a fixed electorate $N$ to the domain of all electorates. To this end, we set $\mathcal{D}_1^*=\bigcup_{N\in\mathcal{F}(\mathbb{N})} \mathcal{D}_1^N$. 

\begin{lemma}\label{lem:generalizedmedian}
There is a threshold vector $\theta\in (0,1)^m$ such that $\theta_1\geq \dots \geq \theta_m$ and $f(\mathcal I)=\max_{\ssucc}\{x_i\in A\colon \Pi_\mathit{SP}(\mathcal I, x_i)\geq \theta_i n_{\mathcal I}\}$ for all interval profiles $\mathcal I\in\mathcal{D}_1^*$. 
\end{lemma}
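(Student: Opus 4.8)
The plan is to glue the per-electorate threshold vectors supplied by \Cref{lem:MoulinD1} into a single global one, using reinforcement to synchronise thresholds across electorate sizes and right-biased continuity to fix the behaviour at the boundary. First, \Cref{lem:MoulinD1} together with anonymity yields integers $s^n_i$ (for $n\in\mathbb{N}$ and $i\in\{1,\dots,m\}$) with $1\le s^n_m\le\dots\le s^n_1\le n$ such that $f(\mathcal I)=\max_{\ssucc}\{x_i\in A\colon \Pi_\mathit{SP}(\mathcal I,x_i)\ge s^n_i\}$ for every $\mathcal I\in\mathcal{D}_1^*$ with $n_{\mathcal I}=n$ (take $s^n_i=\lceil\theta_i n\rceil$ for the vector $\theta$ of \Cref{lem:MoulinD1} and use that $\Pi_\mathit{SP}$ is integer-valued; anonymity makes these values depend only on $n$). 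For $i<m$, the threshold $s^n_i$ is pinned down by $f$ itself: if $\mathcal J^n_t$ denotes a profile in which $t$ voters report $\{x_i\}$ and the remaining $n-t$ report $\{x_m\}$, then $\Pi_\mathit{SP}(\mathcal J^n_t,x_h)=0$ for $h<i$ and $\Pi_\mathit{SP}(\mathcal J^n_t,x_i)=t$, so $f(\mathcal J^n_t)=x_i$ if and only if $t\ge s^n_i$.

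Second, I would apply reinforcement in the form of cloning: repeatedly combining $\mathcal I$ with disjoint copies of itself gives $f(k\mathcal I)=f(\mathcal I)$ for all $k\in\mathbb{N}$, where $k\mathcal I$ consists of $k$ disjoint copies of $\mathcal I$, while $\Pi_\mathit{SP}(k\mathcal I,x_j)=k\,\Pi_\mathit{SP}(\mathcal I,x_j)$. Applying this to the profiles $\mathcal J^n_t$ (whose $k$-fold copy is $\mathcal J^{kn}_{kt}$) yields, for every $i<m$, the equivalence $t\ge s^n_i\iff kt\ge s^{kn}_i$ for all integers $t\in\{0,\dots,n\}$, hence $s^n_i=\lceil s^{kn}_i/k\rceil$. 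Thus $k(s^n_i-1)<s^{kn}_i\le k\,s^n_i$, so $s^{kn}_i/(kn)\le s^n_i/n$ and $s^{kn}_i/(kn)>(s^n_i-1)/n$; comparing these bounds for two sizes $a,b$ through the common size $ab$ gives $(s^a_i-1)/a<s^b_i/b$ for all $a,b\in\mathbb{N}$. I would then set $\theta_i:=\inf_{n}s^n_i/n$ for $i<m$ and $\theta_m:=\theta_{m-1}$, so that $(s^n_i-1)/n\le\theta_i\le s^n_i/n$ for all $n$, while $\theta_1\ge\dots\ge\theta_m$ follows from $s^n_1\ge\dots\ge s^n_m$.

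Third -- the crucial step -- I would use right-biased continuity to establish $0<\theta_i<1$ and $s^n_i=\lceil\theta_i n\rceil$ for all $n$ and all $i<m$; since $\Pi_\mathit{SP}(\mathcal I,x_m)=n_{\mathcal I}$ makes $s^n_m$ irrelevant, this is precisely the asserted form of $f$ on $\mathcal{D}_1^*$. If $\theta_i=1$, then $s^n_i=n$ for every $n$, and one voter reporting $\{x_i\}$ can never be outweighed by clones of a block of $\{x_{i+1}\}$-voters, violating case~(ii) of right-biased continuity; if $\theta_i=0$, then $s^n_i=1$ for every $n$, and a single $\{x_i\}$-voter can never be marginalised, violating case~(i). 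Finally, by the bounds above $s^n_i\ne\lceil\theta_i n\rceil$ is possible only if $\theta_i=(s^{n_0}_i-1)/n_0$ for some $n_0$; then $t_0:=\theta_i n_0=s^{n_0}_i-1$ is an integer with $1\le t_0\le n_0-1$, and the identity $s^{n_0}_i=\lceil s^{kn_0}_i/k\rceil$ together with $s^{kn_0}_i-1\le\theta_i k n_0$ forces $s^{kn_0}_i=kt_0+1$ for all $k$. I would then contradict case~(i) of right-biased continuity by taking $\mathcal I$ with $t_0$ voters reporting $\{x_i\}$ and $n_0-t_0$ reporting $\{x_{i+1}\}$ (so $f(\mathcal I)=x_{i+1}$) and $\mathcal I'$ with $t_0+1$ voters reporting $\{x_i\}$ and $n_0-t_0-1$ reporting $\{x_{i+1}\}$ (so $f(\mathcal I')=x_i\ssucc x_{i+1}=f(\mathcal I)$): continuity demands some $\lambda$ with $f(\lambda\mathcal I+\mathcal I')=x_{i+1}$, yet $\Pi_\mathit{SP}(\lambda\mathcal I+\mathcal I',x_i)=(\lambda+1)t_0+1=s^{(\lambda+1)n_0}_i$ and $\Pi_\mathit{SP}(\lambda\mathcal I+\mathcal I',x_h)=0<s^{(\lambda+1)n_0}_h$ for $h<i$, so $f(\lambda\mathcal I+\mathcal I')=x_i$ for every $\lambda$.

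The main obstacle is this third step. Reinforcement by itself constrains the thresholds only up to the additive slack encoded in $s^n_i=\lceil s^{kn}_i/k\rceil$, and there are genuinely distinct rules -- for instance the ``strict majority'' rule when $m=2$ -- that satisfy anonymity, unanimity, reinforcement and (vacuously) robustness on $\mathcal{D}_1^*$ but are not phantom median rules for any $\theta\in(0,1)^m$; it is exactly right-biased continuity that rules out such tie-breaking pathologies, and this must be verified for every index $i<m$, with the degenerate case relying on the arithmetic becoming rigid at electorate sizes divisible by $n_0$.
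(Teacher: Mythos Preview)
Your approach matches the paper's: extract per-electorate thresholds from \Cref{lem:MoulinD1}, use reinforcement (cloning) to link them across sizes, pass to a limit/infimum to define the global $\theta$, and invoke right-biased continuity to eliminate the boundary cases; the paper records the per-$n$ data as half-open intervals $(v_i^z/z,(v_i^z+1)/z]$ and argues via non-empty nested intersections rather than your integers $s_i^n$ and the inequality $(s_i^a-1)/a<s_i^b/b$, but this is a cosmetic difference. One minor slip: in your $\theta_i=1$ argument the roles are reversed---it is clones of the single $\{x_i\}$-voter that fail to outweigh a fixed $\{x_{i+1}\}$-voter (case~(ii) forces $f(\lambda\mathcal I+\mathcal I')=x_i$, which is impossible since $\Pi_\mathit{SP}(\cdot,x_i)=\lambda<\lambda+1=s_i^{\lambda+1}$), not the other way around---but the intended contradiction is correct.
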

\begin{proof}
    To prove this lemma, we denote by $N(z)$ an arbitrary electorate with $z$ voters. Because of the anonymity of $f$, the choice of $N(z)$ does not matter. By \Cref{lem:MoulinD1}, there is for every electorate $N(z)$ a threshold vector $\theta^z\in(0,1)^m$ such that $\theta_1^z\geq \dots \geq \theta_m^z$ and $f(\mathcal I)=\max_{\ssucc}\{x_i\in A\colon \Pi_\mathit{SP}(\mathcal I, x_i)\geq \theta_i^z n_{\mathcal I}\}$ for all profiles $\mathcal I\in\mathcal{D}_1^{N(z)}$. 
    We note, however, that these vectors are not unique: instead of $\theta^z$, we can represent $f$ on $\mathcal{D}_{1}^{N(z)}$ by every vector $q\in (0,1]^m$ such that $q_m=q_{m-1}$ and $q_i\in (\frac{v^z_i}{z},\frac{v^z_i+1}{z}]$ for all $i\in \{1\dots, m-1\}$, where $v^z_i\in\mathbb{N}_0$ is chosen such that $\frac{v^z_i}{z}<\theta_i^z\leq\frac{v^z_i+1}{z}$. The reason for this is that $\Pi_\mathit{SP}(\mathcal I,x_i)\in \{0,\dots, z\}$ for every alternative $x_i\in A$ and profile $\mathcal{I}\in \mathcal{D}_1^{N(z)}$. Also, the exact choice of $q_m$ has no influence as $\Pi_\mathit{SP}(\mathcal I, x_m)=n_{\mathcal I}$ for all $\mathcal I\in\mathcal{D}_1^{N(z)}$. We hence define the interval $I^z_i=(\frac{v^z_i}{z},\frac{v^z_i+1}{z}]$ for all $z\in\mathbb{N}$ and $i\in \{1,\dots, m-1\}$ and emphasize that $f(\mathcal I)=\max_{\ssucc} \{x_i\in A\colon \Pi_\mathit{SP}(\mathcal I, x_i)\geq q_i n_\mathcal{I}\}$ for all $\mathcal I\in \mathcal{D}_1^{N(z)}$  and every vector $q$ with $q_i\in I_i^z$ for all $i\in \{1,\dots, m-1\}$ and $q_m=q_{m-1}$. 

    We next define $\bar I^z_i=\bigcap_{s\in \{1,\dots, z\}} I^s_i$ as the intersection of the first $z$ intervals $I_i^s$ for some alternative $x_i\in A\setminus \{x_m\}$, and we will show that $\bar I^z_i\neq\emptyset$ for all $z\in\mathbb{N}$. Assume for contradiction that there are $z\in\mathbb{N}$ and $i\in \{1,\dots, m-1\}$ such that $\bar I_i^z=\emptyset$, and moreover suppose that $z$ is chosen minimal, i.e., $\bar I^{z-1}_i\neq\emptyset$ but $\bar I_i^z=\emptyset$. Since $\bar I^{z-1}_i$ is the non-empty intersection of intervals that are all closed to the right, it is itself an interval that is closed to the right. Next, we denote every interval $I^s_i$ by $I^s_i=(\ell^s_i, r^s_i]$ and define $\bar r^{z-1}_i=\min_{s\in \{1,\dots, z-1\}} r^s_i$ and $\bar \ell^{z-1}_i=\max_{s\in \{1,\dots, z-1\}} \ell^s_i$. It holds that $\bar I_i^{z-1}=(\bar \ell^{z-1}_i,\bar r^{z-1}_i]$ since every point left of $\bar \ell^{z-1}_i$ and right of $\bar r^{z-1}_i$ is not included in some interval $I^s_i$. Because $\bar I_i^{z}=\emptyset$, it either holds that $r^z_i\leq \bar \ell^{z-1}_i$ or $\bar r^{z-1}_i\leq\ell^z_i$. We subsequently assume that $r^z_i\leq \bar \ell^{z-1}_i$ as both cases are symmetric. Now, let $s\in \{1,\dots, z-1\}$ denote the index of an interval $I^s_i$ with $r^s_i=\bar r^{z-1}_i$, which means that $\bar r^{z-1}_i=\frac{c}{s}$ for some $c\in \{1,\dots, s\}$. 
    
    We consider the profile $\mathcal I$ where $s\cdot \bar r^{z-1}_i$ voters report $x_i$ and $s\cdot (1-\bar r^{z-1}_i)$ voters report $x_{i+1}$. By the definition of $I^s_i$, we have that $f(\mathcal{I})=\max_{\ssucc}\{x_j\in A\colon \Pi_\mathit{SP}(\mathcal I, x_j)\geq \theta^s_j \cdot s\}=x_i$ since $\Pi_\mathit{SP}(\mathcal{I},x_j)=0$ for all $x_j\in A$ with $x_j\ssucc x_i$ and $\Pi_\mathit{SP}(\mathcal{I},x_i)=s\cdot \bar r^{z-1}_i\geq \theta^s_i n_{\mathcal I}$. 
    Moreover, by reinforcement, it also holds that $f(z\mathcal{I})=x_i$ for the profile $z\mathcal{I}$ that consists of $z$ copies of $\mathcal{I}$. Next, consider the profile $\mathcal{I}'$, which consists of $z\cdot \ell^z_i$ voters reporting $x_i$ and $z\cdot (1-\ell^z_i)$ voters reporting $x_{i+1}$ (note that $z\cdot \ell^z_i$ and $z\cdot (1-\ell^z_i)$ are integers since $\ell^z_i=\frac{c}{z}$ for some $c\in \{0,\dots,z-1\}$). Using the definition of $\ell^z_i$, it follows that $f(\mathcal{I}')=\max_{\ssucc}\{x_j\in A\colon \Pi_\mathit{SP}(\mathcal I', x_j)\geq \theta^z_j \cdot z\}=x_{i+1}$ since $\Pi(\mathcal{I}',x_i)=z\cdot \ell^z_i<\theta^z_i\cdot n_{\mathcal I}$ and $\Pi(\mathcal{I}',x_i)=z\geq \theta^z_{i+1} n_{\mathcal I}$. By reinforcement, it follows for the profile $s\mathcal{I}'$, which consists of $s$ copies of $\mathcal{I}'$, that $f(s\mathcal{I}')=x_{i+1}$. Finally, there is also a threshold vector $\theta^{sz}$ such that $f(\mathcal{\hat{I}})=\max_{\ssucc} \{x_j\in A\colon \Pi_\mathit{SP}(\mathcal{\hat{I}}, x_j)\geq \theta^{sz}_j \cdot s\cdot z\}$ for all profiles $\mathcal{I}\in\mathcal{D}_{1}^{N({sz})}$. Now, since $f(s\mathcal{I}')=x_{i+1}$ and there are $z\cdot s\cdot \ell^z_i$ voters reporting $x_i$ in $z\mathcal{I}$, we infer that $\theta^{sz}_i> \frac{\Pi_\mathit{SP}(s\mathcal{I}',x_i)}{sz}=\ell^z_i$. Analogously, it holds that $\theta^{sz}_i\leq \bar r^{z-1}_i$ since $s\cdot z\cdot \bar r^{z-1}_i$ voters report $x_i$ in $z\mathcal{I}$ and $f(z\mathcal{I})=x_{i}$. However, this means that $\theta_i^{sz}\leq \bar r^{z-1}_i\leq \ell_i^s<\theta_i^{sz}$. This contradiction proves that our assumption that $\bar I^z_i=\emptyset$ is wrong. 

    We now define the threshold vector $\theta$. To this end, we observe that $\bar I^{z+1}_i\subseteq \bar I^z_i$ for all $z\in\mathbb{N}$ and $i\in \{1,\dots, m-1\}$. Finally, since $\bar r_i^z-\bar \ell_i^z\leq r^z-\ell^z=\frac{1}{z}$ for all $z\in\mathbb{N}$, the series $\bar r^1_i, \bar r^2_i, \dots$ (i.e., the right endpoints of the intervals $\bar I^1_i, \bar I^2_i,\dots$) is guaranteed to converge for all $i\in \{1,\dots, m-1\}$. We hence define the vector $\theta$ by $\theta_i=\lim_{z\rightarrow \infty} \bar r^z_i$ for all $i\in \{1,\dots, m-1\}$ and $\theta_m=\theta_{m-1}$. 
    We will first show that $\theta_{i}\geq \theta_{i+1}$ for all $i\in \{1,\dots, m-1\}$. For $i=m-1$, this is clear from the definition. For $i<m-1$, we have by definition that $\theta_{i}^z\geq \theta_{i+1}^z$ for all $z\in\mathbb{N}$. This implies that $r^z_i\geq r^z_{i+1}$ for all $z\in\mathbb{N}$ and consequently also that $\bar r_i^z\geq \bar r_{i+1}^z$. This shows that $\theta_i=\lim_{z\rightarrow\infty} \bar r_i^z\geq \lim_{z\rightarrow\infty} \bar r_{i+1}^z=\theta_{i+1}$. 
    
    Next, we will prove that $\theta_i\in (0,1)$ for all $i\in \{1,\dots, m-1\}$. To this end, assume for contradiction that $\theta_i=0$ or $\theta_i=1$ for some $i\in \{1,\dots, m-1\}$. We first consider the case that $\theta_i=0$ for some alternative $x_i$ and we assume that $x_i$ is the alternative with minimal index such that $\theta_i=0$, i.e., $\theta_j>0$ for all $j\in \{1,\dots, i-1\}$. Since $\theta_i=0$, we infer that $\bar \ell^z_i=0$ for all $z\in\mathbb{N}$ because $\bar \ell^z_i< \bar r^{z}_i$ for all $z\in\mathbb{N}$. In particular, this means that $f(\mathcal I)=x_i$ for all profiles $\mathcal{I}\in\mathcal{D}_{1}^*$ where $\{x_i\}$ is reported by one voter and all other voters report $\{x_m\}$. Now, consider the profile $\mathcal I$ that consists of one voter reporting $\{x_m\}$, and the profile $\mathcal I'$ that consists of one voter reporting $\{x_i\}$. By unanimity, we have that $f(\mathcal I)=x_m$ and $f(\mathcal I')=x_i$. Hence, right-biased continuity requires that there is a $\lambda\in\mathbb{N}$ such that $f(\lambda \mathcal I+\mathcal I')=\{x_m\}$. However, this contradicts with our previous insight, so the assumption that $\theta_i=0$ must have been wrong. Next, consider the case that $\theta_i=1$. This is only possible if $\bar r^z_i=1$ for all $z\in\mathbb{N}$, so $f(\mathcal I)=x_i$ requires for all profiles $\mathcal{I}\in \mathcal{D}_1^*$ that no voter reports $\{x_m\}$. Now, consider again the profiles $\mathcal I$ and $\mathcal I'$ where one voter reports $\{x_m\}$ and one voter reports $\{x_i\}$, respectively. By right-biased continuity, there must be a $\lambda\in\mathbb{N}$ such that $f(\lambda \mathcal I'+\mathcal I)\ssucceq x_i$. However, this contradicts with the our previous observation, so we conclude that $\theta_i\neq 1$. 

    Finally, we will verify that $f(\mathcal I)=\max_{\ssucc} \{x_i\in A\colon \Pi_\mathit{SP}(\mathcal{I}, x_j)\geq \theta_j n_\mathcal{I}\}$ for all profiles $\mathcal{I}\in\mathcal{D}_{1}^*$. To this end, fix a set of voters $N(z)$. By the definition of $\theta$, it holds that $\theta_i\in [\bar \ell^z_i, \bar r^z_i]\subseteq [\ell^z_i, r^z_i]$ for all $i\in \{1,\dots, m-1\}$. If $\theta_i\in (\ell^z_i,r^z_i]$ for all $i\in \{1,\dots, m-1\}$, then $f(\mathcal I)=\max_{\ssucc}\{x_i\in A\colon \Pi_\mathit{SP}(\mathcal I, x_i)\geq \theta_i n_{\mathcal I}\}$ as all values in $(\ell_i^z, r^z_i]$ define the same rule. We hence will show that $\theta_i\neq \ell^z_i$ for all $i\in \{1,\dots, m-1\}$ and assume for contradiction that $\theta_i=\ell^z_i$ for some $i\in \{1,\dots, m-1\}$. We first note that this implies that $\ell^z_i\geq \ell^{s}_i$ for all $s\in \mathbb{N}$. Indeed, if there was an index $s$ such that $\ell^z_i< \ell^{s}_i$, it would be the case that $\ell^s_i\leq \bar \ell^t_i<\bar r^t_i$ for all $t\in\mathbb{N}$ with $t\geq s$. However, this means that $\theta_i\geq \ell^s_i>\ell^z_i$, which contradicts our assumption. 
    Now, consider the profile $\mathcal I$ with $z\cdot \ell^z_i$ voters reporting $\{x_i\}$ and $z\cdot (1-\ell^z_i)$ voters reporting $\{x_m\}$. By the definition of $\ell^z_i$, we have that $f(\mathcal I)\neq x_i$. Moreover, because $\theta_j^z> 0$ for all $j\in \{1,\dots, m\}$, we conclude that $x_i\ssucc f(\mathcal I)$. Next, let $\mathcal I'$ denote the profile where a single voter reports $\{x_1\}$. By unanimity, we have that $f(\mathcal I')=x_1$. Finally, by right-biased continuity, we infer that there is a value $\lambda\in\mathbb{N}$ such that $f(\lambda \mathcal I+\mathcal I')=f(\mathcal I)$. However, for each $\lambda\in\mathbb{N}$, it holds that $\Pi_\mathit{SP}(\lambda \mathcal I'+\mathcal I, x_i)=1+\lambda \cdot z\cdot \ell^z_i>(1+\lambda \cdot z) \ell^z_i\geq (1+\lambda \cdot z)\ell_i^{1+\lambda\cdot z}$. By the definition of $\ell^{1+\lambda z}_i$, this means that $f(\lambda\mathcal I'+\mathcal I)\ssucceq x_i$ for all $\lambda\in\mathbb{N}$, which contradicts right-biased continuity. Hence, we conclude that $\theta_i\neq \ell^z_i$, which completes the proof of this lemma. 
\end{proof}

As it will turn out, the threshold vector $\theta$ derived in \Cref{lem:generalizedmedian} is the threshold vector that defines $f$. We will hence focus next on the weight vector of $f$, for which we will rely on reinforcement. In more detail, to employ the full power of reinforcement, we will change the domain of $f$. To this end, we define $q=|\Lambda|$ as the number of intervals in $\Lambda$ and we arbitarily enumerate the intervals by ${I}^1,\dots, {I}^{q}$. This allows us to represent each interval profile $\mathcal I$ by a vector $v\in\mathbb{N}^q_0\setminus \{0\}$: the $i$-th entry of $v$ states how often the interval ${I}^i$ is reported. For the ease of notation, we write $v(\mathcal I)$ for the vector corresponding to the profile $\mathcal I$, and $v_I$ for the entry in $v$ corresponding to an interval $I$. Because $f$ is anonymous, there is a function $g$ from $\mathbb{N}^q\setminus \{0\}$ to $X$ such that $f(\mathcal I)=g(v(\mathcal I))$ for all profiles $\mathcal{I}\in\dom$. Moreover, $g$ inherits all desirable properties of $f$. We will next generalize $g$ to a function $\hat g:\mathbb{Q}^q_{\geq 0}\setminus \{0\}\rightarrow X$ while preserving the desirable properties of $f$. In particular, we will show that $\hat g$ extends $f$ (i.e., $f(\mathcal I)=\hat g(v(\mathcal I))$ for all profiles $\mathcal{I}\in\dom$) and satisfies reinforcement (i.e., $g(v+v')=g(v)$ for all $v,v'\in\mathbb{Q}^q_{\geq 0}\setminus \{0\}$ with $g(v)=g(v')$).

\begin{lemma}\label{lem:representation}
    There is a functions $\hat g:\mathbb{Q}^q_{\geq 0}\setminus \{0\}\rightarrow C$ that extends $f$ and satisfies reinforcement.
\end{lemma}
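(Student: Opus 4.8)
The plan is to clear denominators: scale a rational vector to an integer one, apply $g$ there, and use reinforcement to check that the choice of scaling factor is irrelevant. Recall from the paragraph preceding the lemma that, since $f$ is anonymous, it is represented by a function $g\colon\mathbb{N}_0^q\setminus\{0\}\to A$ with $f(\mathcal I)=g(v(\mathcal I))$ for all $\mathcal I\in\dom$, and that $g$ inherits the properties of $f$; in particular $g$ is \emph{reinforcing} in the sense that $g(w+w')=g(w)$ whenever $w,w'\in\mathbb{N}_0^q\setminus\{0\}$ satisfy $g(w)=g(w')$ (here we use that the set $\mathbb{N}$ of potential voters is infinite, so any two integer vectors can be realized by voter-disjoint profiles).

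The first step I would record is the scaling identity $g(kw)=g(w)$ for every $w\in\mathbb{N}_0^q\setminus\{0\}$ and every $k\in\mathbb{N}$: applying reinforcement to $w$ against itself gives $g(2w)=g(w)$, and an easy induction on $k$ (applying reinforcement to $kw$ against $w$) finishes it. Now, given $v\in\mathbb{Q}^q_{\geq 0}\setminus\{0\}$, choose any $k\in\mathbb{N}$ clearing all denominators of the entries of $v$, so that $kv\in\mathbb{N}_0^q\setminus\{0\}$, and define $\hat g(v):=g(kv)$. To see this is well-defined, suppose $k_1,k_2\in\mathbb{N}$ both make $k_1v,k_2v$ integral; then by the scaling identity $g(k_1k_2 v)=g(k_2\cdot(k_1 v))=g(k_1 v)$ and likewise $g(k_1k_2 v)=g(k_1\cdot(k_2 v))=g(k_2 v)$, so $g(k_1 v)=g(k_2 v)$ and $\hat g(v)$ does not depend on $k$.

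It remains to verify the two asserted properties. For the extension property, if $v\in\mathbb{N}_0^q\setminus\{0\}$ we may take $k=1$, so $\hat g(v)=g(v)$, and hence $f(\mathcal I)=\hat g(v(\mathcal I))$ for all $\mathcal I\in\dom$. For reinforcement of $\hat g$, let $v,v'\in\mathbb{Q}^q_{\geq 0}\setminus\{0\}$ with $\hat g(v)=\hat g(v')$, and pick a common $k\in\mathbb{N}$ with $kv,kv'\in\mathbb{N}_0^q\setminus\{0\}$. Then $g(kv)=\hat g(v)=\hat g(v')=g(kv')$, so reinforcement of $g$ gives $g(kv+kv')=g(kv)$; since $kv+kv'=k(v+v')$, this says exactly $\hat g(v+v')=\hat g(v)$.

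The argument is essentially routine bookkeeping; the only place that genuinely uses the hypotheses is the well-definedness of $\hat g$, which rests on the scaling identity $g(kw)=g(w)$ and hence on reinforcement. I would also remark in passing that $\hat g$ automatically inherits anonymity and, because scaling by a positive rational changes neither which intervals are reported with positive multiplicity nor the relative position counts $\Pi_\alpha$, the remaining structural features needed in the subsequent geometric argument — though for the statement of the present lemma only the extension and reinforcement properties are claimed.
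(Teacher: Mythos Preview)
Your proof is correct and follows essentially the same approach as the paper: define $\hat g(v)=g(kv)$ for a denominator-clearing $k$, use reinforcement to show this is independent of $k$ via the common multiple $k_1k_2$, and verify extension and reinforcement directly. If anything, you are slightly more explicit than the paper in isolating the scaling identity $g(kw)=g(w)$ and in noting that the infinite voter set is what allows any two integer vectors to be realized by voter-disjoint profiles.
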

\begin{proof}
    Let $g$ denote the function from $\mathbb{N}^q\setminus \{0\}\rightarrow A$ that computes $f$ based on anonymized profiles. We first note that $g$ satisfies both conditions of the lemma as it is only a different representation of $f$. Next, we define the function $\hat g$ by $\hat g(v)=g(\lambda v)$ for all $v\in\mathbb{Q}^q_{\geq 0}\setminus \{0\}$, where $\lambda\in\mathbb{N}$ is an arbitrary scalar such that $\lambda v\in\mathbb{N}^q_0\setminus \{0\}$. 

    We first show that $\hat g$ is well-defined despite not fully specifying the parameter $\lambda$. To this end, let $v\in\mathbb{Q}^q_{\geq 0}$ denote an arbitrary vector and let $\lambda_1,\lambda_2\in\mathbb{N}$ denote two scalars such that $\lambda_1v,\lambda_2v\in\mathbb{N}^q_0\setminus \{0\}$. We will show that $g(\lambda_1v)=g(\lambda_2 v)$ as this implies that $\hat g$ is well-defined. For this, we note that reinforcement implies that $g(\lambda_1 v)=g(\lambda_1 \lambda_2 v)$ and that $g(\lambda_2 v)=g(\lambda_1 \lambda_2 v)$. Hence, $g(\lambda_1v)=g(\lambda_2 v)$ as desired. Moreover, this proves that $\hat g(v(\mathcal I))=g(1\cdot v(\mathcal I))=f(\mathcal I)$ for all profiles $\mathcal{I}\in\dom$, so $\hat g$ indeed extends $f$.

    Next, we show that $\hat g$ is reinforcing. To this end, consider two vectors $v^1,v^2\in\mathbb{Q}^q_{\geq 0}\setminus \{0\}$ and let $\lambda_1,\lambda_2\in\mathbb{N}$ denote scalars such that $\lambda_1v^1,\lambda_2v^2\in\mathbb{N}^q_0\setminus \{0\}$. We suppose that $\hat g(v^1)=\hat g(v^2)$ as there is otherwise nothing to show. By the definition of $\hat g$ and the reinforcement of $g$, it holds that $\hat g(v^1)=g(\lambda_1 v^1)=g(\lambda_1\lambda_2 v^1)$ and $\hat g(v^2)=g(\lambda_1 v^2)=g(\lambda_1\lambda_2 v^2)$. Because of the reinforcement of $g$, we next conclude that $\hat g(v^1+v^2)=g(\lambda_1\lambda_2 (v^1+v^2))=g(\lambda_1\lambda_2 v^1)=\hat g(v^1)$. This proves that $\hat g$ is reinforcing, too.
\end{proof}

Next, we define for every alternative $x_i\in A$ the set $Q_i=\{v\in\mathbb{Q}^q_{\geq 0}\setminus \{0\}\colon \hat g(v)=x_i\}$ as the subset of $\mathbb{Q}^q_{\geq 0}\setminus \{0\}$ such that $\hat g$ chooses $x_i$ for every point in $Q_i$. We note that $Q_i\cap Q_j=\emptyset$ for all $i\neq j$ as $\hat g$ returns for every point in $\mathbb{Q}^q_{\geq 0}\setminus \{0\}$ only a single alternative and that $\bigcup_{x_i\in A} Q_i=\mathbb{Q}^q_{\geq 0}\setminus \{0\}$. Moreover, $Q_i$ is $\mathbb{Q}$-convex (i.e., for all $v^1,v^2\in Q_i$ and $\lambda\in \mathbb{Q}\cap [0,1]$, it holds that $\lambda v^1+(1-\lambda) v^2\in Q_i$) because $\hat g$ is reinforcing. Next, we let $\bar Q_i$ denote the closure of $Q_i$ with respect to $\mathbb{R}^q$. Using standard arguments from \citet{Youn75a}, it can be shown that $\bar Q_i$ is convex for all $x_i\in A$ and that $\bigcup_{x_i\in A} \bar Q_i=\mathbb{R}^q_{\geq 0}$. We will next prove that the sets $\bar Q_i$ are polytopes. In the following, $uv$ will denote the standard scalar product between two vectors $u, v\in\mathbb{R}^q$, i.e., $uv=\sum_{i=1}^q u_iv_i$. 

\begin{lemma}\label{lem:hyperplane}
    For every alternative $x_i\in A$, the following claims hold: 
    \begin{enumerate}[label=(\arabic*)]
        \item $\bar Q_i$ is fully dimensional.
        \item For every alternative $x_j\in A\setminus \{x_i\}$, there is a non-zero vector $u^{i,j}\in\mathbb{R}^q$ such that $vu^{i,j}\geq 0$ for all $v\in \bar Q_i$ and $vu^{i,j}\leq 0$ for all $v\in \bar Q_j$. 
        \item For every $x_j\in A\setminus \{x_i\}$, let $u^{i,j}$ denote a non-zero vector such that $vu^{i,j}\geq 0$ if $v\in \bar Q_i$ and $vu^{i,j}\leq 0$ if $v\in \bar Q_j$. It holds that $\bar Q_i=\{v\in\mathbb{R}^q_{\geq 0}\colon \forall x_j\in A\setminus \{x_i\}\colon vu^{i,j}\geq 0\}$.
    \end{enumerate}
\end{lemma}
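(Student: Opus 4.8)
The plan is to follow the geometric method of \citet{Youn75a}, establishing Claim~(1) first since it is the crux and then obtaining Claims~(2) and~(3) by fairly routine convex geometry, exactly as in Young's original argument.

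For Claim~(1), I would produce $q+1$ affinely independent points inside $Q_i$. Fix $x_i$ and the threshold vector $\theta\in(0,1)^m$ from \Cref{lem:generalizedmedian}. Since every relevant $\theta_h$ lies strictly between $0$ and $1$, there is a single-alternative profile $\mathcal I_0$ with $\Pi_\mathit{SP}(\mathcal I_0,x_i)>\theta_in_{\mathcal I_0}$ and $\Pi_\mathit{SP}(\mathcal I_0,x_h)<\theta_hn_{\mathcal I_0}$ for all $h<i$; replacing $\mathcal I_0$ by sufficiently many copies of itself, I obtain a single-alternative profile $v^0$ with $f(v^0)=x_i$ and with so much slack in these strict inequalities that adding one more voter cannot move the winner off $x_i$ on single-alternative profiles. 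In particular $f(v^0+e_I)=x_i$ whenever $I$ is a singleton, where I write $e_I$ for the one-voter profile reporting $I$ (a standard basis vector of $\mathbb R^q$). I then claim $f(v^0+e_I)=x_i$ for \emph{every} interval $I\in\Lambda$. Granting this, the $q+1$ points $v^0$ and $v^0+e_{I^1},\dots,v^0+e_{I^q}$ all lie in $Q_i$ and are affinely independent because their pairwise differences include the standard basis $e_{I^1},\dots,e_{I^q}$; hence $\bar Q_i$ is fully dimensional.

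Proving $f(v^0+e_I)=x_i$ for all $I=[x_\ell,x_r]$ is the step I expect to be the main obstacle. The tool is robustness in its ``enlarging'' form: enlarging one voter's interval $[x_a,x_b]$ to $[x_a,x_{b+1}]$ leaves the winner unchanged unless it was the old right endpoint $x_b$, in which case it becomes $x_{b+1}$, and symmetrically on the left. Starting from $f(v^0+e_{\{x_\ell\}})=x_i$ and enlarging this voter's interval to the right, one step at a time, until it becomes $[x_\ell,x_r]$, robustness forces $f(v^0+e_{[x_\ell,x_r]})$ either to equal $x_i$ or to lie weakly to the right of $x_i$ (i.e.\ $x_i\ssucceq f(v^0+e_{[x_\ell,x_r]})$). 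Starting instead from $f(v^0+e_{\{x_r\}})=x_i$ and enlarging to the left, one gets $f(v^0+e_{[x_\ell,x_r]})\ssucceq x_i$. The two conclusions together pin down $f(v^0+e_{[x_\ell,x_r]})=x_i$. (If $I$ does not contain $x_i$, already one of the two enlargements keeps the winner at $x_i$, since the moving endpoint never meets $x_i$; the delicate case is $x_\ell\ssucc x_i\ssucc x_r$, where the pincer of both enlargements is genuinely needed.)

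For Claim~(2), I would first show $\operatorname{int}\bar Q_i\cap\operatorname{int}\bar Q_j=\emptyset$ for $i\ne j$: if some $v$ lay in this intersection, density of $Q_i$ in $\bar Q_i$ would give $q+1$ rational points of $Q_i$ whose simplex contains $v$ in its interior, and iterated $\mathbb Q$-convexity of $Q_i$ would put \emph{all} rational points of that simplex in $Q_i$ — in particular points arbitrarily close to $v$ — and likewise for $Q_j$, contradicting $Q_i\cap Q_j=\emptyset$. The separating hyperplane theorem applied to the (convex, now fully dimensional) sets $\bar Q_i,\bar Q_j$ then yields a nonzero $u^{i,j}$ with $vu^{i,j}\ge0$ on $\bar Q_i$ and $vu^{i,j}\le0$ on $\bar Q_j$; the separating hyperplane can be taken through the origin because both sets are cones (scaling an inequality $vu\ge c$ and letting the scalar tend to $0$ forces $c\le0$ and then $vu\ge0$). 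For Claim~(3), the inclusion $\bar Q_i\subseteq P_i:=\{v\in\mathbb R^q_{\ge0}:vu^{i,j}\ge0\text{ for all }j\ne i\}$ is immediate. Conversely, any $v\in\operatorname{int}\bar Q_i$ satisfies every inequality $vu^{i,j}\ge0$ strictly and has strictly positive coordinates — otherwise a small ball around $v$ would leave $\bar Q_i$ or $\mathbb R^q_{\ge0}$ — so $\operatorname{int}\bar Q_i\subseteq\operatorname{int}P_i$, whence $P_i$ is fully dimensional; and any $v\in\operatorname{int}P_i$ has $vu^{i,j}>0$ for all $j\ne i$, so it lies in no $\bar Q_j$ with $j\ne i$, hence — using $\bigcup_k\bar Q_k=\mathbb R^q_{\ge0}$ — in $\bar Q_i$. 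Taking closures gives $P_i=\overline{\operatorname{int}P_i}\subseteq\bar Q_i$, which finishes Claim~(3).
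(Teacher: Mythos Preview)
Your proposal is correct and follows the same overall architecture as the paper: establish full dimensionality of each $\bar Q_i$, then invoke the separating-hyperplane theorem and the cone property for Claim~(2), and finally use the covering $\bigcup_k \bar Q_k=\mathbb R^q_{\ge0}$ together with full dimensionality to get the polyhedral description in Claim~(3). Claims~(2) and~(3) in your write-up match the paper's argument almost verbatim.

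The one noteworthy difference is the mechanism for Claim~(1). The paper argues by \emph{reductio}: it fixes a profile where many voters report $\{x_i\}$ and every other interval appears at most twice, supposes $f(\mathcal I)=x_j\neq x_i$, and then uses robustness to collapse all non-$\{x_i\}$ ballots to $\{x_j\}$, landing in $\mathcal D_1^*$ where the threshold formula from \Cref{lem:generalizedmedian} forces a contradiction. This yields an entire box inside $Q_i$ in one stroke. You instead build $q+1$ affinely independent points directly via a two-sided robustness pincer: enlarge a singleton $\{x_\ell\}$ rightwards and a singleton $\{x_r\}$ leftwards, and intersect the resulting constraints on the winner. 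Both arguments are valid; the paper's needs only one-sided robustness and gives a larger witness set for free, while yours is more explicitly constructive and makes transparent exactly which $q+1$ points certify full dimensionality.
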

\begin{proof}
    Fix an alternative $x_i$ and consider the corresponding set $\bar Q_i$. We will prove each of the three claims separately.\medskip

    \textbf{Claim (1):} We will first show that $\bar Q_i$ is fully dimensional by studying $f$ in more detail. Thus, let $\theta\in(0,1)^m$ denote the threshold vector such that $f(\mathcal I)=\max_{\ssucc}\{x_j\in A\colon\Pi_\mathit{SP}(\mathcal I, x_j)\geq \theta_j n_{\mathcal I}\}$ for all $\mathcal I\in \mathcal{D}_1^*$; such a vector exists due to \Cref{lem:generalizedmedian}. Moreover, we define $\delta=\min(\theta_i, 1-\theta_i)$ and we choose $w\in\mathbb{N}$ such that $\frac{2(q-1)}{w}<\delta$. We claim that $f(\mathcal I)=x_i$ for all profiles $\mathcal I$ such that each interval $I\in\Lambda\setminus \{\{x_i\}\}$ is reported by at most two voters and at least $w$ voters report the interval $\{x_i\}$. Assume for contradiction that this is not the case, i.e., that $f(\mathcal I)=x_j$ for such a profile $\mathcal I$ and some alternative $x_j\neq x_i$. Next, let $n_X$ denote the number of voters who report the interval $X$ in the profile $\mathcal I$ and recall that $n_{\mathcal I}$ is the total number of voters in $\mathcal I$. We consider the profile $\mathcal{I}'$ where $\sum_{X\in \Lambda\setminus \{\{x_i\}\}} n_X$ voters report $\{x_j\}$ and $w$ voters report $\{x_i\}$. 
    A repeated application of robustness shows that if $f(\mathcal I)=x_j$, then $f(\mathcal{I}')=x_j$. In more detail, for each voter $k$ with $x_j\not\in I_k$, we can first extend his interval to include $x_j$ and then delete the other alternatives. For each of these steps, robustness implies that the winner is not allowed to change. Similarly, if $x_j\in I_k$, we can one after another remove all alternatives but $x_j$ from the voter's interval, and robustness again demands that the outcome does not change. Next, we observe that $\mathcal{I}'\in\mathcal{D}_1^*$, so $f(\mathcal{I}')=\max_{\ssucc}\{x_h\in A\colon\Pi_\mathit{SP}(\mathcal I', x_h)\geq \theta_h n_{\mathcal I'}\}$. Now, if $x_j\ssucc x_i$, it holds that 
    \begin{align*}
        \Pi_\mathit{SP}(\mathcal{I}',x_j)= \sum_{X\in \Lambda\setminus \{\{x_i\}\}} n_X\leq 2(q-1)<\delta w\leq \theta_i n_{\mathcal I'}.
    \end{align*}
    Since $x_j\ssucc x_i$, it holds that $\theta_j\geq \theta_i$, so $f(\mathcal I')\neq x_j$.
    On the other hand, if $x_i\ssucc x_j$, then 
    \begin{align*}
        \Pi_\mathit{SP}(\mathcal I', x_i)=w\geq n_{\mathcal I}(1-\frac{2(q-1)}{n_{\mathcal I}})>n_{\mathcal I}(1-\delta)\geq n_{\mathcal I}(1-(1-\theta_i))=\theta_i n_{\mathcal I}.
    \end{align*}
    Hence, we derive that $f(\mathcal I')\ssucceq x_i$, which means again that $f(\mathcal I')\neq x_j$. Since we have a contradiction in both cases, it follows that $f(\mathcal I)=x_i$ for all profiles $\mathcal I$ where at least $w$ voters report $\{x_i\}$ and every other interval is reported by at most $2$ voters. Since $f(\mathcal I)=\hat g(v(\mathcal I))$, we conclude that $v\in Q_i\subseteq \bar {Q}_x$ for all vectors $v\in\mathbb{Q}^q_{\geq 0}$ such that $v_{\{x_i\}}\geq w$ and $v_I\leq 2$ for all other intervals $I\in\Lambda\setminus\{\{x_i\}\}$. Combined with the convexity of $\bar Q_i$, this shows that this set is indeed fully dimensional.\medskip

    \textbf{Claim (2):} We next let $x_j\in A\setminus \{x_i\}$ denote a second alternative, and we aim to apply the separating hyperplane theorem for convex sets to infer a non-zero vector $u^{i,j}$ such that $vu^{i,j}\geq 0$ for all $v\in\bar Q_i$ and $vu^{i,j}\leq 0$ for all $v\in \bar Q_j$. To this end, we need to show that the interiors of $\bar Q_i$ and $\bar Q_j$ are disjoint, i.e., that $\text{int } \bar Q_i\cap \text{int }\bar Q_j=\emptyset$. Assume for contradiction that this is not the case, i.e., $\text{int } \bar Q_i\cap \text{int } \bar Q_j\neq\emptyset$. 
    Since $\text{int } \bar Q_i$ and  $\text{int } \bar Q_j$ are open and convex, this means also that $\text{int } \bar Q_i\cap \text{int } \bar Q_j\cap \mathbb{Q}^q\neq\emptyset$. 
    We will show that this implies that $Q_i\cap Q_j\neq\emptyset$, which is a contradiction as these sets are disjoint by definition. To this end, we subsequently prove that $\text{int } \bar Q_i \cap \mathbb{Q}^q \subseteq Q_i$ as a symmetric argument shows that $\text{int } \bar Q_j \cap \mathbb{Q}^q_{\geq 0} \subseteq Q_j$. This means that $Q_i\cap Q_j\neq\emptyset$ because
    $\text{int } \bar Q_i\cap \text{int }  \bar Q_j \cap \mathbb{Q}^q\subseteq Q_i\cap Q_j$.
    
    Now, to show that $\text{int } \bar Q_i \cap \mathbb{Q}^q\subseteq Q_i$, we first note that the convex hull of $Q_i$, i.e., $\textit{Conv}(Q_i)$, is a subset of $\bar Q_i$ because the latter set is convex. 
    Since $Q_i\subseteq \textit{Conv}(Q_i)\subseteq \bar Q_i$, this means that $\bar Q_i\subseteq \overline{\textit{Conv}(Q_i)}\subseteq \overline{\bar{Q_i}}$, so $\overline{\textit{Conv}(Q_i)}=\bar Q_i$. 
    Next, it holds for convex sets $Y$ with non-empty interior that $\text{int } Y=\text{int } \bar Y$, so $\text{int } \bar Q_i=\text{int } \overline{\mathit{Conv}(Q_i)}=\text{int } \mathit{Conv}(Q_i)$. Finally, Lemma 1 of \citet{Youn75a} shows that $\textit{Conv}(Q_i)\cap \mathbb{Q}^q$ is the same as $Q_i$ due to the $\mathbb{Q}$-convexity of this set. Hence, we have that $\text{int }\bar Q_i\cap \mathbb{Q}^q=\text{int }\textit{Conv}(Q_i)\cap \mathbb{Q}^q\subseteq \textit{Conv}(Q_i)\cap \mathbb{Q}^q=Q_i$. This yields the desired contradiction.
    
    We next apply the separation theorem for convex sets to infer that there is a non-zero vector $u^{i,j}\in\mathbb{R}^q$ and a constant $c\in\mathbb{R}$ such that $vu^{i,j}> c$ for all $v\in \text{int } \bar Q_i$ and $vu^{i,j}< c$ for all $v\in \text{int } \bar Q_j$. By taking the closure, it follows that $vu^{i,j}\geq c$ for all $v\in \bar Q_i$ and $vu^{i,j}\leq c$ for all $v\in \bar Q_j$. Moreover, since these sets are closed under multiplication with a positive scalar, it is easy to infer that $c$ must be $0$. This completes the proof of this claim.\medskip

    \textbf{Claim (3):} For all $x_j\in A\setminus \{x_i\}$, let $u^{i,j}\in\mathbb{R}^q$ denote a non-zero vector such that $vu^{i,j}\geq 0$ if $v\in \bar Q_i$ and $vu^{i,j}\leq 0$ if $v\in \bar Q_j$. Moreover, we define $S_i=\{v\in\mathbb{R}^q_{\geq 0}\colon \forall x_j\in A\setminus \{x_i\}\colon vu^{i,j}\geq0\}$ to ease notation, and we will show that $\bar Q_i=S_i$ by considering the two set inclusions between these sets. First, by the definition of the vectors $u^{i,j}$, it follows immediately that if $v\in \bar Q_i$, then $vu^{i,j}\geq 0$ for all $x_j\in A\setminus \{x_i\}$, so $\bar Q_i\subseteq S_i$. For the other direction, we first observe that $\text{int } S_i\neq \emptyset$ as $\text{int } \bar Q_i\neq\emptyset$. 
    Now, it holds by the definition of the interior that, if $v\in \text{int } S_i$, then $vu^{i,j}>0$ for all $x_j\in A\setminus \{x_i\}$. By the definition of these vectors, this means that $v\not\in\bar Q_j$ for all $x_j\in A\setminus \{x_i\}$ since this would imply that $vu^{i,j}\leq 0$. Because $\bigcup_{x_k\in A} \bar Q_k=\mathbb{R}^q_{\geq 0}$, we conclude that $v\in \bar Q_i$. Hence $\text{int } S_i\subseteq \bar Q_i$, and taking the closure of both sets implies that $S_i\subseteq \bar Q_i$. 
\end{proof}

Motivated by \Cref{lem:hyperplane}, we will study the vectors $u^{i,j}$ in more detail. To this end, we define the relation $\ssucc^*$ by $x_i\ssucc^* x_j$ if and only if $x_i\ssucc x_j$ and there is no alternative $x_k$ such that $x_i\ssucc x_k\ssucc x_j$. By our assumption that ${\ssucc}=x_1\ssucc x_2\ssucc \dots \ssucc x_m$, this means that $x_i\ssucc^* x_{j}$ if and only if $j=i+1$. For the next lemmas, we denote by $\theta\in (0,1)^m$ the threshold vector inferred in \Cref{lem:generalizedmedian}. We will next show that the robustness of $f$ severely restricts the vectors $u^{i,j}$ for alternatives $x_i$ and $x_j$ with $x_i\ssucc^* x_j$. 

\begin{lemma}\label{lem:hyperplaneAnalysis}
    Fix two alternatives $x_i,x_j\in A$ such that $x_i\ssucc^* x_j$ and let $u^{i,j}\in\mathbb{R}^q$ denote a non-zero vector such that $vu^{i,j}\geq 0$ for all $v\in \bar Q_i$ and $vu^{i,j}\leq 0$ for all $v\in \bar Q_j$. It holds that 
    \begin{enumerate}[label=(\arabic*)]
        \item $\theta_i\cdot u^{i,j}_{\{x_i\}}=-(1-\theta_i)\cdot u^{i,j}_{\{x_j\}}>0$,
        \item $u^{i,j}_{\{x_i\}}\geq u^{i,j}_X\geq u^{i,j}_{\{x_j\}}$ for all intervals $X\subseteq \Lambda$, and
        \item $u^{i,j}_X=u^{i,j}_Y$ for all intervals $X=[\ell, r]$, $Y=[\ell', r']$ such that either (i) $r\ssucceq x_i$ and $r'\ssucceq x_i$, (ii) $x_j\ssucceq \ell$ and $x_j\ssucceq \ell'$, or (iii) $\{x_i,x_j\}\subseteq X$ and $\{x_i, x_j\}\subseteq Y$.
    \end{enumerate}
\end{lemma}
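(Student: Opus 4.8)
The strategy is to exploit robustness together with the representation $\bar Q_k = \{v\in\mathbb{R}^q_{\geq 0}\colon \forall x_\ell\ne x_k\colon vu^{k,\ell}\geq 0\}$ from \Cref{lem:hyperplane}(3) and the median-rule behaviour of $f$ on $\mathcal{D}_1^*$ from \Cref{lem:generalizedmedian}. Since $x_i\ssucc^* x_j$ means $j=i+1$, the hyperplane $\{v\colon vu^{i,j}=0\}$ separates the region where $f$ picks $x_i$ from the region where $f$ picks $x_{i+1}$, and we want to pin down the entries of $u^{i,j}$ by probing $f$ with carefully chosen profiles that straddle this boundary.

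\emph{Step 1 (Claim 1: the singleton entries).} I would restrict attention to profiles in $\mathcal{D}_1^*$ in which only $\{x_i\}$ and $\{x_j\}$ are reported, say $a$ voters report $\{x_i\}$ and $b$ voters report $\{x_j\}$. By \Cref{lem:generalizedmedian}, $f$ picks $x_i$ exactly when $a \geq \theta_i(a+b)$ and $x_j$ otherwise (all alternatives left of $x_i$ have collective peak position $0$, and compatibility/monotonicity of $\Pi_\mathit{SP}$ handles the alternatives right of $x_j$). Translating through $\hat g$ and the vector $u^{i,j}$: the corresponding vector $v$ has $v_{\{x_i\}}=a$, $v_{\{x_j\}}=b$, all other entries $0$, so $vu^{i,j} = a\,u^{i,j}_{\{x_i\}} + b\,u^{i,j}_{\{x_j\}}$, and this is $\geq 0$ iff $a/(a+b)\geq\theta_i$. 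Letting $a/(a+b)$ range over rationals approaching $\theta_i$ from both sides forces the separating hyperplane restricted to these two coordinates to be exactly $\{(a,b)\colon a\,u^{i,j}_{\{x_i\}} + b\,u^{i,j}_{\{x_j\}}=0\}$ to coincide with $\{a/(a+b)=\theta_i\}$, i.e. $(1-\theta_i)u^{i,j}_{\{x_i\}} = -\theta_i u^{i,j}_{\{x_j\}}$. Wait --- I must be careful: the ratio $a=\theta_i(a+b)$ rearranges to $(1-\theta_i)a = \theta_i b$, so the boundary direction is $u^{i,j}_{\{x_i\}}/u^{i,j}_{\{x_j\}} = -\theta_i/(1-\theta_i)$, giving $\theta_i u^{i,j}_{\{x_i\}} = -(1-\theta_i)u^{i,j}_{\{x_j\}}$. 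The strict positivity $u^{i,j}_{\{x_i\}}>0$ follows because a pure profile of $\{x_i\}$ voters must yield $x_i$ (unanimity), so that $v$ lies in $\bar Q_i$ with $vu^{i,j}=a\,u^{i,j}_{\{x_i\}}\geq 0$, and it cannot be $0$ since then $u^{i,j}$ would vanish on coordinates $\{x_i\},\{x_j\}$ and, via the later steps, be forced to be zero everywhere.

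\emph{Step 2 (Claim 2: monotonicity of the entries).} For an arbitrary interval $X$, I would compare $X$ with $\{x_i\}$ and $\{x_j\}$ using robustness. Consider a profile sitting exactly on the boundary (or infinitesimally perturbed across it) and having one voter report $X$; robustness controls how the winner may change as that voter shrinks $X$ towards $\{x_i\}$ or $\{x_j\}$ one endpoint at a time. Concretely, I would take a profile $v$ with $vu^{i,j}$ slightly positive (so $f$ picks $x_i$), one of whose voters reports $X$, and argue via robustness that replacing that voter's ballot by $\{x_i\}$ keeps the winner at $x_i$ (since shrinking an interval towards its left end cannot move the winner leftwards past $x_{i}$ here), hence the new vector is still in $\bar Q_i$, which yields $u^{i,j}_{\{x_i\}} \geq u^{i,j}_X$; symmetrically, approaching $\{x_j\}$ from a profile on the $x_j$-side gives $u^{i,j}_X \geq u^{i,j}_{\{x_j\}}$. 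Making this rigorous requires choosing the ``slightly perturbed'' profile so that a single-voter deviation is exactly a legal $\mathcal I^{k\downarrow x}$ move, and invoking the equivalence of robustness with strong uncompromisingness from \Cref{prop:sincerity} to chain such moves.

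\emph{Step 3 (Claim 3: which entries are equal).} Case (i): if $X=[\ell,r]$ and $Y=[\ell',r']$ both have right endpoint $\ssucceq x_i$, i.e. both contain $x_i$ (equivalently $r=r'$ cases aside, both intervals ``reach'' $x_i$ from the left or lie entirely left of $x_{j}$ while containing $x_i$) --- the point is that such intervals look identical from the point of view of the $x_i$ vs.\ $x_{i+1}$ decision, because a voter reporting such an interval contributes the same peak-position value $1$ to every alternative $\ssucceq x_i$ and we can morph $X$ into $Y$ by adding/removing only alternatives strictly left of $x_i$, none of which is ever the winner (the winner is $x_i$ or $x_{i+1}$); robustness then says the winner never changes, so the two corresponding vectors are simultaneously in or out of $\bar Q_i$, forcing $u^{i,j}_X=u^{i,j}_Y$. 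Case (ii) is the mirror image, morphing by alternatives strictly right of $x_{i+1}=x_j$. Case (iii): if both $X$ and $Y$ contain $\{x_i,x_j\}$, morph $X$ into $Y$ by adding/removing alternatives that are all $\ssucc x_i$ or $\ssucc x_j$ --- again none is ever the winner --- so robustness fixes the winner and the entries agree.

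\emph{Main obstacle.} The routine-but-delicate part is Step 2 and Step 3: turning ``morph $X$ into $Y$ without touching the winner'' into an honest chain of $\mathcal I^{k\downarrow x}$ operations, each satisfying the side conditions $|I_k|\geq 2$ and ``$x$ is an endpoint.'' One must carefully pick the ambient profile so that the single-voter ballot is always a legal interval at every intermediate step, and argue that the winner is pinned to $\{x_i,x_{i+1}\}$ throughout so that the endpoint-addition/removal moves of robustness apply in their ``winner unchanged'' clause rather than their ``winner shifts by one'' clause. I expect the cleanest route is to first establish, using \Cref{lem:generalizedmedian} and robustness (as in the proof of \Cref{lem:hyperplane}(1)), that on the relevant slice of profiles the winner is always $x_i$ or $x_{i+1}$, and then push everything through strong uncompromisingness (\Cref{prop:sincerity}), which packages exactly the ``the winner does not move unless the endpoints cross it'' guarantee we need.
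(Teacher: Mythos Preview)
Your plan has the right shape and lands on essentially the same mechanism as the paper: probe $f$ with profiles concentrated on $\{x_i\},\{x_j\}$ and one extra interval, then use robustness/strong uncompromisingness to control the winner while you read off the hyperplane entry. Two places, however, do not go through as written.

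\textbf{Claim (1), positivity.} Your argument that $u^{i,j}_{\{x_i\}}\neq 0$ appeals to ``the later steps'' (i.e., Claims (2) and (3)) to force $u^{i,j}=0$. But your proofs of Claims (2) and (3) already rely on knowing $u^{i,j}_{\{x_i\}}>0>u^{i,j}_{\{x_j\}}$ in order to place profiles near the boundary and to know which side is which; this is circular. The paper breaks the circle by proving positivity directly first: since $\bar Q_j$ is full-dimensional there is some interval $Z$ with $u^{i,j}_Z<0$; a profile with one voter on $Z$ and many on $\{x_i\}$ has winner $x_i$ (robustness plus the $\mathcal{D}_1^*$ median representation from \Cref{lem:generalizedmedian}), hence $w\,u^{i,j}_{\{x_i\}}+u^{i,j}_Z\geq 0$ forces $u^{i,j}_{\{x_i\}}>0$.

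\textbf{Claim (2), direction of the implication.} From ``$v\in\bar Q_i$ with $vu^{i,j}$ slightly positive, one voter reports $X$, morph to $\{x_i\}$, winner stays $x_i$, so $v'\in\bar Q_i$'' you only obtain $v'u^{i,j}\geq 0$, i.e.\ $u^{i,j}_{\{x_i\}}-u^{i,j}_X\geq -vu^{i,j}$, a \emph{negative} lower bound; this does not yield $u^{i,j}_{\{x_i\}}\geq u^{i,j}_X$. The paper runs the argument by contradiction in the opposite direction: assume $u^{i,j}_X>u^{i,j}_{\{x_i\}}$, pick a $\mathcal{D}_1^*$ profile with $w_1$ voters on $\{x_i\}$ and $w_2-w_1$ on $\{x_j\}$ just on the $x_j$ side of the threshold (so $f=x_j$), then swap $t$ of the $\{x_i\}$-voters to $X$; a case analysis on where $X$ sits relative to $x_i,x_j$ (this is exactly the ``winner stays in $\{x_i,x_j\}$'' obstacle you flag) shows via robustness that the winner is still $x_j$, yet the assumption makes $v(\mathcal I')u^{i,j}>0$, contradicting $v(\mathcal I')\in\bar Q_j$. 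Your Step~3 plan is correct and matches the paper's, which likewise frames it as a contradiction by constructing two profiles $\mathcal I^X,\mathcal I^Y$ with $v(\mathcal I^X)u^{i,j}$ and $v(\mathcal I^Y)u^{i,j}$ of opposite sign and using that the $X\leftrightarrow Y$ morph never touches $x_i$ or $x_j$.
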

\begin{proof}
    Fix two alternatives $x_i,x_j\in A$ with $x_i\ssucc^* x_j$ and let $u^{i,j}$ denote the non-zero vector derived in \Cref{lem:hyperplane} that satisfies that $vu^{i,j}\geq 0$ for all $v\in \bar Q_i$ and $vu^{i,j}\leq 0$ for all $v\in \bar Q_j$. 
    We will prove each claim separately, but we will always use the same strategy: if the given equation is violated, we will construct a profile $\mathcal I$ such that $v(\mathcal I) u^{i,j}>0$ (or $v(\mathcal I) u^{i,j}<0$) but $f(\mathcal I)=x_j$ (or $f(\mathcal I)=x_i$). This is a contradiction since $f(\mathcal I)=x_i$ implies that $v(\mathcal I)\in \bar Q_i$ and therefore $v(\mathcal I)u^{i,j}\geq 0$. 
    \medskip

    \textbf{Claim (1):} We will first show that $u^{i,j}_{\{x_i\}}>0$. To this end, we first observe that, by Claim (1) of \Cref{lem:hyperplane}, both $\bar Q_i$ and $\bar Q_j$ are fully dimensional. In particular, this implies for $\bar Q_i$ that $\text{int } \bar Q_i\neq \emptyset$. By Claim (3) of \Cref{lem:hyperplane}, this means that there is a vector $v$ such that $vu^{i,j}>0$. This requires that there is an interval $X$ such that $u^{i,j}_X>0$. Moreover, if $u^{i,j}_Y\geq 0$ for all $Y\in \Lambda$, then $\bar Q_j$ would not be fully dimensional because this implies that $v\in \bar Q_j$ only if $v_X=0$. Hence, there is an interval $Z\in \Lambda$ such that $u^{i,j}_Z<0$. 
    
    Now, let $\delta=\min(\theta_i, 1-\theta_i)$ and define $w\in\mathbb{N}$ such that $\frac{1}{w+1}<\delta$. We consider the profile $\mathcal I$ where a single voter reports $Z$ and $w$ voters report $\{x_i\}$, and we will show that $f(\mathcal I)=x_i$. Assume for contradiction that this is not the case and let $x_k=f(\mathcal I)$ denote the chosen alternative. We consider next the profile $\mathcal I'$ derived from $\mathcal I$ by assigning the interval $\{x_k\}$ to the voter who initially reported $Z$. Since this voter does not remove $x_k$ from his interval (but he may add it), robustness implies that $f(\mathcal I')=f(\mathcal I)=x_k$. Next, since $\mathcal I'\in\mathcal{D}_1^*$, it holds that $f(\mathcal I')=\max_{\ssucc}\{x_h\in A\colon \Pi_\mathit{SP}(\mathcal I', x_h)\geq \theta_h n_{\mathcal I'}\}$ by \Cref{lem:generalizedmedian}. However, if $x_k\ssucc x_i$, we compute that 
    \begin{align*}
        \Pi(\mathcal I', x_k)=1<\delta (w+1) \leq\theta_i n_{\mathcal I'}\leq\theta_kn_{\mathcal I'}. 
    \end{align*}
    Here, the last inequality uses that $\theta_k\geq \theta_i$ as $x_k\ssucc x_i$, and the second to last one uses the definition of $\delta$. This contradicts that $f(\mathcal I')=\max_{\ssucc}\{x_h\in A\colon \Pi_\mathit{SP}(\mathcal I', x_h)\geq \theta_h n_{\mathcal I'}\}=x_k$. 
    As the second case, suppose that $x_i\ssucc x_k$. We observe that 
\begin{align*}
        \Pi_\mathit{SP} (\mathcal I', x_i)=w=n_{\mathcal I'}(1-\frac{1}{w+1})>n_{\mathcal I'} (1-\delta)\geq n_{\mathcal I'} (1-(1-\theta_i))\geq \theta_i n_{\mathcal I'}. 
    \end{align*}
    This shows that $\max_{\ssucc}\{x_h\in A\colon \Pi_\mathit{SP}(\mathcal I', x_h)\geq \theta_h n_{\mathcal I'}\}\ssucceq x_i$, which contradicts $f(\mathcal I')=x_k$. Since we have a contradiction in both cases, it follows that the assumption that $f(\mathcal I)=x_k\neq x_i$ is wrong, i.e., it holds that $f(\mathcal I)=x_i$. This implies that $v(\mathcal I)\in \bar Q_i$, so $v(\mathcal I) u^{i,j}=w u^{i,j}_{\{x_i\}} + u^{i,j}_Z\geq 0$. From this, we finally infer that $u^{i,j}_{\{x_i\}}>0$ because $u^{i,j}_Z<0$. 

    Next, we note that $u^{i,j}_{\{x_i\}}>0$ implies that $u^{i,j}_{\{x_j\}}<0$. To see this, we can consider a profile $\mathcal I''$ where a single voter reports $\{x_i\}$ and $w'$ voters report $\{x_j\}$. Just as before, if $w'$ is large enough, then $f(\mathcal I'')=x_j$. This implies that $v(\mathcal I'')\in \bar Q_j$ and thus $v(\mathcal I'') u^{i,j}=u^{i,j}_{\{x_i\}}+w'u^{i,j}_{\{x_j\}}\leq 0$, which is only possible if $u^{i,j}_{\{x_j\}}< 0$.

    Next, we will prove that $\theta_i u^{i,j}_{\{x_i\}}=-(1-\theta_i) u^{i,j}_{\{x_j\}}$ and we assume for contradiction that this is not the case. 
    We subsequently focus on the case that $\theta_i u^{i,j}_{\{x_i\}}<-(1-\theta_i) u^{i,j}_{\{x_j\}}$; the case that $\theta_i u^{i,j}_{\{x_i\}}>-(1-\theta_i) u^{i,j}_{\{x_j\}}$ follows analogously by exchanging the roles of $x_i$ and $x_j$. 
    By reformulating our assumption, we obtain that $\theta_i<\frac{-u^{i,j}_{\{x_j\}}}{u^{i,j}_{\{x_i\}}-u^{i,j}_{\{x_j\}}}$. Since $-u^{i,j}_{\{x_j\}}>0$, there is a value $\lambda\in (\theta_i, \frac{-u^{i,j}_{\{x_j\}}}{u^{i,j}_{\{x_i\}}-u^{i,j}_{\{x_j\}}})\cap \mathbb{Q}\subseteq  (0,1)\cap \mathbb{Q}$. 
    Moreover, there are two integers $w_1,w_2\in\mathbb{N}$ such that $\lambda=\frac{w_1}{w_1+w_2}$. 
    Now, consider the profile $\mathcal I$ such that $w_1$ voters report $\{x_i\}$ and $w_2$ voters report $\{x_j\}$. 
    Since $\mathcal{I}\in\mathcal{D}_1^*$, $\Pi_\mathit{SP}(\mathcal I,x_i)=w_1>\theta_i n_{\mathcal I}$, and $\Pi_\mathit{SP}(\mathcal I, x_k)=0$ for all $x_k$ with $x_k\ssucc x_i$, we derive that $f(\mathcal I)=\max_{\ssucc}\{x_h\in A\colon \Pi_\mathit{SP}(\mathcal I, x_h)\geq \theta_h n_{\mathcal I}\}= x_i$ and thus $v(\mathcal I)\in \bar Q_i$. 
    On the other hand, it holds that 
    \begin{align*}
        v(\mathcal I)u^{i,j}&=w_1 u^{i,j}_{\{x_i\}}+w_2 u^{i,j}_{\{x_j\}}\\
        &=(w_1+w_2)\left(\lambda u^{i,j}_{\{x_i\}}+(1-\lambda) u^{i,j}_{\{x_j\}}\right)\\
        &<(w_1+w_2)\left(\frac{-u^{i,j}_{\{x_j\}}}{u^{i,j}_{\{x_i\}}-u^{i,j}_{\{x_j\}}}\cdot u^{i,j}_{\{x_i\}} +(1+\frac{u^{i,j}_{\{x_j\}}}{u^{i,j}_{\{x_i\}}-u^{i,j}_{\{x_j\}}})\cdot u^{i,j}_{\{x_j\}}\right)\\
        &=(w_1+w_2)\left(\frac{-u^{i,j}_{\{x_j\}}}{u^{i,j}_{\{x_i\}}-u^{i,j}_{\{x_j\}}}\cdot u^{i,j}_{\{x_i\}}
        +\frac{u^{i,j}_{\{x_i\}}}{u^{i,j}_{\{x_i\}}-u^{i,j}_{\{x_j\}}}
        \cdot u^{i,j}_{\{x_j\}}\right)\\
        &=0.
    \end{align*}

    In the third step, we use that $\lambda<\frac{-u^{i,j}_{\{x_j\}}}{u^{i,j}_{\{x_i\}}-u^{i,j}_{\{x_j\}}}$ as well as $u^{i,j}_{\{x_i\}}>0$ and $u^{i,j}_{\{x_j\}}<0$.
    Finally, we note that this inequality contradicts that $v(\mathcal I)\in \bar Q_i$, so the assumption that $\theta_i u^{i,j}_{\{x_i\}}<-(1-\theta_i) u^{i,j}_{\{x_j\}}$ is wrong. Since the case that $\theta_i u^{i,j}_{\{x_i\}}>-(1-\theta_i) u^{i,j}_{\{x_j\}}$ is symmetric, we conclude that $\theta_i u^{i,j}_{\{x_i\}}=-(1-\theta_i) u^{i,j}_{\{x_j\}}>0$. \medskip

    \textbf{Claim (2):} Consider an arbitrary interval $X$ and assume for contradiction that $u^{i,j}_X\not\in[u^{i,j}_{\{x_j\}}, u^{i,j}_{\{x_i\}}]$. We will assume here that $u^{i,j}_X>u^{i,j}_{\{x_i\}}$ since our cases are again symmetric. Moreover, we let $\delta=u^{i,j}_X-u^{i,j}_{\{x_i\}}$ and choose $t\in\mathbb{N}$ such that $t\cdot \delta>u^{i,j}_{\{x_i\}}-u^{i,j}_{\{x_j\}}$. 
    Next, we choose integers $w_1,w_2\in\mathbb{N}$ such that $w_2>w_1\geq t$, $\frac{t}{w_2}<1-\theta_i$, and $\frac{w_1}{w_2}<\theta_i\leq \frac{w_1+1}{w_2}$. Moreover, we consider the profile $\mathcal I$ in which $w_1$ voters report $\{x_i\}$ and $w_2-w_1$ voters report $\{x_j\}$. It holds that $f(\mathcal I)=\max_{\ssucc}\{x_h\in A\colon \Pi_\mathit{SP}(\mathcal I, x_h)\geq \theta_h n_{\mathcal I}\}$ since $\mathcal{I}\in\mathcal{D}_1^*$. We thus conclude that $f(\mathcal I)=x_j$ because $\Pi_\mathit{SP}(\mathcal I,x_k)=0$ for all $x_k$ with $x_k\ssucc x_i$, $\Pi_\mathit{SP}(\mathcal I, x_i)=w_1<\theta_i w_2=\theta_i n_{\mathcal I}$, and $\Pi_\mathit{SP}(\mathcal I, x_j)=n_{\mathcal I}$. Note that we use here that $x_i\ssucc^* x_j$ as otherwise some alternative $x_k$ with $x_i\ssucc x_k\ssucc x_j$ could be chosen.  
    
    Next, let $\mathcal{I}'$ denote the profile where $w_1-t$ voters report $\{x_i\}$, $t$ voters report $X$, and $w_2-w_1$ voters report $\{x_j\}$. We claim that $f(\mathcal{I}')=x_j$. 
    To show this, let $X=[x_\ell,x_r]$. Now, if $x_r\ssucc x_j$, we can transform the interval $\{x_i\}$ to $X$ by sequentially adding and removing alternatives without touching $x_j$ as $x_i\ssucc^* x_j$. Hence, robustness implies that $f(\mathcal I')=x_j$. On the other hand, if $x_j\ssucceq x_r$ and thus $x_i\ssucc x_r$, we first expand the interval $\{x_i\}$ to the right until our $t$ voters report $[x_i, x_r]$. Repeatedly applying robustness during this process shows that these steps can only move the winner to the right, i.e., that an alternative $y$ with $x_j\ssucceq y$ is now chosen. Finally, we transform the intervals $[x_i, x_r]$ into $[x_\ell, x_r]$ by either adding more alternatives to the left of $x_i$ (if $x_\ell\ssucc x_i$) or by deleting alternatives from the interval (if $x_i\ssucc x_\ell$). In the first case, robustness implies that the winner cannot change as the current winner is right of $x_i$. In the second case, robustness only allows the winner to move further to the right. Hence, it follows that $x_j\ssucceq f(\mathcal I')$. 
    
    Finally, assume for contradiction that $f(\mathcal I')=x_k$ for some $x_k$ with $x_j\ssucc x_k$. In this case, we consider the profile $\mathcal I''$ where $w_1-t$ voters report $\{x_i\}$, $t$ voters report $\{x_k\}$, and $w_2-w_1$ voters report $\{x_j\}$. For this profile, robustness from $\mathcal I'$ implies that $f(\mathcal I'')=x_k$. 
    However, $\mathcal{I}''\in\mathcal{D}_1^*$, so $f(\mathcal I'')=\max_{\ssucc}\{x_h\in A\colon \Pi_\mathit{SP}(\mathcal I'', x_h)\geq \theta_h n_{\mathcal I''}\}$. Hence, we compute that $\Pi_\mathit{SP}(\mathcal I'',x_j)=w_1-t+w_2-w_1=n_{\mathcal I''}(1-\frac{t}{n_{\mathcal I''}})>n_{\mathcal I''}(1-(1-\theta_i))= \theta_i n_{\mathcal I''}\geq \theta_j n_{\mathcal I''}$. The strict inequality here uses that $\frac{t}{n_{\mathcal I''}}=\frac{t}{w_2}<1-\theta_i$ and the last inequality that $\theta_i\geq \theta_j$ as $x_i\ssucc x_j$. This contradicts that $x_j\ssucc f(\mathcal I'')$, so we conclude that $f(\mathcal{I}')=x_j$.

    Finally, we compute $vu^{i,j}$ for the vector $v=v(\mathcal{I}')$. 
    \begin{align*}
        vu^{i,j}&=(w_1-t)u^{i,j}_{\{x_i\}}+t u^{i,j}_X + (w_2-w_1)u^{i,j}_{\{x_j\}}\\
        &=w_1u^{i,j}_{\{x_i\}} + (w_2-w_1) u^{i,j}_{\{x_j\}}+t\delta\\
        &>(w_1+1) u^{i,j}_{\{x_i\}} + (w_2-w_1-1) u^{i,j}_{\{x_j\}}\\
        &=w_2\left(\frac{w_1+1}{w_2} u^{i,j}_{\{x_i\}} + (1-\frac{w_1+1}{w_2}) u^{i,j}_{\{x_j\}}\right)\\
        &\geq w_2\left(\theta_i u^{i,j}_{\{x_i\}} + (1-\theta_i) u^{i,j}_{\{x_j\}}\right)\\
        &=0
    \end{align*}

    The first equality here uses the definition of $v$ (resp. $\mathcal I'$), the second one uses that $\delta=u^{i,j}_X-u^{i,j}_{\{x_i\}}$, and the third one that we choose $t$ such that $t\delta>u^{i,j}_{\{x_i\}}-u^{i,j}_{\{x_j\}}$. The fourth line is a simple transformation, and the fifth inequality uses that $\frac{w_1+1}{w_2}\geq\theta_i$ and $u^{i,j}_{\{x_i\}}>0$ as well as $1-\frac{w_1+1}{w_2}\leq1-\theta_i$ and $u^{i,j}_{\{x_j\}}<0$. The last step follows from Claim (1). However, the observation that $vu^{i,j}>0$ contradicts that $f(\mathcal I')=x_j$ as the latter implies that $v\in \bar Q_j$ and thus $vu^{i,j}\leq 0$. This is the desired contradiction and we thus infer that $u^{i,j}_{\{x_i\}}\geq u^{i,j}_X$. Finally, a symmetric argument shows that $u^{i,j}_X\geq u^{i,j}_{\{x_j\}}$, thus completing the proof of Claim (2).\medskip

    \textbf{Claim (3):} Finally, we will show that $u^{i,j}_X=u^{i,j}_Y$ for all intervals $X=[\ell, r]$, $Y=[\ell',r']$ such that either $r\ssucceq x_i$ and $r'\ssucceq x_i$, $x_j\ssucceq \ell$ and $x_j\ssucceq \ell'$, or $\{x_i, x_j\}\subseteq X\cap Y$. We focus here on the last case, i.e., we assume that $\{x,y\}\subseteq X\cap Y$, and note that all three cases follow from analogous arguments. Moreover, we suppose that $Y\setminus X=\{x_k\}$, i.e., $Y$ arises from $X$ by adding one more alternative $x_k$. This assumption is without loss of generality, because for all intervals $X',Y'$ with $\{x,y\}\subseteq X'\cap Y'$, we can transform $X'$ to $Y'$ by one after another adding and deleting alternatives. 

    Now, assume for contradiction that $u^{i,j}_X\neq u^{i,j}_Y$ and first consider the case that $u^{i,j}_X<u^{i,j}_Y$. We define $\delta=u^{i,j}_Y-u^{i,j}_X$ and let $t\in\mathbb{N}$ denote an integer such that $\delta t>2u^{i,j}_{\{x_i\}}$. 
    Moreover, let $w_1,w_2\in\mathbb{N}$ denote integers such that $\frac{t}{w_1+w_2+t}<\min(\theta_i, 1-\theta_i)$ and $w_1u^{i,j}_{\{x_i\}}+w_2 u^{i,j}_{\{x_j\}}+tu^{i,j}_X<0<w_1u^{i,j}_{\{x_i\}}+w_2 u^{i,j}_{\{x_j\}}+tu^{i,j}_Y$. 
    Such integers exist because we can first set $w_2$ to an arbitrarily large number such that $tu^{i,j}_X+w_2 u^{i,j}_{\{x_j\}}<0$ and then choose $w_1$ such that $-u^{i,j}_{\{x_i\}}\leq w_1u^{i,j}_{\{x_i\}}+w_2 u^{i,j}_{\{x_j\}}+tu^{i,j}_X<0$. 
    Next, let $\mathcal{I}^X$ (resp. $\mathcal{I}^Y$) denote the profile where $w_1$ voters report $\{x_i\}$, $w_2$ voters report $\{x_j\}$, and $t$ voters report $X$ (resp. $Y$), and let $v^X=v(\mathcal{I}^X)$ and $v^Y=v(\mathcal{I}^Y)$. We first observe that, by construction, $v^Xu^{i,j}<0$, which means that $v^X\not\in \bar Q_i$ and hence $f(\mathcal{I}^X)\neq x_i$. Similarly, $v^Yu^{i,j}>0$ and hence $v^Y\not\in \bar Q_j$ and $f(\mathcal{I}^Y)\neq x_j$. 

    Next, we will show that $f(\mathcal{I}^X)\in \{x_i,x_j\}$ and $f(\mathcal{I}^Y)\in \{x_i,x_j\}$. Since the argument for both profiles is symmetric, we assume for contradiction that $f(\mathcal{I}^X)=x_k\not\in \{x_i, x_j\}$. In this case, let $\mathcal{\hat I}^X$ denote the profile where all voters reporting $X$ change their interval to $\{x_k\}$. By repeatedly applying robustness, we infer that $f(\mathcal{\hat I}^X)=x_k$. On the other hand, $\mathcal{\hat I}^X\in\mathcal{D}_1^*$, which implies that $f(\mathcal{\hat I}^X)=\max_{\ssucc}\{x_h\in A\colon \Pi_\mathit{SP}(\mathcal{\hat I}^X, x_h)\geq \theta_h n_{\mathcal{\hat I}^X}\}$. Now, if $x_k\ssucc x_i$, this results in a contradiction as $\Pi_\mathit{SP}(\mathcal{\hat I}^X, x_k)=t<\theta_i (w_1+w_2+t)\leq \theta_k n_{\mathcal{\hat I}^X}$. For the last inequality, we recall that $\theta_k\geq \theta_i$ if $x_k\ssucc x_i$ and that $n_{\mathcal{\hat I}^X}=w_1+w_2+t$. By contrast, if $x_j\ssucc x_k$, we derive a contradiction because $\Pi_\mathit{SP}(\mathcal{\hat I}^X, x_j)=w_1+w_2=n_{\mathcal{\hat I}^X}(1-\frac{t}{w_1+w_2+t})>n_{\mathcal{\hat I}^X}(1-(1-\theta_i))\geq \theta_jn_{\mathcal{\hat I}^X}$. Since $x_i\ssucc^* x_j$, this shows that the assumption that $f(\mathcal{I}^X)\not\in \{x_i,x_j\}$ is wrong. Furthermore, an analogous argument proves that $f(\mathcal{I}^Y)\in \{x_i,x_j\}$. Combined with our previous insights, this means that $f(\mathcal{I}^X)=x_j$ and $f(\mathcal{I}^Y)=x_i$. However, robustness rules out such a deviation because $\{x_i, x_j\}\cap X=\{x_i, x_j\}\cap Y$, i.e., the membership of $x_i$ and $x_j$ in $X$ an $Y$ does not change but this is necessary if the winner changes from $x_i$ to $x_j$. Hence, our assumption that $u^{i,j}_X<u^{i,j}_Y$ must have been wrong. 
    
    As second case suppose that $u^{i,j}_X>u^{i,j}_Y$. In this case, we define $\delta=u^{i,j}_X-u^{i,j}_Y$ and let $t\in\mathbb{N}$ again denote an integer such that $\delta t>2u^{i,j}_{\{x_i\}}$. Moreover, we choose two integers $w_1,w_2\in\mathbb{N}$ such that $\frac{t}{w_1+w_2+t}<\min(\theta_i, 1-\theta_i)$ and $w_1u^{i,j}_{\{x_i\}}+w_2 u^{i,j}_{\{x_j\}}+tu^{i,j}_X>0>w_1u^{i,j}_{\{x_i\}}+w_2 u^{i,j}_{\{x_j\}}+tu^{i,j}_Y$, and define the profiles $\mathcal{I}^X$ and $\mathcal{I}^Y$ as before. Analogous arguments as before show that $f(\mathcal{I}^X)=x_i$ and $f(\mathcal{I}^Y)=x_j$. However, as $Y\setminus X=\{x_k\}$ and $x_k\not\in\{x_i,x_j\}$, this contradicts robustness, so $u^{i,j}_X>u^{i,j}_Y$ is not possible. This means that $u^{i,j}_X=u^{i,j}_Y$. Finally, we note that we never used the fact that $\{x_i, x_j\}\subseteq X\cap Y$, but only that robustness rules out that the winner changes from $x_i$ to $x_j$ for the given modification. Since this holds for all three cases of this claim, it is straightforward to extend the analysis to the remaining cases.
\end{proof}

Motivated by Claim (1) of \Cref{lem:hyperplaneAnalysis}, we will assume from now on that $u^{i,j}_{\{x_i\}}=(1-\theta_i)$ and $u^{i,j}_{\{x_j\}}=-\theta_i$ for all $x_i,x_j\in A$ with $x_i\ssucc^* x_j$. This is without loss of generality because we can scale the vector $u^{i,j}$ arbitrarily and it still separates $\bar Q_i$ from $\bar Q_j$. We will next use our insights to severely simplify the representation of the sets $\bar Q_i$. 

\begin{lemma}\label{lem:hyperplaneOrdering}
    The following claims are true: 
    \begin{enumerate}[label=(\arabic*)]
        \item $\bar Q_1=\{v\in \mathbb{R}^q\colon vu^{1,2}\geq 0\}$.
        \item $\bar Q_i=\{v\in \mathbb{R}^q\colon vu^{i-1,i}\leq 0\land vu^{i,i+1}\geq 0\}$ for all $i\in \{2,\dots, m-1\}$.
        \item $\bar Q_m=\{v\in\mathbb{R}^q\colon vu^{m-1,m}\leq 0\}$.
    \end{enumerate}
\end{lemma}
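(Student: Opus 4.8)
The plan is to begin from Claim~(3) of \Cref{lem:hyperplane}, which gives $\bar Q_i=\{v\in\mathbb{R}^q_{\geq 0}\colon \forall x_j\in A\setminus\{x_i\}\colon vu^{i,j}\geq 0\}$, and to show that, among these inequalities, only the two coming from the neighbours $x_{i-1}$ and $x_{i+1}$ are binding (the ambient set is $\mathbb{R}^q_{\geq 0}$ throughout, since every $\bar Q_i$ lies there). As $-u^{i-1,i}$ separates $\bar Q_i$ from $\bar Q_{i-1}$ in the direction required by \Cref{lem:hyperplane}(3), I may take $u^{i,i-1}=-u^{i-1,i}$, so the neighbour inequalities read $vu^{i-1,i}\leq 0$ and $vu^{i,i+1}\geq 0$. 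Writing $P_1=\{v\in\mathbb{R}^q_{\geq 0}\colon vu^{1,2}\geq 0\}$, $P_i=\{v\in\mathbb{R}^q_{\geq 0}\colon vu^{i-1,i}\leq 0\land vu^{i,i+1}\geq 0\}$ for $2\leq i\leq m-1$, and $P_m=\{v\in\mathbb{R}^q_{\geq 0}\colon vu^{m-1,m}\leq 0\}$, the inclusion $\bar Q_i\subseteq P_i$ is immediate from \Cref{lem:hyperplane}(3). So the task reduces to $P_i\subseteq\bar Q_i$, and the one new ingredient needed is the \emph{monotonicity claim}: for every $v\in\mathbb{R}^q_{\geq 0}$ and every $i\in\{1,\dots,m-2\}$, $vu^{i,i+1}\geq 0$ implies $vu^{i+1,i+2}\geq 0$.

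Granting the monotonicity claim, the half-spaces $A_j:=\{v\in\mathbb{R}^q_{\geq 0}\colon vu^{j,j+1}\geq 0\}$ form an increasing chain $A_1\subseteq\dots\subseteq A_{m-1}$, so the sets $B_j:=\{v\in\mathbb{R}^q_{\geq 0}\colon vu^{j,j+1}\leq 0\}$ form a decreasing chain. Hence for $k<i$ one has $P_k\subseteq A_k\subseteq A_{i-1}$ while $P_i\subseteq B_{i-1}$, so $P_i\cap P_k\subseteq\{v\colon vu^{i-1,i}=0\}$; and for $k>i$ one has $P_k\subseteq B_{k-1}\subseteq B_i$ while $P_i\subseteq A_i$, so $P_i\cap P_k\subseteq\{v\colon vu^{i,i+1}=0\}$. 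Since every $u^{j,j+1}$ is nonzero (its $\{x_j\}$-coordinate is $1-\theta_j>0$ by \Cref{lem:hyperplaneAnalysis}(1)), $P_i\cap P_k$ lies in a hyperplane and is therefore not full-dimensional whenever $k\neq i$. Now suppose $P_i\setminus\bar Q_i\neq\emptyset$ for some $i$; since $\bar Q_i$ is closed and $P_i$ is a full-dimensional polyhedron (it contains the full-dimensional set $\bar Q_i$ by \Cref{lem:hyperplane}(1)), $P_i\setminus\bar Q_i$ contains a full-dimensional ball $D\subseteq\mathrm{int}\,P_i$. As $D\subseteq\mathbb{R}^q_{\geq 0}=\bigcup_k\bar Q_k$ and $D\cap\bar Q_i=\emptyset$, the ball $D$ is a finite union of the closed sets $D\cap\bar Q_k$, $k\neq i$, so by Baire's theorem one of them has nonempty interior; but $\bar Q_k\subseteq P_k$, so this interior lies in $P_i\cap P_k$, contradicting the previous sentence. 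Hence $\bar Q_i=P_i$ for all $i$, which is exactly the three claimed identities.

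It remains to prove the monotonicity claim, which I expect to be the main obstacle. By density it suffices to treat rational $v$ (both $A_i$ and $B_i$ are closed with nonempty interior). Using \Cref{lem:hyperplaneAnalysis}(1)--(3) and the normalisation $u^{j,j+1}_{\{x_j\}}=1-\theta_j$, $u^{j,j+1}_{\{x_{j+1}\}}=-\theta_j$, one checks that $u^{j,j+1}_X+\theta_j=\pi_\alpha(X,x_j)$ for every interval $X$, where $\alpha$ is the weight vector with $\alpha_j=u^{j,j+1}_{\{x_j,x_{j+1}\}}+\theta_j\in[0,1]$; consequently $vu^{j,j+1}=\Pi_\alpha(\mathcal I,x_j)-\theta_j n_{\mathcal I}$ for the profile $\mathcal I$ realising $v$, and the monotonicity claim is precisely the statement that $(\alpha,\theta)$ is compatible. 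I would prove this directly from robustness of $f$ --- note that \Cref{lem:robust} cannot simply be quoted, since its hypothesis concerns the rule $g_{\alpha,\theta}$ rather than $f$ --- by mimicking Claim~3 of \Cref{lem:robust} and Claim~3 of \Cref{lem:hyperplaneAnalysis}: from a rational $v$ with $\Pi_\alpha(\mathcal I,x_i)\geq\theta_i n_{\mathcal I}$ but $\Pi_\alpha(\mathcal I,x_{i+1})<\theta_{i+1}n_{\mathcal I}$, construct two profiles that differ by removing a single endpoint alternative, use robustness of $f$ to carry the winner from one to the other, and use robustness together with \Cref{lem:generalizedmedian} to pin down $f$ on the resulting singleton profiles in $\mathcal D_1^*$; the two sign conditions then force the winner to jump across $x_{i+1}$ in a way robustness forbids. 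The delicate part is choosing the multiplicities in these profiles so that the explicit values $1-\theta_j,-\theta_j$ balance against the straddling weights $u^{j,j+1}_{\{x_j,x_{j+1}\}}$, exactly as in the case analysis of \Cref{lem:robust}.
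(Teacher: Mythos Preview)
Your reduction to the monotonicity claim and your deduction of the three identities from it are both sound. For the deduction you use a covering argument (the $P_i$'s pairwise intersect only in hyperplanes, so a ball in $P_i\setminus\bar Q_i$ cannot be covered by the remaining $\bar Q_k\subseteq P_k$), whereas the paper perturbs a point $v\in P_i\setminus\bar Q_i$ and chains the monotonicity implication in both directions to exhibit $v\notin\bar Q_j$ for every $j$, contradicting $\bigcup_j\bar Q_j=\mathbb R^q_{\ge 0}$. Both routes are short and correct.

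The substantive gap is in the monotonicity argument itself. The identification $vu^{j,j+1}=\Pi_\alpha(\mathcal I,x_j)-\theta_j n_{\mathcal I}$ is fine (it is \Cref{lem:Scoring}, whose proof is independent of the present lemma), but recasting monotonicity as ``$(\alpha,\theta)$ compatible'' and then mimicking Claim~3 of \Cref{lem:robust} does not close the circle. That claim pins down the outputs of the specific rule $g_{\alpha,\theta}$ by computing $\Pi_\alpha$ directly; here you do not yet know that $f=g_{\alpha,\theta}$, and the only information you have about $f$ on non-singleton profiles is the implicit description $f(\mathcal I)=x_i\Rightarrow v(\mathcal I)u^{i,j}\ge 0$ for \emph{all} $j$, which cannot determine $f(\mathcal I)$ from the two neighbour inequalities alone. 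Nor does your plan to ``land in $\mathcal D_1^*$'' work for a generic counterexample $v$: the realising profile may contain many straddling intervals, and a single endpoint removal will not reduce it to singletons. The paper instead first \emph{simplifies} the counterexample: starting from a rational $v$ with $vu^{i,i+1}>0$ and $vu^{i+1,i+2}<0$, it uses \Cref{lem:hyperplaneAnalysis}(2)--(3) to replace $v$ by a vector supported only on $\{x_i\}$, $\{x_{i+2}\}$, and $\{x_i,x_{i+1},x_{i+2}\}$ while preserving both signs. On the resulting three-type profile $\mathcal I$, robustness and unanimity force $f(\mathcal I)\in\{x_i,x_{i+1},x_{i+2}\}$; the sign $vu^{i,i+1}>0$ rules out $x_{i+1}$ (since $v\in\bar Q_{i+1}$ would give $vu^{i,i+1}\le 0$); and one further robustness move, again using \Cref{lem:hyperplaneAnalysis}(3) to preserve the relevant sign, yields the contradiction in each remaining case. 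This reduction of the counterexample to three interval types is the concrete step your sketch is missing.
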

\begin{proof}
    For proving this lemma, we will first show an auxiliary claim: for all alternatives $x_i,x_j,x_k\in A$ with $x_i\ssucc^* x_j\ssucc^* x_k$, the vectors $u^{i,j}$ and $u^{j,k}$ given by \Cref{lem:hyperplane,lem:hyperplaneAnalysis}, and all vectors $v\in\mathbb{R}^q_{\geq 0}$, it holds that $vu^{i,j}\geq 0$ implies that $vu^{j,k}\geq 0$. In a second step, we prove the lemma.\medskip

    \textbf{Step 1:} Let $x_i,x_j,x_k\in A$ denote alternatives such that $x_i\ssucc^* x_j\ssucc^* x_k$ and assume for contradiction that there is a vector $v\in\mathbb{R}^q_{\geq 0}$ such that $vu^{i,j}\geq 0$ and $vu^{j,k}<0$. Now, if such a vector $v$ exists, there is also a vector $v'$ such that $v'u^{i,j}>0$ and $vu^{j,k}<0$. In more detail, since $u^{i,j}_{\{x_i\}}>0$, we can derive $v'$ from $v$ by marginally increasing $v_{\{x_i\}}$. This shows that the set $\{x\in\mathbb{R}^q_{\geq 0} \colon xu^{i,j}>0\land xu^{j,k}<0\}$ is non-empty, so there also is a vector $v^0\in\mathbb{Q}^q_{\geq 0}\setminus \{0\}$ such that $v^0 u^{i,j}>0$ and $v^0 u^{j,k}<0$. 

    We will next simplify the vector $v^0$ by employing the insights of \Cref{lem:hyperplaneAnalysis}. To this end, we first define the vector $v^1$ by $v_{\{x_j\}}^1=0$ and $v^1_X=v^0_X$ for all intervals $X\in \Lambda \setminus \{\{x_j\}\}$. Since $u^{i,j}_{\{x_j\}}<0$ and $u^{j,k}_{\{x_j\}}>0$ by Claim (1) of \Cref{lem:hyperplaneAnalysis}, it holds for this vector that $v^1u^{i,j}\geq v^0u^{i,j}>0$ and that $v^1u^{j,k}\leq v^0u^{j,k}<0$. 
    
    Next, let $\Lambda_1=\{I\in \Lambda \colon I\subseteq [x_1, x_i]\}$ denote the set of intervals that contain only alternatives weakly left of $x_i$, $\Lambda_2=\{I\in \Lambda \colon I\subseteq [x_k, x_m]\}$ denote the set of intervals that are weakly right of $x_k$, and $\Lambda_3=\{I\in \Lambda \colon \{x_i, x_j, x_k\}\subseteq I\}$ denote the set of intervals that contain $x_i$, $x_j$, and $x_k$. By Claim (3) of \Cref{lem:hyperplaneAnalysis}, we have that \emph{(i)} $u^{i,j}_X=u^{i,j}_{\{x_i\}}$ and $u^{j,k}_X=u^{j,k}_{\{x_i\}}$ for all $X\in \Lambda_1$, \emph{(ii)} $u^{i,j}_X=u^{i,j}_{\{x_k\}}$ and $u^{j,k}_X=u^{j,k}_{\{x_k\}}$ for all $X\in \Lambda_2$, and
        \emph{(iii)} $u^{i,j}_X=u^{i,j}_{\{x_i, x_j, x_k\}}$ and $u^{j,k}_X=u^{j,k}_{\{x_i, x_j, x_k\}}$ for all $X\in \Lambda_3$.
    We hence define the vector $v^2$ by \emph{(i)} $v^2_{\{x_i\}}=\sum_{X\in \Lambda_1} v^1_X$ and $v^2_{X}=0$ for all $X\in \Lambda_1\setminus \{\{x_i\}\}$, \emph{(ii)} $v^2_{\{x_k\}}=\sum_{X\in \Lambda_2} v^1_X$ and $v^2_{X}=0$ for all $X\in \Lambda_2\setminus \{\{x_k\}\}$, \emph{(iii)} $v^2_{\{x_i, x_j, x_k\}}=\sum_{X\in \Lambda_3} v^1_X$ and $v^2_{X}=0$ for all $X\in \Lambda_3\setminus \{\{x_i, x_j, x_k\}\}$, and \emph{(iv)} $v^2_X=v^1_X$ for all $X\in \Lambda\setminus (\Lambda_1\cup\Lambda_2\cup\Lambda_3)$. By our previous insights, it holds that $v^2u^{i,j}=v^1u^{i,j}>0$ and $v^2u^{j,k}=v^1u^{j,k}<0$. 

    For our third modification, let $\Lambda_4=\{I\in \Lambda \colon \{x_i, x_j\}\subseteq I, x_k\not \in I\}$ denote the intervals that contain $x_i$ and $x_j$ but not $x_k$, and let $\Lambda_5=\{I\in \Lambda \colon \{x_j, x_k\}\subseteq I, x_i\not \in I\}$ denote the intervals that contain $x_j$ and $x_k$ but not $x_i$. By Claim (2) of \Cref{lem:hyperplaneAnalysis}, it holds that $u^{i,j}_{\{x_i\}}\geq u^{i,j}_X$ for all $X\in \Lambda_4$ and $u^{j,k}_{\{x_k\}}\leq u^{j,k}_X$ for all $X\in \Lambda_5$. 
    Moreover, Claim (3) of this lemma shows that $u^{j,k}_{\{x_i\}}= u^{j,k}_X$ for all $X\in \Lambda_4$ and $u^{i,j}_{\{x_k\}}=u^{i,j}_X$ for all $X\in\Lambda_5$. 
    We now define our final vector $v^3$: \emph{(i)} $v^3_{\{x_i\}}=v^2_{\{x_i\}}+\sum_{X\in\Lambda_4} v^2_X$ and $v^3_X=0$ for all $X\in \Lambda_4$, \emph{(ii)} $v^3_{\{x_k\}}=v^2_{\{x_k\}}+\sum_{X\in\Lambda_5} v^2_X$ and $v^3_X=0$ for all $X\in \Lambda_5$, and \emph{(iii)} $v^3_X=v^2_X$ for all $X\in \Lambda\setminus (\Lambda_4\cup\Lambda_5\cup \{\{x_i\}, \{x_k\}\})$. Based on our insights from \Cref{lem:hyperplaneAnalysis}, it holds that $v^3u^{i,j}\geq v^2u^{i,j}>0$ and $v^3u^{j,k}\leq v^2u^{j,k}<0$. Moreover, it can be checked that $\Lambda=\{\{x_j\}\}\cup\bigcup_{\ell\in \{1,\dots, 5\}}\Lambda_\ell$, so we have by construction that $v^3_X=0$ for all $X\not\in \{\{x_i\}, \{x_k\}, \{x_i, x_j, x_k\}\}$. Finally, we note that $v^3\in\mathbb{Q}^q_{\geq 0}$ since $v^0\in\mathbb{Q}^q_{\geq 0}$.

    Because $v^3\in\mathbb{Q}^q_{\geq 0}$ (and $v^3\neq 0$ as $v^3u^{i,j}>0$), there is a scalar $\lambda\in\mathbb{N}$ such that $\lambda v^3\in\mathbb{N}_0^q\setminus \{0\}$. Since $\lambda v^3 u^{i,j}>0$, $\lambda v^3 u^{j,k}<0$ and $v^3_X=0$ for all $X\not\in \{\{x_i\}, \{x_k\}, \{x_i, x_j, x_k\}\}$, there are integers $w_1, w_2,t\in\mathbb{N}_0$ such that $w_1 u^{i,j}_{\{x_i\}}+w_2 u^{i,j}_{\{x_k\}}+t u^{i,j}_{\{x_i,x_j,x_k\}}>0$ and $w_1 u^{j, k}_{\{x_i\}}+w_2 u^{j, k}_{\{x_k\}}+t u^{j, k}_{\{x_i,x_j,x_k\}}<0$. 
    Now, let $\mathcal I$ denote the profile where $w_1$ voters report $\{x_i\}$, $w_2$ voters report $\{x_k\}$, and $t$ voters report $\{x_i,x_j,x_k\}$, and let $v^*=v(\mathcal I)$ denote the corresponding vector. First, it is easy to see that $f(\mathcal I)\in \{x_i,x_j,x_k\}$. Indeed, if $f(\mathcal I)\not\in \{x_i,x_j, x_k\}$, then all our voter can deviate to report, e.g., $\{x_i\}$ and robustness implies that the outcome is not allowed to change. However, for the resulting profile $\mathcal{\bar I}$, unanimity requires that $f(\mathcal{\bar I})=x_i$, a contradiction. 
    Next, since $v^*u^{i,j}>0$ and $v^*u^{j, k}<0$, we conclude that $f(\mathcal I)\in \{x_i, x_k\}$. If $f(\mathcal I)=x_k$, we consider the profile $\mathcal{I}'$ derived from $\mathcal I$ by changing the intervals of the $t$ voters who report $\{x_i,x_j,x_k\}$ to $\{x_i,x_j\}$ and the intervals of the $w_2$ voters reporting $\{x_k\}$ to $\{x_j\}$. The conjunction of unanimity and robustness implies that $f(\mathcal{I}')=x_j$. On the other hand, Claim (3) of \Cref{lem:hyperplaneAnalysis} shows that 
    \begin{align*}
        v(\mathcal{I}')u^{i,j}&=w_1 u^{i,j}_{\{x_i\}}+w_2 u^{i,j}_{\{x_j\}}+tu^{i,j}_{\{x_i,x_j\}}
        =w_1 u^{i,j}_{\{x_i\}}+w_2 u^{i,j}_{\{x_k\}}+tu^{i,j}_{\{x_i,x_j,x_k\}}
        >0.
    \end{align*} 

    This implies that $f(\mathcal{I}')\neq x_j$. However, there is no feasible choice left for this profile, so the assumption that $f(\mathcal I)=x_k$ must have been wrong.
    
    Conversely, if $f(\mathcal I)=x_i$, we derive a contradiction by considering the profile $\mathcal{I}''$ where $w_1$ voters report $\{x_j\}$, $w_2$ voters report $\{x_k\}$, and $t$ voters report $\{x_j, x_k\}$. In particular, unanimity and robustness imply for this profile that $f(\mathcal{I}'')=x_j$ but $v(\mathcal I'')u^{j,k}<0$, thus yielding the desired contradiction. Since we have a contradiction in both cases, we finally conclude that if $vu^{i,j}\geq0$ for some vector $v\in\mathbb{R}^q_{\geq 0}$, then $vu^{j,k}\geq0$.\medskip

    \textbf{Step 2:} Next, we will prove the lemma. To this end, fix an alternative $x_i\in A$, let $S_i=\{v\in \mathbb{R}^q_{\geq 0}\colon \forall x_j\in A\setminus \{x_i\}\colon vu^{i,j}\geq 0\}$ and $T_i=\{v\in\mathbb{R}^q_{\geq 0}\colon vu^{i-1,i}\leq 0 \land vu^{i, i+1}\geq 0\}$. Note that, for $T_i$, we define the vectors $u^{0,1}$ (if $i=1$) and $u^{m,m+1}$ (if $i=m$) by $u^{0,1}_X=u^{m,m+1}_X=0$ for all $X\in\Lambda$. First, by \Cref{lem:hyperplane}, it holds that $\bar Q_i=S_i$, so it suffices to show that $S_i=T_i$. To this end, we first note that we can suppose that $u^{i,i-1}=-u^{i-1,i}$ because Claim (3) of \Cref{lem:hyperplane} allows to replace the vector $u^{i, i-1}$ with any non-zero vector $u\in\mathbb{R}^q$ such that $vu\geq 0$ if $v \in \bar Q_i$ and $vu\leq 0$ if $v \in \bar Q_{i-1}$. Since $-u^{i-1, i}$ satisfies this condition, we derive that $T_i=\{v\in\mathbb{R}^q_{\geq 0}\colon vu^{i,i-1}\geq 0\land vu^{i,i+1}\geq 0\}$. By this insight, it is clear that $S_i\subseteq T_i$ because we only remove constraints to infer $T_i$ from $S_i$. 

    Now, assume for contradiction that there is a point $v\in T_i\setminus S_i$. Since $v\in T_i$, we have that $vu^{i,i-1}\geq 0$ and $vu^{i, i+1}\geq 0$. On the other hand, because $v\not\in S_i$, there is an index $k\not\in \{i-1,i,i+1\}$ such that $vu^{i,k}<0$. Next, let $v'$ denote the vector such that $v'_{\{x_i\}}=1$ and $v'_X=0$ for all $X\in\Lambda\setminus \{\{x_i\}\}$. Moreover, we define $v^*=v+\epsilon v'$, where $\epsilon>0$ is so small that $v^*u^{i, k}<0$ still holds. By Claim (1) of \Cref{lem:hyperplaneAnalysis}, we have that $v^*u^{i, i-1}>0$ and $v^*u^{i, i+1}>0$. By Step 1, we derive from $v^*u^{i, i+1}> 0$ that $v^*u^{i+1, i+2}\geq 0$, too. Further, if $v^*u^{i+1, i+2}=0$, we could marginally increase the value of $v^*_{\{x_{i+2}\}}$ to construct a vector $\bar v$ with $\bar v u^{i, i+1}>0$ and $\bar v u^{i+1, i+2}<0$, which contradicts Step 1. Hence, we derive from $v^*u^{i, i+1}> 0$ also that $v^*u^{i+1, i+2}> 0$. Now, by repeatedly applying this reasoning, we conclude that $v^*u^{j,j+1}>0$ for all $j\in \{i+1,\dots, m-1\}$, which means that $v^*\not\in \bar Q_{j+1}$ for all such $j\in \{i, \dots, m-1\}$. Next, we observe that $v^*u^{i,i-1}>0$ means that $v^*u^{i-1,i}<0$. By the contraposition of Step 1, we infer that if $v^*u^{i-1, i}<0$, then $v^*u^{i-2, i-1}<0$. By repeating this argument, it follows that 
    for all $j\in \{2,\dots, i\}$ that $v^*u^{j-1,j}<0$, so $v^*\not\in \bar Q_{j-1}$. Finally, since $v^*u^{i,k}<0$, we also have that $v^*\not\in \bar Q_i$. However, this means that $v^*\not\in \bar Q_j$ for all $j\in \{1,\dots, m\}$. This contradicts that $\bigcup_{x_j\in A} \bar Q_j=\mathbb{R}^q_{\geq 0}$ (which is implied by the basic insight that $\bigcup_{x_j\in A} Q_j=\mathbb{Q}^q_{\geq 0}\setminus \{0\}$). Hence, we have a contradiction, so there is no point $v\in T_i\setminus S_i$. This shows that $T_i\subseteq S_i$, which completes the proof of the lemma. 
\end{proof}

Based on our observations so far, we will now define a weight vector $\alpha$ such that its induced collective position function $\Pi_\alpha$ satisfies that $\Pi_\alpha(v,x_i)\geq \theta_i \sum_{X\in\Lambda} v_X$ if and only if $vu^{i,i+1}\geq 0$. We extend here the definition of collective position functions from interval profiles to vectors $v\in \mathbb{R}^q_{\geq 0}$ by letting $\Pi_\alpha(v,x_i)=\sum_{X\in\Lambda} v_X \pi_\alpha(X,x_i)$. Recall for the subsequent lemma that $u^{i,i+1}_{\{x_i\}}=1-\theta_i$ and $u^{i,i+1}_{\{x_{i+1}\}}=-\theta_i$ for all $i\in \{1,\dots, m-1\}$.

\begin{lemma}\label{lem:Scoring}
    There is a weight vector $\alpha\in[0,1]^m$ and a collective position function $\Pi_\alpha$ such that $\Pi_\alpha(v, x_i)=vu^{i, i+1} + \theta_i \sum_{X\in \Lambda} v_X$ for all $v\in \mathbb{R}^q_{\geq 0}$ and $x_i\in A\setminus \{x_m\}$.
\end{lemma}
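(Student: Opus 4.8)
The proof will be an explicit construction: I want to define $\alpha$ from the entries of the vectors $u^{i,i+1}$ and then verify the identity $\Pi_\alpha(v,x_i)=vu^{i,i+1}+\theta_i\sum_{X\in\Lambda} v_X$ entry-by-entry over the intervals $X\in\Lambda$. Following the hint in the text just before the lemma, I set $\alpha_i=u^{i,i+1}_{\{x_i,x_{i+1}\}}+\theta_i$ for all $i\in\{1,\dots,m-1\}$ and $\alpha_m=\alpha_{m-1}$. The first thing to check is that this is a legitimate weight vector, i.e. $\alpha\in[0,1]^m$: by Claim (2) of \Cref{lem:hyperplaneAnalysis} we have $u^{i,i+1}_{\{x_{i+1}\}}\le u^{i,i+1}_{\{x_i,x_{i+1}\}}\le u^{i,i+1}_{\{x_i\}}$, and since (by the normalization fixed before \Cref{lem:hyperplaneOrdering}) $u^{i,i+1}_{\{x_i\}}=1-\theta_i$ and $u^{i,i+1}_{\{x_{i+1}\}}=-\theta_i$, adding $\theta_i$ gives $0\le\alpha_i\le 1$.

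**The main computation.** The identity $\Pi_\alpha(v,x_i)=vu^{i,i+1}+\theta_i\sum_X v_X$ is linear in $v$, so it suffices to verify it on each standard basis vector, i.e. to show for every interval $X\in\Lambda$ that $\pi_\alpha(X,x_i)=u^{i,i+1}_X+\theta_i$. I split this into the three cases matching the definition of $\pi_\alpha$ and the case split of Claim (3) of \Cref{lem:hyperplaneAnalysis}. If $X=[x_\ell,x_r]$ with $x_r\ssucceq x_i$ (all of $X$ weakly left of $x_i$, so $\pi_\alpha(X,x_i)=1$... wait, here I must be careful: "weakly left" in this paper's convention means $x_r\ssucceq x_i$ gives $\pi_\alpha=1$; let me restate), when $x_j\ssucceq x_k$ in their notation corresponds to the interval lying weakly left. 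Concretely: if $X\subseteq[x_1,x_i]$ then $\pi_\alpha(X,x_i)=1$ and Claim (3)(i) of \Cref{lem:hyperplaneAnalysis} gives $u^{i,i+1}_X=u^{i,i+1}_{\{x_i\}}=1-\theta_i$, so $u^{i,i+1}_X+\theta_i=1=\pi_\alpha(X,x_i)$. If $X\subseteq[x_{i+1},x_m]$ then $\pi_\alpha(X,x_i)=0$ and Claim (3)(ii) gives $u^{i,i+1}_X=u^{i,i+1}_{\{x_{i+1}\}}=-\theta_i$, so $u^{i,i+1}_X+\theta_i=0=\pi_\alpha(X,x_i)$. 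If $\{x_i,x_{i+1}\}\subseteq X$ then $\pi_\alpha(X,x_i)=\alpha_i$ and Claim (3)(iii) gives $u^{i,i+1}_X=u^{i,i+1}_{\{x_i,x_{i+1}\}}=\alpha_i-\theta_i$, so $u^{i,i+1}_X+\theta_i=\alpha_i=\pi_\alpha(X,x_i)$. These three cases are exhaustive: an interval either lies entirely left of $x_i$, entirely right of $x_i$, or contains both $x_i$ and $x_{i+1}$ (it cannot contain $x_i$ but not $x_{i+1}$, nor $x_{i+1}$ but not $x_i$ — wait, $[x_{i+1},x_m]$ contains $x_{i+1}$ but not $x_i$; that is the second case; the remaining possibility that $X$ contains $x_i$ as its right endpoint is covered by the first case since then $X\subseteq[x_1,x_i]$). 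Summing over $X$ with multiplicities $v_X$ then yields the claimed identity.

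**The obstacle.** The computation itself is routine once the case analysis is set up correctly; the delicate point is purely bookkeeping — making sure the three cases of Claim (3) of \Cref{lem:hyperplaneAnalysis} partition $\Lambda$ for each fixed $i$, which reduces to checking that an interval containing $x_i$ must also contain $x_{i+1}$ unless it is $\subseteq[x_1,x_i]$, and symmetrically. I would state this partition explicitly as a one-line observation before invoking Claim (3). I also need to double-check orientation conventions in $\pi_\alpha$: the paper defines $\pi_\alpha([x_i,x_j],x_k)=1$ when $x_j\ssucceq x_k$, i.e. when the right endpoint is weakly left of $x_k$ (recall $x_1\ssucc\cdots\ssucc x_m$, so "$\ssucceq$" means "weakly to the left of"), which matches using $u^{i,i+1}_{\{x_i\}}=1-\theta_i$ for the leftmost ballot. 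No genuine difficulty is expected beyond keeping these conventions straight; I would also remark that $\alpha$ is independent of $v$, so it is a single global weight vector as required, completing the lemma.
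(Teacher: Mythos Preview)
Your proposal is correct and follows essentially the same approach as the paper: define $\alpha_i=u^{i,i+1}_{\{x_i,x_{i+1}\}}+\theta_i$, use Claim~(2) of \Cref{lem:hyperplaneAnalysis} to verify $\alpha\in[0,1]^m$, partition $\Lambda$ into the three sets (left of $x_i$, right of $x_{i+1}$, containing both), and apply Claim~(3) together with the normalization $u^{i,i+1}_{\{x_i\}}=1-\theta_i$, $u^{i,i+1}_{\{x_{i+1}\}}=-\theta_i$ to match $u^{i,i+1}_X+\theta_i$ with $\pi_\alpha(X,x_i)$ in each case. The only cosmetic differences are that the paper sets $\alpha_m=1$ rather than $\alpha_m=\alpha_{m-1}$ (both irrelevant, as you note) and writes out the sum directly rather than arguing per basis vector.
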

\begin{proof}
We define the weight vector $\alpha=(\alpha_1,\dots, \alpha_m)$ by $\alpha_i=u^{i, i+1}_{\{x_i, x_{i+1}\}}+\theta_i$ for all $i\in \{1,\dots, m-1\}$ and $\alpha_m=1$. We moreover note that the value $\alpha_m$ does not matter as $\pi_\alpha(X,x_m)=1$ for all intervals $X\in\Lambda$. First, since $-\theta_i\leq u^{i, i+1}_{\{x_i, x_{i+1}\}}\leq 1-\theta$ by Claim (2) of \Cref{lem:hyperplaneAnalysis}, it holds that $\alpha_i\in [0,1]$. Next, let $v\in\mathbb{R}^q_{\geq 0}$ denote an arbitrary vector and fix an alternative $x_i\neq x_m$. We partition the set of intervals $\Lambda$ with respect to $x_i$: the set $L=\{X\in\Lambda\colon X\subseteq [x_1,x_i]\}$ contains all intervals that are (weakly) left of $x_i$, $M=\{X\in\Lambda\colon \{x_i, x_{i+1}\}\subseteq X\}$ is the set of intervals containing both $x_{i}$ and $x_{i+1}$, and $R=\{X\in\Lambda\colon X\subseteq [x_{i+1},x_m]\}$ are the intervals that are (weakly) right of $x_{i+1}$. By Claim (3) in \Cref{lem:hyperplaneAnalysis}, we have that $u^{i, {i+1}}_X=u^{i,i+1}_{\{x_i\}}=1-\theta_i$ for all $X\in L$, $u^{i, {i+1}}_X=u^{i,i+1}_{\{x_i, x_{i+1}\}}=\alpha_i-\theta_i$ for all $X\in M$, and $u^{i, {i+1}}_X=u^{i,i+1}_{\{x_{i+1}\}}=-\theta_i$ for all $X\in R$. Moreover, for the individual position function induced by $\alpha$, it holds that $\pi_\alpha(X, x_i)=1$ if $X\in L$, $\pi_\alpha(X, x_i)=\alpha_i$ if $X\in M$, and $\pi_\alpha(X, x_i)=0$ if $X\in R$. We thus compute that
\begin{align*}
    \Pi_\alpha(v,x_i)&=\sum_{X\in L} v_X + \alpha_i \sum_{X\in M} v_X\\
    &=(1-\theta_i)\sum_{X\in L} v_X + u^{i,i+1}_{\{x_i, x_{i+1}\}} \sum_{x\in M}v^X -\theta_i \sum_{X\in R} v_X +\theta_i\sum_{X\in \Lambda} v_X\\
    &=vu^{i,i+1}_{\{x_i, x_{i+1}\}} + \theta_i \sum_{X\in \Lambda} v_X.
\end{align*}

This completes the proof of this lemma. 
\end{proof}

We are finally ready to prove \Cref{thm:characterization}.

\robustCharacterization*
\begin{proof}
We have shown the direction from left to right in \Cref{lem:allaxioms}, so we focus here on the converse. 
    Thus, let $f$ denote a voting rule on $\dom$ that satisfies our five axioms. Now, by \Cref{lem:MoulinD1,lem:generalizedmedian}, there is a threshold vector $\theta\in(0,1)^m$ such that $\theta_1\geq\theta_2\geq\dots\geq\theta_m$ and $f(\mathcal I)=\max_{\ssucc}\{x_i\in A\colon \Pi_\mathit{SP}(\mathcal I, x_i)\geq \theta_i n_{\mathcal I}\}$ for all profiles $\mathcal{I}\in\mathcal{D}_1^*$. 
    Next, we note that we can represent interval profiles $\mathcal I$ as vectors $v\in\mathbb{N}_0^q\setminus \{0\}$, where the entry $v_i$ states how often the $i$-th interval is submitted. 
    Moreover, there is a (unique) function $g:\mathbb{N}_0^q\setminus \{0\}\rightarrow A$ such that $f(\mathcal I)=g(v(\mathcal I))$ for all $\mathcal{I}\in\dom$. In \Cref{lem:representation}, we extend this function to the domain $\mathbb{Q}^q_{\geq 0}\setminus \{0\}$, i.e., we show that there is a function $\hat g:\mathbb{Q}^q_{\geq 0}\setminus \{0\}\rightarrow A$ that is reinforcing and satisfies that $f(\mathcal I)=\hat g(v(\mathcal I))$ for all $\mathcal{I}\in\dom$. 
    Based on this function, we define the sets $Q_i=\{v\in\mathbb{Q}^q_{\geq 0}\setminus \{0\}\colon \hat g(v)=x\}$ and let $\bar Q_i$ denote the closure of $Q_i$ with respect to $\mathbb{R}^q$. In a sequence of lemmas (\Cref{lem:hyperplane,lem:hyperplaneAnalysis,lem:hyperplaneOrdering}), we derive that there are non-zero vectors $u^{1,2}, u^{2,3}, \dots, u^{m-1,m}\in\mathbb{R}^q$ such that $\bar Q_i=\{v\in\mathbb{R}^q_{\geq 0}\colon vu^{i-1,i}\leq 0\land vu^{i, i+1}\geq 0\}$ for all $i\in \{1,\dots, m\}$ (where $u^{0,1}=u^{m,m+1}=0$ for notational simplicity). Finally, we show in \Cref{lem:Scoring} that there is a weight vector $\alpha$ such that the corresponding collective position function $\Pi_\alpha$ satisfies for all $i\in \{1,\dots, m-1\}$ and $v\in\mathbb{R}^q_{\geq 0}$ that $\Pi_\alpha(v,x_i)=vu^{i, {i+1}}+\theta_i\sum_{X\in \Lambda} v_X$. We derive from this that $\bar Q_i=\{v\in\mathbb{R}^q_{\geq 0}\colon \Pi_\alpha(v,x_{i-1})\leq \theta_{i-1}\cdot \sum_{X\in\Lambda} v_X\land \Pi_\alpha(v,x_i)\geq \theta_i\cdot \sum_{X\in\Lambda} v_X\}$ for all $i\in \{1,\dots, m\}$ (where we define $\theta_0=\Pi_\alpha(\mathcal I, x_0)=0$ for notational simplicity). Since $\Pi_\alpha(\mathcal I, x_i)=\Pi_\alpha(v(\mathcal I), x_i)$ and $\sum_{X\in \Lambda} v(\mathcal I)_X=n_{\mathcal I}$, we derive the following equation:
    \begin{align*}
        f(\mathcal I)&=\hat g(v(\mathcal I))\\
        &\in \{x_i\in A\colon v(\mathcal I)\in Q_i\}\\
        &\subseteq \{x_i\in A\colon v(\mathcal I)\in \bar Q_i\}\\
        &=\{x_i\in A\colon \Pi_\alpha(\mathcal I,x_{i-1})\leq \theta_{i-1} n_{\mathcal I} \land \Pi_\alpha(\mathcal I,x_i)\geq \theta_i n_{\mathcal I}\}.
    \end{align*}

    Now, we define $O(\mathcal I)=\{x_i\in A\colon \Pi_\alpha(\mathcal I,x_{i-1})\leq \theta_{i-1} n_{\mathcal I} \land \Pi_\alpha(\mathcal I,x_i)\geq \theta_i n_{\mathcal I}\}$ as the set of possible winners of $f$ at the profile $\mathcal I$ and we note that $\max_{\ssucc} O(\mathcal I)=\max_{\ssucc} \{x_i\in A\colon \Pi_{\alpha}(\mathcal I, x_i)\geq \theta_i n_{\mathcal I}\}$. To see this, let $x_j=\max_{\ssucc} O(\mathcal I)$. By definition, we have that $\Pi_\alpha(\mathcal I,x_j)\geq \theta_j n_{\mathcal I}$, so $x_j\in \{x_i\in A\colon \Pi_{\alpha}(\mathcal I, x_i)\geq \theta_i n_{\mathcal I}\}$. This proves that $\max_{\ssucc} \{x_i\in A\colon \Pi_{\alpha}(\mathcal I, x_i)\geq \theta_i n_{\mathcal I}\}\ssucceq \max_{\ssucc} O(\mathcal I)$. Next, let $x_j=\max_{\ssucc} \{x_i\in A\colon \Pi_{\alpha}(\mathcal I, x_i)\geq \theta_i n_{\mathcal I}\}$. If $x_j=x_1$, then $x_j\in O(\mathcal I)$ as the condition on $\theta_0$ is trivial. Otherwise, it holds $x_{j-1}\not \in \{x_i\in A\colon \Pi_{\alpha}(\mathcal I, x_i)\geq \theta_i n_{\mathcal I}\}$, so $\Pi_{\alpha}(\mathcal I, x_{j-1})< \theta_{j-1} n_{\mathcal I}$. This proves that $x_j\in O(\mathcal I)$ and we thus conclude that $\max_{\ssucc} O(\mathcal I)\ssucceq \max_{\ssucc} \{x_i\in A\colon \Pi_{\alpha}(\mathcal I, x_i)\geq \theta_i n_{\mathcal I}\}$. Combining these two observations gives the desired equality. 

    Based on our last insight, we will next show that $f(\mathcal I)=\max_{\ssucc} O(\mathcal I)$ for all profiles $\mathcal I\in \dom$. To this end, we assume for contradiction that there is a profile $\mathcal I$ such that $f(\mathcal I)=x_j\neq x_i=\max_{\ssucc} O(\mathcal I)$. 
    Because $f(\mathcal I)\in O(\mathcal I)$, this means that $x_i\ssucc x_j$. Next, we partition the voters in $N_\mathcal{I}$ into three sets: $L=\{k\in N_{\mathcal I}\colon I_k\subseteq [x_1, x_{j-1}]\}$ contains all voters whose interval is fully left of $x_j$, $M=\{k\in N_{\mathcal I}\colon \{x_i, x_j\}\subseteq I_k\}$ contains all voters who report both $x_i$ and $x_j$, and $R=N_{\mathcal I}\setminus (L\cup M)$ contains all voters whose interval does not contain $x_i$ but an alternative that is weakly right of $x_j$. Now, consider the profile $\mathcal I^1$ where all voters in $L$ report $\{x_i\}$, all voters in $M$ report $[x_i, x_j]$, and all voters in $R$ report $\{x_j\}$. Repeatedly applying robustness shows that $f(\mathcal I^1)=x_j$ because we can transform $\mathcal I$ to $\mathcal I^1$ without removing $x_j$ from the interval of any voter. 
    
    Next, we assume that $j\geq i+2$; otherwise, we can skip the following step. In this case, we consider the profile $\mathcal I^2$ where all voters in $L$ report $\{x_i\}$, all voters in $M$ report $[x_i, x_{j-1}]$, and all voters in $R$ report $\{x_{j-1}\}$. Using robustness from $\mathcal I^1$, we infer that $f(\mathcal I^2)\in \{x_{j-1}, x_j\}$. Moreover, if $f(\mathcal I^2)=x_j$, our voters can deviate to, e.g., unanimously report $\{x_i\}$. Since none of these modifications touches on $x_j$, this alternative has to remain the winner by robustness, but unanimity postulates that $x_i$ is now chosen. This contradiction proves that $f(\mathcal I^2)=x_{j-1}$. Furthermore, by repeating this argument, we derive a profile $\mathcal I^*$ such that all voters in $L$ report $\{x_i\}$, all voters in $M$ report $\{x_i, x_{i+1}\}$, all voters in $R$ report $\{x_{i+1}\}$, and $f(\mathcal I^*)=x_{i+1}$. 

    Next, we compute that $\Pi_\alpha(\mathcal I^*, x_i)=|L|+\alpha_i |M|\geq \Pi_\alpha(\mathcal I, x_i)\geq \theta_i n_{\mathcal I^*}$ because $\pi_\alpha(I_k, x_i)\leq 1$ for all $k\in L$, $\pi_\alpha(I_k, x_i)\leq \alpha_i$ for all $k\in M$ (as $x_j\in I_k$), and $\pi_\alpha(I_k, x_i)=0$ for all $k\in R$ (as all these voters report intervals fully right of $x_i$). On the other hand, since $f(\mathcal{\bar I})\in O(\mathcal{\bar I})$ for all interval profiles $\mathcal{\bar I}$ and $f(\mathcal I^*)=x_{i+1}$, we conclude that $\Pi_\alpha(\mathcal I^*, x_{i})\leq \theta_i n_{\mathcal I^*}$. This proves that $\Pi_\alpha(\mathcal I^*, x_i)=\theta_i n_{\mathcal I^*}$. Now, let $\mathcal I'$ denote the profile where a single voter report $\{x_i\}$. By unanimity, we have that $f(\mathcal I')=x_i$. In turn, right-biased continuity implies that there must be a $\lambda\in\mathbb{N}$ such that $f(\lambda \mathcal I^* + I')=x_{i+1}$. However, it holds for every $\lambda\in\mathbb{N}$ that $\Pi_{\alpha}(\lambda \mathcal I^*+ I', x_i)=\lambda \theta_i n_{\mathcal I^*}+1>\theta_i (\lambda n_{\mathcal I^*} +1)=\theta_i n_{\lambda \mathcal I^*+ I'}$. This shows that $x_{i+1}\not\in O(\lambda \mathcal I^*+ I')$ because the membership of $x_{i+1}$ in this set requires that $\Pi_{\alpha}(\lambda \mathcal I^*+ I', x_i)\leq \theta_i n_{\lambda \mathcal I^*+ I'}$. This is the desired contradiction, so we conclude that $f(\mathcal I)=\max_{\ssucc} O(\mathcal I)=\max_{\ssucc} \{x_i\in A\colon \Pi_\alpha(\mathcal I, x_i)\geq \theta_i n_{\mathcal I}\}$ for all profiles $\mathcal I\in\dom$. Hence, $f$ is the voting rule induced by the weight vector $\alpha$ and the threshold vector $\theta$. Finally, since $f$ is robust, \Cref{lem:robust} shows that these vectors must be compatible. This proves that $f$ is indeed the position-threshold rule defined by $\alpha$ and $\theta$. 
\end{proof}

\end{document}